\documentclass[11pt]{article}
\usepackage{booktabs} 
\usepackage[ruled,linesnumbered]{algorithm2e} 

\SetAlFnt{\small}
\SetAlCapFnt{\small}
\SetAlCapNameFnt{\small}
\SetAlCapHSkip{0pt}
\IncMargin{-\parindent}



\usepackage{fullpage}
\usepackage{amsmath, amsthm, amssymb}
\usepackage{verbatim}
\usepackage[sort]{natbib}
\usepackage{xcolor}
\usepackage{graphicx}
\usepackage{hyperref}

\newtheorem{lemma}{Lemma}
\newtheorem{theorem}{Theorem}
\newtheorem{corollary}{Corollary}

\theoremstyle{definition}

\newcommand{\bE}{\operatorname*{\mathbb{E}}}
\newcommand{\bR}{\mathbb{R}}

\newcommand{\eps}{\varepsilon}
\newcommand{\cA}{\mathcal{A}}
\newcommand{\cS}{\mathcal{S}}
\newcommand{\cH}{\mathcal{H}}
\newcommand{\cD}{\mathcal{D}}
\newcommand{\trans}{P}
\newcommand{\si}{{s_\mathrm{init}}}
\newcommand{\st}{{s_\mathrm{term}}}
\newcommand{\pf}{f}
\newcommand{\fr}{F}
\newcommand{\argmax}{\operatorname*{argmax}}

\newcommand{\ap}{\mathsf{ap}}
\newcommand{\pa}{\mathsf{pa}}
\newcommand{\pp}{\mathsf{pp}}
\newcommand{\eval}{\mathtt{eval}}
\newcommand{\opt}{\mathrm{opt}}
\newcommand{\error}{-1}
\newcommand{\obj}{\mathrm{obj}}

\renewcommand{\max}{\operatorname{max}}

\author{
    Hanrui Zhang \\
    Chinese University of Hong Kong \\
    \texttt{hanrui@cse.cuhk.edu.hk}
    \and
    Yu Cheng \\
    Brown University \\
    \texttt{yu\_cheng@brown.edu}
    \and
    Vincent Conitzer \\
    Carnegie Mellon University \\
    \texttt{conitzer@cs.cmu.edu}
}
\date{}

\title{Efficiently Solving Turn-Taking Stochastic Games with Extensive-Form Correlation}

\begin{document}

\maketitle

\begin{abstract}
We study equilibrium computation with extensive-form correlation in two-player turn-taking stochastic games.
Our main results are two-fold: (1) We give an algorithm for computing a Stackelberg extensive-form correlated equilibrium (SEFCE), which runs in time polynomial in the size of the game, as well as the number of bits required to encode each input number.
(2) We give an efficient algorithm for approximately computing an optimal extensive-form correlated equilibrium (EFCE) up to machine precision, i.e., the algorithm achieves approximation error $\eps$ in time polynomial in the size of the game, as well as $\log(1 / \eps)$.

Our algorithm for SEFCE is the first polynomial-time algorithm for equilibrium computation with commitment in such a general class of stochastic games.
Existing algorithms for SEFCE typically make stronger assumptions such as no chance moves, and are designed for extensive-form games in the less succinct tree form.
Our algorithm for approximately optimal EFCE is, to our knowledge, the first algorithm that achieves 3 desiderata simultaneously: approximate optimality, polylogarithmic dependency on the approximation error, and compatibility with stochastic games in the more succinct graph form.
Existing algorithms achieve at most 2 of these desiderata, often also relying on additional technical assumptions.
\end{abstract}

\section{Introduction}

Equilibrium computation is one of the most important topics in algorithmic game theory.
Decades of effort has painted a fairly complete landscape for the computational complexity of various equilibrium concepts in normal-form 
games: Roughly speaking, computing an (optimal) equilibrium is computationally tractable, if either correlation is allowed between both players' actions, or one of the two players has the power to commit to a strategy.
In other words, in normal-form games, there are polynomial-time algorithms for computing an optimal (i.e., maximizing a convex combination of the two players' utilities) correlated equilibrium, and for computing a Stackelberg equilibrium (see, e.g., \citep{papadimitriou2007complexity}).

The situation is subtler for games in dynamic environments, where the two players iteratively take actions, each affecting the state of the world, and together determining the overall payoff of each player.
Such games are conventionally modeled as stochastic games or extensive-form games (we will discuss the differences between the two formulations momentarily).
In these games, neither correlation nor commitment power guarantees tractability.
In fact, it is $\mathsf{NP}$-hard to compute a Stackelberg equilibrium in these games, even if normal-form correlation --- meaning a mediator recommends a whole strategy, consisting of an action to be played in each possible information set of the game, to each player at the beginning of the play --- is allowed \citep{letchford2010computing}.
In light of such hardness results, it has been argued that the right notion of correlation in extensive-form games is extensive-form correlation, where the mediator reveals a recommended action to be played in an information set only when a player has reached that information set.

The notion of extensive-form correlation leads to a number of natural solution concepts, which generalize correlated equilibria in normal-form games and are computationally more tractable.
Among them, the most important and well-studied ones are extensive-form correlated equilibria (henceforth EFCE) and Stackelberg extensive-form correlated equilibria (henceforth SEFCE) \citep{von2008extensive}.
While significant effort has been made on designing efficient algorithms for computing (optimal) EFCE and SEFCE, most existing algorithms are designed for extensive-form games in the tree form (for exceptions, see Section~\ref{sec:related_work}): The input to such an algorithm is by default a tree capturing all possibilities in a game, where each leaf corresponds to a possible way for the game to play out, and the time complexity of the algorithm is polynomial in the size of this game tree.
Such a game tree is often not the most succinct representation of a game.
For example, consider the following adapted version of the game Nim \citep{bouton1901nim}: Initially there are $k$ matches on the table.
Alice and Bob take turns removing matches, where in each turn, the acting player can remove either $1$ or $2$ matches.
The player who removes the last match wins.
The natural state space of the game is quite succinct: The state is fully determined by the number of matches left and the identity of the player to act next, so the state space is of size $O(k)$.
However, the tree form of the game must encode the entire history through which a state is reached (e.g., ``Alice removes 1 match; Bob removes 2 matches; Alice removes 2 matches; Bob removes 1 match; \dots''), which means the game tree has $2^{\Omega(k)}$ nodes.
In such cases, an algorithm that runs in time polynomial in the size of the game tree would not appear particularly efficient.
In this paper, we address the above issue by designing efficient algorithms for optimal EFCE and SEFCE that work with stochastic games, which are by default represented in the graph form that succinctly encodes a game.

\subsection{Our Results}

Throughout the paper, we focus on two-player, finite-horizon, turn-taking stochastic games.
Put in different words, we focus on two-player, perfect-information extensive-form games in the graph form.
Our main results are twofold:
\begin{itemize}
    \item We give an algorithm for computing an SEFCE, which runs in time polynomial in the size of the game, as well as the number of bits required to encode each input number.
    \item We give an efficient algorithm for computing an approximately optimal approximate EFCE up to machine precision, i.e., the algorithm achieves approximation error $\eps$ in time polynomial in the size of the game, as well as $\log(1 / \eps)$.
\end{itemize}

Our algorithm for SEFCE is, to our knowledge, the first polynomial-time algorithm for equilibrium computation with commitment in such a general class of stochastic games (the main assumption being that the game is turn-taking).
As discussed in Section~\ref{sec:related_work}, existing algorithms for SEFCE typically make stronger assumptions such as no chance moves, and are designed for extensive-form games in the less succinct tree form.
Our algorithm for approximately optimal EFCE is, to our knowledge, the first algorithm that achieves 3 desiderata simultaneously: approximate optimality, polylogarithmic dependency on the approximation error, and compatibility with stochastic games in the graph form.
As discussed in Section~\ref{sec:related_work}, existing algorithms typically achieve at most 2 of these desiderata, often also relying on additional technical assumptions.

Technically, our algorithms are built on ideas fundamentally different from the most commonly seen techniques in equilibrium computation, i.e., linear programming and no-regret dynamics.
We take a semi-combinatorial approach centered around the notion of Pareto frontier curves.
Roughly speaking, the Pareto frontier curve for a state-action pair captures the optimal tradeoff between the two players' onward utilities subject to equilibirum conditions, in the subgame induced by the state-action pair.
These curves can be viewed as a multidimensional generalization of the $Q$-function commonly used in planning and reinforcement learning.
Given the right equilibrium conditions, computing an SEFCE or an optimal EFCE reduces to evaluating Pareto frontier curves, which at first sight appears to be a numerical problem in nature.
In order to perform the necessary evaluations efficiently, we establish combinatorial properties of the Pareto frontier curves, including recursive relations between curves for different state-action pairs, as well as lower bounds on the numerical ``resolution'' of the curves (i.e., how close two turning points can be on a curve).
The curves for SEFCE exhibit particularly nice properties, based on which we are able to design an essentially combinatorial procedure for evaluating the Pareto frontier curves recursively.
This involves binary searching over ``directions of evaluation'' up to a carefully chosen precision, as well as a memorization technique that avoids redundant recursive calls.
For EFCE, the Pareto frontier curves are less structured, and in particular, the curves can be very (i.e., doubly exponentially) fine in terms of their numerical resolution.
It thus becomes infeasible to exactly evaluate the curves, and our algorithm instead performs evaluations up to any desired precision in polylogarithmic time.
For a more detailed overview of our algorithms, see Section~\ref{sec:overview}.

\subsection{Further Related Work}
\label{sec:related_work}

Equilibrium computation in normal-form games has been extremely well-studied.
For example, without commitment, \citet{daskalakis2009complexity} and \citet{chen2009settling} show that computing a Nash equilibrium is $\mathsf{PPAD}$-complete in two-player normal-form games, and computing optimal Nash equilibria is generally 
$\mathsf{NP}$-hard~\citep{Gilboa89:Nash,Conitzer03:Nash}.
In contrast, when correlation is allowed, one can compute an optimal correlated equilibrium efficiently (see, e.g., \citep{papadimitriou2007complexity}).
With commitment, \citet{conitzer2006computing} give an efficient algorithm for computing a Stackelberg equilibrium in two-player normal-form games (see also \citet{Stengel10:Leadership}), and that this becomes hard with 3 players; however, 
if the committing player can also send signals to the other players, thereby effectively taking over the role of the mediator in correlated equilibrium, then the problem is again efficiently solvable with any number of players~\citep{Conitzer11:Commitment}.
So in short, efficient equilibrium computation is possible if either correlation or commitment is allowed.

Equilibrium computation becomes more difficult in dynamic environments, such as extensive-form games and stochastic games.
There, commitment does not imply efficient computation anymore: \citet{letchford2010computing} show that it is $\mathsf{NP}$-hard to compute a Stackelberg equilibrium in two-player extensive-form games, even with perfect information.
Moreover, their hardness result holds even if normal-form correlation (as opposed to extensive-form correlation to be discussed momentarily) is allowed.
Similar hardness results hold for various structured families of stochastic games \citep{letchford2012computing}.
To circumvent such hardness results, \citet{von2008extensive} introduce the notion of extensive-form correlation, where conceptually, recommended actions are revealed on the fly.
They give an efficient algorithm for computing an optimal extensive-form correlated equilibrium (EFCE) when the game has no chance moves, and show that with chance moves, the same problem is $\mathsf{NP}$-hard.
Notably, their hardness result assumes imperfect information, which turn-taking stochastic games do not have.
In short, extensive-form correlation is generally necessary for efficient computation in dynamic environments.

Subsequently, there has been a long line of research on the computation of optimal EFCE, as well as its Stackelberg version, Stackelberg EFCE (henceforth SEFCE).
However, as far as we know, all existing positive results, i.e., efficient algorithms, are for extensive-form games in the tree form.
To name a few examples, \citet{cermak2016using} give a polynomial-time algorithm for computing an SEFCE in extensive-form games without chance moves.
Relatedly, \citet{bovsansky2017computation} show, among other results, that it is possible to compute an SEFCE in perfect-information extensive-form games in polynomial time.
\citet{farina2019correlation} give a saddle-point formulation for optimal EFCE and design gradient-based algorithms that scale better in practice.
\citet{farina2020polynomial} give a polynomial-time algorithm for computing optimal EFCE when chance moves are public.
\citet{zhang2022optimal} give fixed-parameter algorithms for computing optimal EFCE, as well as related solution concepts.
\citet{zhang2022polynomial} give a polynomial-time algorithm for a general class of equilibrium computation problems that involve a mediator, which in particular generalize the results by \citet{zhang2022optimal}.
The type of computational problem considered in these results is ``easier'' than ours, in the sense that the graph form that we consider can be much more succinct (and never less succinct) than the tree form of the same game, and in such cases, an algorithm that runs in time that is polynomial in the size of the graph form is much more efficient.
Conversely, an algorithm that runs in time that is polynomial in the size of the tree form is not necessarily polynomial-time in the graph form.

Another line of research studies no-regret dynamics that converge to an arbitrary (i.e., not necessarily optimal) EFCE.
This translates to polynomial-time algorithms in $1 / \eps$ for $\eps$-EFCE.
Compared to these algorithms, our algorithm for approximately optimal EFCE (1) guarantees approximate optimality, (2) runs in polynomial time in $\log(1 / \eps)$ instead of $1 / \eps$, and (3) works with the more succinct graph form of the game, instead of the tree form.
\citet{huang2008computing} show how to compute an arbitrary EFCE exactly in polynomial time.
However, their algorithm does not guarantee optimality, nor is it compatible with the graph form.

Technically, computing optimal or Stackelberg equilibria generalizes the problem of planning in Markov decision processes under constraints.
Particularly related to our results is ``planning with participation constraints'' \citep{zhang2022planning,zhang2022efficient}: Roughly speaking, in that problem, the principal (corresponding to the leader in a Stackelberg game) chooses which action to take in each state, subject to the constraint that the agent (corresponding to the follower) is always willing to participate, i.e., the agent's onward utility in each state is always nonnegative.
This can be viewed as a highly restricted class of turn-taking stochastic games, where the agent's (follower's) only actions in each state are to stay and to quit.
\citet{zhang2022planning,zhang2022efficient} show that even in these restricted environments, an optimal policy (i.e., a Stackelberg equilibrium) may have to be history-dependent, and give a polynomial-time algorithm for planning with participation constraints.
However, their algorithm is tailored to the essentially non-strategic setting where the agent's power is extremely limited.
In contrast, we consider general game-theoretic settings where the two players are generally equally powered, except that in the Stackelberg setting, one player in addition has commitment power.

Pareto frontier curves have also been (implicitly) considered in prior work in equilibrium computation, e.g., \citep{letchford2010computing,bovsansky2017computation}.
In particular, \citet{letchford2010computing} and \citet{bovsansky2017computation} present algorithms that essentially keep track of the entire Pareto frontier curves, by computing and storing all turning points on each curve.
In the settings that we study, doing so would generally require exponential computation --- in fact, the key idea behind our algorithms is to avoid computing and storing all turning points, and only focus on the important ones.

\section{Preliminaries}

\paragraph{Stochastic games.}
We focus on two-player finite-horizon turn-taking stochastic games in this paper.
There is a finite set of states $\cS = [n]$, and a finite set of actions $\cA = [m]$.
$\si = 1$ and $\st = n$ are the initial and terminal states, respectively.
For each state $s \in \cS$, there is an acting player $\ap(s) \in \{1, 2\}$, who unilaterally decides which action to play in state $s$.
For each player $i \in \{1, 2\}$, there is a reward function $r_i: \cS \times \cA \to \bR_+$, which specifies the immediate reward $r_i(s, a)$ that player $i$ receives when action $a$ is played in state $s$.
We assume rewards are normalized, i.e., $r_i(s, a) \in [0, 1]$ for each $i \in \{1, 2\}$, $s \in \cS$ and $a \in \cA$.
A transition operator $\trans: \cS \times \cA \to \Delta(\cS)$ specifies the distribution $\trans(s, a)$ of the next state when action $a$ is played in state $s$, where for each $s'$, $\trans(s, a, s')$ is the probability that the next state is $s'$.

Unless otherwise specified, we assume the transition operator is acyclic, i.e., $\trans(s, a, s') > 0$ only if $s' > s$ or $s = s' = n$.
For the terminal state $\st = n$ in particular, we assume $r_i(n, a) = 0$ and $\trans(n, a, n) = 1$ for each action $a \in \cA$.
In other words, there is no meaningful action in the terminal state $\st = n$.
These assumptions essentially mean the game is finite-horizon.
In particular, note that in the finite-horizon case, the acyclicity assumption is without loss of generality, as the state could include the index of the current period (with a blowup proportional to the time horizon $T$).

\paragraph{Histories, strategies, and utilities.}
Fix a stochastic game $(\cS, \cA, \ap, r_1, r_2, \trans)$.
A history $h$ of length $t$ is a sequence of $t$ states and $t$ actions $h = (s_1, a_1, s_2, a_2, \dots, s_t, a_t)$ which fully describes $t$ consecutive steps of a play.
Let $\cH$ be the collection of all histories of all lengths not exceeding $n$ (so $\cH$ is finite).
For brevity, we also let $|h|$ denote the length of $h$, and $h + (s, a)$ be the history obtained by appending $(s, a)$ to the end of $h$.

A deterministic (history-dependent) strategy $\pi: \cH \times \cS \to \cA$ maps each history-state pair $(h, s)$ to the action to be played in $s$ given history $h$.
Note that we do not explicitly partition a strategy into two parts corresponding to the two players, since such a partition is induced by the mapping $\ap$ from each state to the corresponding acting player.
A randomized strategy $\Pi$ is a distribution over deterministic strategies.
For any deterministic strategy $\pi$, we say a history $h = (s_1, a_1, \dots, s_t, a_t)$ is admissible if the action played in each step is the one specified by $\pi$, i.e., for each $t' \in [t]$, $\pi((s_1, a_1, \dots, s_{t' - 1}, a_{t' - 1}), s_{t'}) = a_{t'}$.
For any randomized strategy $\Pi$, we say a history $h$ is admissible under $\Pi$ if $h$ is admissible under some $\pi$ in the support of $\Pi$.
Let $\cH^\pi$ (resp.\ $\cH^\Pi$) be the set of admissible histories under $\pi$ (resp.\ $\Pi$).
For a randomized strategy $\Pi$ and an admissible history $h \in \cH^\Pi$ under $\Pi$, let $\Pi \mid h$ denote the conditional version of $\Pi$ given that the states reached and actions played in the first $|h|$ steps are $h$. 

For each $i \in \{1, 2\}$, the onward utility $u_i^\pi(h, s)$ of a player $i$, under a deterministic strategy $\pi$, in state $s$, given history $h$, is
\[
    u_i^\pi(h, s) = \bE_{\{s_t\}_t}\left[\sum_{t \in [n]} r_i(s_t, a_t)\right],
\]
where $s_1 = s$, $h_1 = h$, $a_t = \pi(h_t, s_t)$ for each $t \in [n]$, $h_t = h_{t - 1} + (s_{t - 1}, a_{t - 1})$ for each $t \in \{2, \dots, n\}$, and $s_t \sim \trans(s_{t - 1}, a_{t - 1})$ for each $t \in \{2, \dots, n\}$.
Here we only need to consider $n$ steps, since the maximum meaningful length of a play is $n$ --- after $n$ steps, we must have reached the state $n$, and any extra steps would give reward $0$ to both players.
The unconditional onward utility $v_i^\Pi(h, s)$ of player $i$, under a randomized strategy $\Pi$, in state $s$, given history $h$, is
\[
    v_i^\Pi(h, s) = \bE_{\pi \sim \Pi}[u_i^\pi(h, s)].
\]
Note that this unconditional onward utility is not the actual expected onward utility; we need it mostly for notational simplicity.
Conceptually, it is the utility that player $i$ expects to receive in the future if they believe the posterior strategy is $\Pi$.
Given the unconditional onward utility, for each $i \in \{1, 2\}$, the actual onward utility $u_i^\Pi(h, s)$ of player $i$, under strategy $\Pi$, in state $s$, given history $h$, is simply
\[
    u_i^\Pi(h, s) = v_i^{\Pi \mid h}(h, s).
\]

\paragraph{Extensive-form correlated equilibria.}
Extensive-form correlated equilibria (EFCE) and their Stackelberg version (Stackelberg EFCE, or SEFCE) are the natural generalizations of correlated equilibria and Stackelberg equilibria to dynamic settings such as stochastic games and extensive-form games.
In the original definition of EFCE for extensive-form games by \citet{von2008extensive}, a mediator specifies a distribution over deterministic strategies (i.e., a randomized strategy according to our definition above), where each deterministic strategy specifies a recommended action in each node of the game tree (corresponding to a history-state pair in our formulation).
A deterministic strategy is drawn and fixed at the beginning of the play, but the recommended action in each node given by this strategy is revealed to the acting player only when the node is actually reached.
If a player decides to not follow a recommended action, that player will not receive recommended actions in the rest of the play.

For any $\eps \ge 0$, we say a player is $\eps$-best responding under a randomized strategy if that player cannot increase their onward utility by more than $\eps$ by deviating from the recommended action at any point of a recommended path of play, i.e., an admissible history.
A randomized strategy is an $\eps$-EFCE if both players are $\eps$-best responding. 
Moreover, consider a Stackelberg setting where player $1$ is the leader and player $2$ is the follower.
Then, a randomized strategy is an SEFCE if player $1$'s utility is maximized subject to the constraint that player $2$ is $0$-best responding (or simply best responding).

Put in our language, a player $i \in \{1, 2\}$ is $\eps$-best responding under a randomized strategy $\Pi$ iff for any admissible history $h \in \cH^\Pi$, state $s$ where $\ap(s) = i$, action $a$ where $h + (s, a) \in \cH^\Pi$, and deterministic strategy $\pi'$ where $\pi'(h, s) \ne a$:
\[
    v_i^{\Pi \mid (h + (s, a))}(h, s) \ge \eps + \bE_{\pi \sim \Pi \mid (h + (s, a))}\left[u_i^{(i: \pi', 3 - i: \pi)}(h, s)\right].
\]
Here, $(i: \pi', 3 - i: \pi)$ denotes a strategy obtained by combining $\pi'$ restricted to player $i$'s actions and $\pi$ restricted to $(3 - i)$'s actions (note that $3 - i = 2$ when $i = 1$, and vice versa; $3 - i$ simply means the other player than $i$).
That is,
\[
    (i: \pi', 3 - i: \pi)(h, s) = \begin{cases}
        \pi'(h, s), & \text{if } \ap(s) = i \\
        \pi(h, s), & \text{otherwise}.
    \end{cases}
\]
We will use this notation repeatedly in the rest of the paper.
The left hand side is the utility $i$ expects to receive if both players keep following the recommendations, where in particular, $i$'s belief for the strategy is $\Pi \mid (h + (s, a))$ because $i$ has already received the recommended action $a$.
The right hand side is the utility $i$ expects to receive if $i$ unilaterally deviates to $\pi'$ and the other player keeps following the recommendations, which should never be larger than the left hand side.
Recall that a randomized strategy $\Pi$ is an $\eps$-EFCE iff both players are $\eps$-best responding under $\Pi$.
One can check this is in fact equivalent to the definition by \citet{von2008extensive} when $\eps = 0$.
A randomized strategy $\Pi$ is an SEFCE (with player $1$ being the leader) iff
\[
    \Pi \in \argmax_{\text{player } 2 \text{ is best responding under } \Pi} u_1^\Pi(\emptyset, \si).
\]

\section{Stackelberg Extensive-Form Correlated Equilibria}
\label{sec:sefce}

\subsection{Overview of Our Approach}
\label{sec:overview}

\paragraph{Maximum punishment without loss of generality.}
Our algorithm is based on the standard observation that in an SEFCE, it is without loss of generality to maximally punish the follower when they deviate from the prescribed path of play, regardless of how that would affect the leader's utility at that point.
In fact, for any SEFCE, there exists an effectively equivalent SEFCE where deviation always immediately triggers maximum punishment, so once the follower deviates, the game immediately becomes effectively zero-sum.
This is because intuitively, the sole purpose of the leader's equilibrium strategy in parts of the game where the follower has deviated is to threaten the follower and cancel out any potential incentive to deviate.
In particular, such a threat would never be actually executed, because in equilibrium no player would deviate in the first place.
As such, it never hurts to threaten with the worst punishment possible.
This greatly simplifies the problem from a computational perspective, since computing a strategy for maximum punishment is no harder than solving turn-taking zero-sum stochastic games, which can be done by simple backward induction.

\paragraph{Reducing to constrained planning.}
Once the punishment strategy is fixed, we only need to optimize over strategies where the follower never faces worse utility than what they would face after deviating in the optimal way and being maximally punished thereafter.
One key observation here is that in any state, regardless of the recommended action, the optimal way to deviate is always the same.
So, to prevent the follower from deviating, we only need to guarantee that conditioned on the recommended action, the onward utility of the follower is at least the utility resulting from deviating optimally.
In particular, the latter utility depends only on the state (which is only true in turn-taking stochastic games).
Given this observation, the problem becomes a constrained planning problem, where we want to find an optimal strategy subject to the constraint that in each state where the follower is the acting player, the onward utility of the follower (conditioned on the recommended action) is at least some state-dependent quantity that can be efficiently pre-computed.
This is very similar to planning in constrained MDPs, except for one key difference: In constrained MDPs, typically there are a constant number of feasibility constraints over the cumulative reward vector, whereas in our constrained planning problem, there are separate feasibility constraints on the follower's onward utility in each state of the game where the follower is the acting player.

\paragraph{Pareto frontier curves and pivotal points.}
The way we approach the constrained planning problem is by considering the Pareto frontier curves for each state-action pair.
Roughly speaking, the Pareto frontier curve $\pf_{s, a}$ for a state-action pair $(s, a)$ captures the Pareto-optimal way to trade off between the two players' onward utilities after the acting player takes action $a$ in state $s$, subject to feasibility constraints {\em in the future}.
This can be viewed as a generalization of the $Q$-function that is commonly considered in reinforcement learning that captures the tradeoff between the two players' utilities.
For technical reasons, we intentionally disregard the constraint (if there is one) in the (current) state $s$.
With this definition, the problem of constrained planning (or at least, the problem of computing the maximum objective value therein) becomes the problem of evaluating the Pareto frontier curves subject to feasibility constraints.
In particular, as illustrated in Figure~\ref{fig:overview}, given an objective direction (which can be any combination of the two player's utilities), the maximum objective value onward for each state-action pair is simply the farthest point on (the feasible part of) the Pareto frontier curve along that direction.

\begin{figure}[t]
    \centering
    \includegraphics[width=0.48\linewidth]{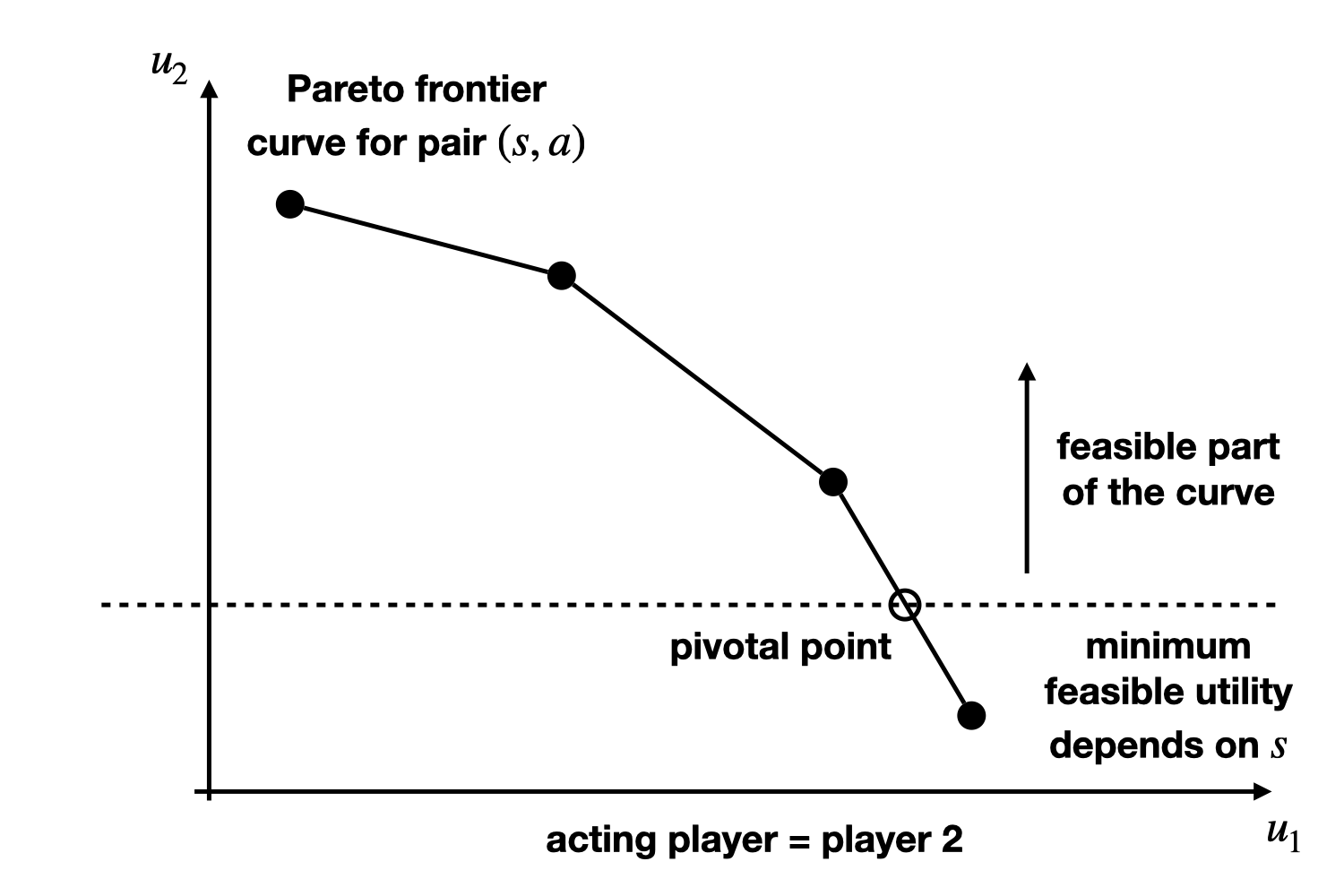}
    \includegraphics[width=0.48\linewidth]{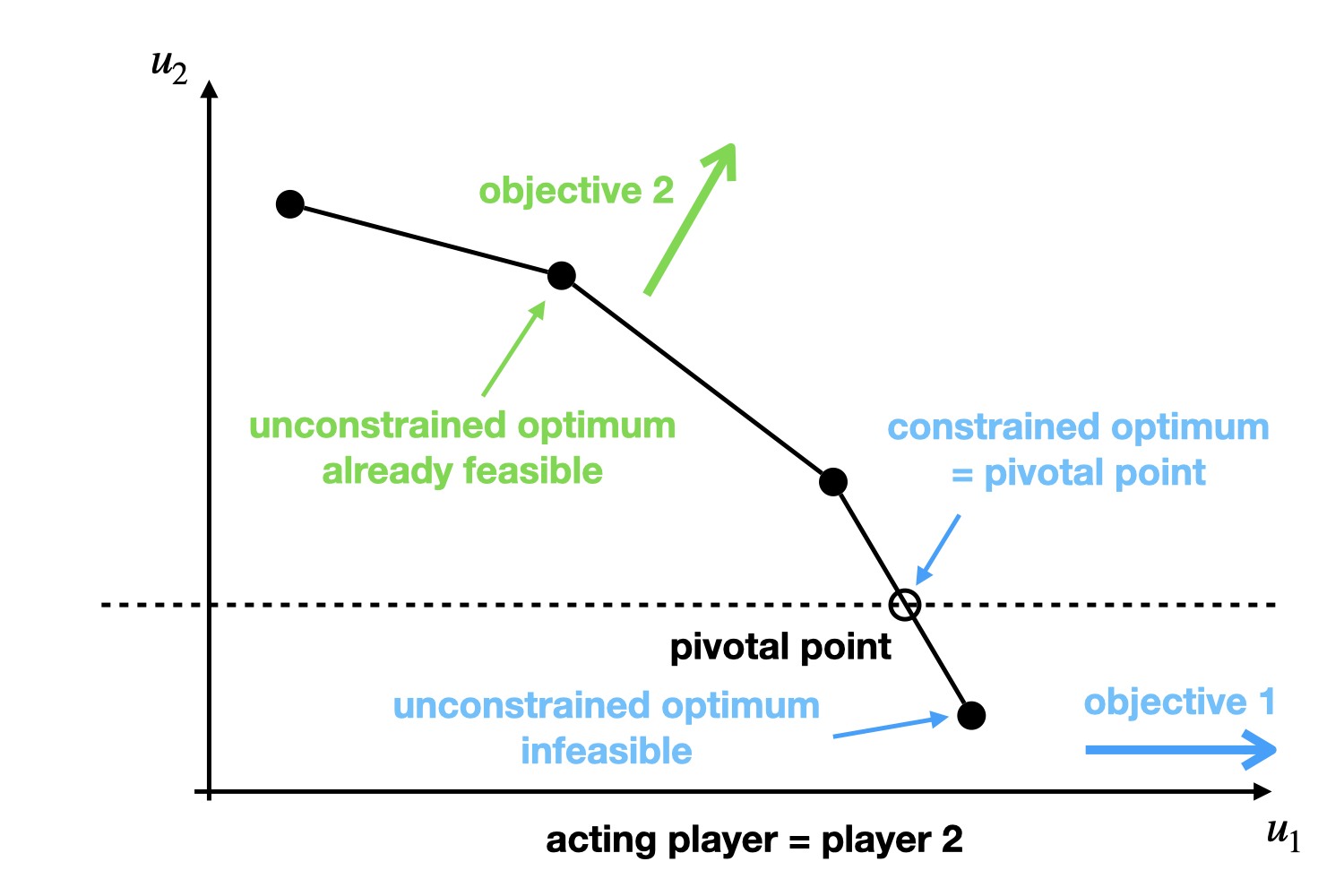}
    \caption{Illustration of Pareto frontier curves.}
    \label{fig:overview}
\end{figure}

Observe that under our definition, Pareto frontier curves are always concave, since from each state-action pair onward, the convex combination of two feasible strategies is also feasible --- this is true because we consider extensive-form correlation, so when we take a convex combination through randomization, players are not allowed to know the realization before reaching the state where the randomization happens.
Given this observation, finding this point involves two conceptual steps: One first finds the unconstrained optimum on the curve.
If that point turns out to be infeasible, then one ``rounds'' that point to the nearest feasible point, which is always the pivotal point (hollow points in Figure~\ref{fig:overview}), i.e., the unique point on the curve where the feasibility constraint is binding.
This highlights the importance of pivotal points, which play a central role in our algorithm.

\paragraph{Evaluating Pareto frontier curves.}
Now the problem becomes efficiently evaluating Pareto frontier curves subject to feasibility constraints.
At a high level, this can be done in a recursive fashion: Suppose we want to evaluate the curve $\pf_{s, a}$ for $(s, a)$ in a certain direction $\alpha \in \bR^2$, subject to the feasibility constraint in state $s$.
Moreover, suppose we can efficiently evaluate the constrained curves for all later state-action pairs.
Then we can perform the evaluation using the following procedure:
\begin{enumerate}
    \item For each later state $s' > s$:
    \begin{enumerate}
        \item For each action $a'$, evaluate the constrained curve $\pf_{s', a'}$ along direction $\alpha$.
        \item Let the farthest point along $\alpha$ found in the above evaluation be the partial result for state $s'$.
    \end{enumerate}
    \item Aggregate the partial results for all later states $s'$ according to the transition probabilities $\trans(s, a, s')$, and shift the aggregated result by the immediate rewards $(r_1(s, a), r_2(s, a))$ induced by $(s, a)$.
    \item If the shifted result above is feasible, return it; otherwise, return the pivotal point on $\pf_{s, a}$.
\end{enumerate}

There is one gap in the above procedure: In step 3, when we ``round'' an infeasible shifted result, it is assumed that we already know the pivotal point on the curve $\pf_{s, a}$ --- in fact, without this rounding step, we would be evaluating the curve $\pf_{s, a}$ without the feasibility constraint in $s$.
In reality we need to compute this pivotal point efficiently.
In what follows we discuss how this can be done up to machine precision.

Again assume we can evaluate the curves for all later state-action pairs.
To approximate the pivotal point, we only need to find two points on $\pf_{s, a}$ that are close enough, to the left and the right of the pivotal point respectively.
Then a particular convex combination of the two points will be a good approximation of the pivotal point.
Suppose we want to find a point to the left of the pivotal point that is close enough, and for concreteness, suppose the acting player is player 2, as in the left subfigure in Figure~\ref{fig:overview}.
There, a point is to the left of the pivotal point if and only if it is feasible.
Conceptually, the pivotal point can be found using a ``moving direction'' procedure: We start with direction $(0, 1)$, and perform an evaluation of $\pf_{s, a}$ in that direction without the feasibility constraint in $s$.
Such an unconstrained evaluation can be done recursively without knowing the pivotal point (using the procedure above without step 3). 
If the evaluation returns a point to the left of the pivotal point (i.e., a feasible point), we rotate the direction of evaluation to the right, and evaluate the unconstrained curve again.
The rotation stops as soon as the point found makes the feasibility constraint binding, which means we have found the pivotal point.
To make this conceptual procedure practical, we replace the rotation with a binary search, which finds a point to the left of the pivotal point that is at most $\eps$ away (in terms of the polar angle) in $O(\log(1 / \eps))$ iterations.
When the constrained planning problem corresponds to the computation of SEFCE, we show that any two turning points on a Pareto frontier curve must be well separated, by some quantity that is at most exponentially small in the size of the game.
So, if we make $\eps$ exponentially small (which means there are polynomially many iterations in the binary search), the binary search is guaranteed to find the closest turning point to the left of the pivotal point.
We can then compute the pivotal point by finding the closest point to the right in the same way, and taking a convex combination of the two points found.

\paragraph{Bounding the number of evaluations.}
The above gives a recursive algorithm for evaluating Pareto frontier curves, but executed in the na\"ive way, the procedure may take exponential time in the size of the game.
We need one final observation to make the algorithm polynomial time in these parameters: For each $(s, a)$ pair, the pivotal point on the curve $\pf_{s, a}$ only needs to be evaluated once.
Given this, the total number of recursive evaluations triggered must be polynomial in the size of the game.
This is because new directions of evaluation emerge only when we binary search for a pivotal point.
Each binary search may create polynomially many new directions, and since we binary search for each pivotal point only once, there are $O(mn)$ binary searches in total, which means the total number of relevant directions is polynomial.
Moreover, each direction can only appear in $O(mn)$ evaluations (one for each state-action pair), since we never need to perform the same evaluation twice.
This means the total number of evaluations for all relevant directions is polynomial.

\subsection{Reduction to Constrained Planning}

We first provide a formal reduction from computing an SEFCE to the constrained planning problem.

\paragraph{The punishment amplifier.}
Fixing a stochastic game $(\cS, \cA, \ap, r_1, r_2, \trans)$, our reduction involves the punishment amplifier $\pa$, which maps each deterministic strategy to its maximally punishing version against a subset of players --- for SEFCE this subset is $\{2\}$, and as we will see later, for EFCE this subset is $\{1, 2\}$.
For each player $i \in \{1, 2\}$, consider the zero-sum stochastic game $(\cS, \cA, \ap, r_1', r_2', \trans)$, where $r_i' = r_i$ and $r_{3 - i}' = -r_i$.
Let $\pi_i$ be a deterministic subgame-perfect equilibrium strategy in this zero-sum game --- here, $i$ is the player being punished, but note that $\pi_i$ comprises both players' actions.
Such a strategy can be found by backward induction.
Note that without loss of generality, $\pi_i$ is history-independent, so we write $\pi_i(s)$ for simplicity.
If there are multiple candidates for $\pi_i$, we pick an arbitrary one among them (the choice does not affect our results).

Given a deterministic strategy $\pi$ and a subset of players $S \subseteq \{1, 2\}$, the punishment-amplified version $\pi' = \pa(\pi, S)$ of $\pi$ is given by: For each $h = (s_1, a_1, \dots, s_t, a_t) \in \cH$ and $s \in \cS$,
\begin{itemize}
    \item If $h \in \cH^\pi$ (i.e., $h$ is feasible under $\pi$), then $\pi'(h, s) = \pi(h, s)$.
    \item If $\ap(s) \notin S$, then $\pi'(h, s) = \pi(h, s)$.
    \item Otherwise, $\pi'(h, s) = \pi_i(s)$, where $h' = (s_1, a_1, \dots, s_{t'})$ is the longest feasible prefix of $h$, and $i = \ap(s_{t'})$ (i.e., $i$ is the first player who deviated, in state $s_{t'}$).
\end{itemize}
The punishment amplifier can be naturally extended to randomized strategies: For a randomized strategy $\Pi$ and a subset of players $S$, $\pa(\Pi, S)$ is obtained by mapping every deterministic strategy $\pi$ in the support of $\Pi$ to $\pa(\pi, S)$, and assign the latter the same probability mass in $\Pi'$ as $\pi$ has in $\Pi$.

We first prove maximum punishment is without loss of generality, which is formally captured by the following lemma:

\begin{lemma}
\label{lem:maximum_punishment}
    Fix a stochastic game $(\cS, \cA, \ap, r_1, r_2, \trans)$.
    For any $\Pi$ under which the follower is best responding, the follower is also best responding under $\pa(\Pi, \{2\})$, and moreover,
    \[
        u_1^{\Pi}(\emptyset, \si) = u_1^{\pa(\Pi, \{2\})}(\emptyset, \si).
    \]
\end{lemma}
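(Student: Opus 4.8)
The plan is to prove the two assertions separately: that the leader's value at the root is unchanged, and that replacing $\Pi$ by $\pa(\Pi, \{2\})$ only tightens (never relaxes) the follower's best-response constraints. I would first record the structural fact that $\pa(\cdot,\{2\})$ rewrites only off-path behavior. For every deterministic $\pi$ in the support of $\Pi$, the amplified $\pa(\pi,\{2\})$ agrees with $\pi$ on every history admissible under $\pi$; a short induction on the length of a history then gives $\cH^{\pi} = \cH^{\pa(\pi,\{2\})}$, and taking the union over the support yields $\cH^{\Pi} = \cH^{\pa(\Pi,\{2\})}$. Since $\pa(\cdot,\{2\})$ is a mass-preserving bijection on supports that preserves admissibility, for any admissible $h$ the conditional $\pa(\Pi,\{2\}) \mid h$ is exactly the pushforward of $\Pi \mid h$. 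The utility identity then follows quickly: $u_1^{\Pi}(\emptyset,\si) = \bE_{\pi \sim \Pi}[u_1^{\pi}(\emptyset,\si)]$, and when both players follow $\pi$ the realized play visits only histories admissible under $\pi$, where $\pa(\pi,\{2\})$ and $\pi$ prescribe identical actions, so $u_1^{\pa(\pi,\{2\})}(\emptyset,\si) = u_1^{\pi}(\emptyset,\si)$ termwise and the averages agree.

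The substance is the best-response claim. Fix an admissible $h$, a state $s$ with $\ap(s) = 2$, a recommended action $a$ with $h + (s,a)$ admissible, and a deviation $\pi'$ with $\pi'(h,s) \ne a$; write $\Pi' = \pa(\Pi,\{2\})$ and let $w_2(s)$ denote the value of the zero-sum game that punishes player $2$ at state $s$ (the value attained by $\pi_2$). The two ``following-the-recommendation'' sides coincide, $v_2^{\Pi' \mid (h+(s,a))}(h,s) = v_2^{\Pi \mid (h+(s,a))}(h,s)$, because following recommendations stays on-path, where $\Pi'$ and $\Pi$ agree. For the deviation side, the key point is that once the follower deviates at $(h,s)$ the history leaves $\cH^{\pi}$ and the amplifier switches the leader to the fixed minimax strategy $\pi_2$; crucially $\pi_2$ does not depend on the drawn $\pi$, so the deviation value under $\Pi'$ is just $u_2^{(2:\pi', 1:\pi_2)}(h,s)$ with no residual averaging. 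By the defining minimax property of $\pi_2$, the leader's play holds the follower to at most $w_2(s)$, so $u_2^{(2:\pi',1:\pi_2)}(h,s) \le w_2(s)$ for every such $\pi'$.

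It remains to show $w_2(s) \le v_2^{\Pi \mid (h+(s,a))}(h,s)$, i.e.\ that the follower's on-path utility already dominates what it could secure by punishment. Here I would invoke that the follower is best responding under $\Pi$: by the one-shot deviation principle in this finite-horizon game, following recommendations is a global best response to the leader's (possibly mixed) strategy $\sigma_1$ induced by $\Pi \mid (h+(s,a))$, so, using linearity of $u_2$ in $\sigma_1$, $v_2^{\Pi\mid(h+(s,a))}(h,s) = \max_{\sigma_2} u_2(\sigma_2,\sigma_1) \ge \min_{\sigma_1'}\max_{\sigma_2} u_2 = w_2(s)$, where the last step holds because any fixed leader strategy yields the follower at least the minimax value. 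Chaining the bounds, the deviation value under $\Pi'$ is at most $w_2(s)$, which is at most the unchanged on-path side, so the follower is best responding at $(h,s,a)$; since the argument is uniform over all such triples, the follower is best responding under $\Pi'$.

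The main obstacle I anticipate is exactly the step $w_2(s) \le v_2^{\Pi\mid(h+(s,a))}(h,s)$. The best-response hypothesis is stated only for deviations that differ from the recommendation at the current node ($\pi'(h,s)\ne a$), whereas the follower's maximin action at $s$ may coincide with $a$, so I cannot simply plug the maximin strategy into the hypothesis. Bridging this gap is precisely what the one-shot deviation principle buys me --- it upgrades the per-node conditions to global optimality of following recommendations --- and I would want to state that principle carefully for this acyclic, turn-taking setting, along with the linearity of $u_2$ in the leader's mixed strategy that lets me collapse the average over $\pi$ into a single $\sigma_1$ and compare it against the minimax value.
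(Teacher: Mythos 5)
Your first paragraph (admissible histories are preserved, conditionals are pushforwards, on-path play under $\pi$ and $\pa(\pi,\{2\})$ coincides, hence the leader's utility is unchanged) matches the paper's proof of the utility identity essentially step for step. On the best-response claim, however, you take a genuinely different route. The paper fixes a single deviation $\pi''$ with $\pi''(h,s)\ne a$ and compares, for that same $\pi''$, its value under $\pa(\Pi,\{2\})$ directly against its value under $\Pi$ (asserting that the punishing continuation is pointwise worse for the follower than the original continuation), and then applies the best-response hypothesis to $\pi''$. You instead bound the deviation value under $\pa(\Pi,\{2\})$ by the zero-sum value $w_2(s)$ (the paper's $u^p(s)$) and try to show the on-path value dominates $w_2(s)$. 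Routing through the game value is a legitimate, arguably more robust, alternative --- the pointwise comparison is delicate, since a minimax strategy need not minimize the follower's payoff against a fixed suboptimal continuation --- but it shifts the entire burden onto the inequality $v_2^{\Pi\mid(h+(s,a))}(h,s)\ge w_2(s)$.

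That inequality is exactly where your proposal has a genuine gap, and you flag it yourself: $w_2(s)=\max_{a''}\bigl(r_2(s,a'')+\bE_{s'\sim \trans(s,a'')}[w_2(s')]\bigr)$ ranges over all actions including $a''=a$, which the hypothesis does not cover, and the one-shot deviation principle you invoke to bridge this is never proven. Moreover, its formulation via ``following recommendations is a global best response to a fixed mixed strategy $\sigma_1$, by linearity'' is not quite right: in an extensive-form correlated strategy the leader's continuation is correlated with the follower's future recommendations, so the correct statement is about trigger strategies (follow until you deviate, then receive no further recommendations), proven by backward induction over on-path nodes. The more important point is that the obstacle is self-inflicted: since every admissible deviation satisfies $\pi''(h,s)=a''\ne a$, the deviation value under $\pa(\Pi,\{2\})$ is in fact at most $\max_{a''\ne a}\bigl(r_2(s,a'')+\bE_{s'\sim \trans(s,a'')}[w_2(s')]\bigr)$, a maximum that excludes the recommended action. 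Let $a^*$ attain this max, and let $\pi^*$ play $a^*$ at $(h,s)$ and follow player $2$'s maximin strategy from the zero-sum punishment game thereafter; since that maximin strategy secures at least $w_2(s')$ against \emph{any} behavior of the leader (correlated or not), we get $\bE_{\pi\sim\Pi\mid(h+(s,a))}\bigl[u_2^{(2:\pi^*,\,1:\pi)}(h,s)\bigr]\ge r_2(s,a^*)+\bE_{s'\sim \trans(s,a^*)}[w_2(s')]$. A single application of the best-response hypothesis to $\pi^*$ (legal, since $a^*\ne a$) then yields that the on-path value is at least this quantity, hence at least the deviation value of every $\pi''$ under $\pa(\Pi,\{2\})$, closing the argument with no appeal to global optimality. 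As written, your key step is deferred rather than established, so the proposal is not yet a complete proof; with the modification above it becomes one.
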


We defer the proof of Lemma~\ref{lem:maximum_punishment}, as well as all other missing proofs, to Appendix~\ref{app:sefce}.
The lemma suggests that when optimizing over strategies where the follower is best responding, we can focus on those with the maximum punishment structure as described above.
Next we show this gives us a reduction to the constrained planning problem.

\begin{lemma}
\label{lem:utility_under_punishment}
    Fix a stochastic game $(\cS, \cA, \ap, r_1, r_2, \trans)$.
    For any randomized strategy $\Pi$, the follower is best responding under $\pa(\Pi, \{2\})$ if the following condition holds: For each admissible history $h \in \cH^\Pi$, state $s \in \cS$ where $\ap(s) = 2$, and action $a \in \cA$ such that $h + (s, a) \in \cH^\Pi$,
    \[
        v_2^{\Pi \mid (h + (s, a))}(h, s) \ge \max_{a' \in \cA} \left(r_2(s, a') + \bE_{s' \sim \trans(s, a')}[u_2^{\pi_2}(h + (s, a'), s')]\right),
    \]
    where $\pi_2$ is the subgame perfect equilibrium when the leader tries to minimize the follower's utility, as defined above.
\end{lemma}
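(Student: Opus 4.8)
The plan is to verify the best-responding inequality for player~2 directly under $\Pi' = \pa(\Pi, \{2\})$. By the definition of best responding (with $\eps = 0$), it suffices to fix an admissible $h \in \cH^{\Pi'}$, a state $s$ with $\ap(s) = 2$, an action $a$ with $h + (s, a) \in \cH^{\Pi'}$, and a deterministic $\pi'$ with $\pi'(h, s) \ne a$, and to establish
\[
    v_2^{\Pi' \mid (h + (s, a))}(h, s) \ge \bE_{\pi \sim \Pi' \mid (h + (s, a))}\left[u_2^{(2: \pi', 1: \pi)}(h, s)\right].
\]
First I would record that $\pa(\cdot, \{2\})$ never alters play along admissible histories: an easy induction shows that following the amplified recommendations keeps the realized history feasible under the underlying deterministic strategy, so the amplifier always falls into its first case and reproduces the original recommendation. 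Hence $\cH^{\Pi'} = \cH^{\Pi}$, and along the on-path pair $(h, s, a)$ the continuation under $\Pi' \mid (h + (s, a))$ coincides with that under $\Pi \mid (h + (s, a))$, so the left-hand side equals $v_2^{\Pi \mid (h + (s, a))}(h, s)$ --- exactly the quantity the hypothesis lower-bounds.

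Next I would bound the deviation term. The crucial structural claim is that the instant player~2 plays the non-recommended action $\tilde a = \pi'(h, s)$, the realized history leaves the feasible region and the play is driven into the maximally punishing regime $\pi_2$, so that player~1's enforced moves minimize player~2's onward reward from the post-deviation state onward. Granting this, for every $\pi$ in the support of $\Pi' \mid (h + (s, a))$ the profile $(2: \pi', 1: \pi)$ pits player~2's arbitrary continuation against player~1's punishment, and by the minimax (subgame-perfect) property of $\pi_2$ in the zero-sum game $(r_2, -r_2)$, no continuation chosen through $\pi'$ can secure more than the security level $u_2^{\pi_2}$ at the reached state. This yields, uniformly over $\pi$,
\[
    u_2^{(2: \pi', 1: \pi)}(h, s) \le r_2(s, \tilde a) + \bE_{s' \sim \trans(s, \tilde a)}\left[u_2^{\pi_2}(h + (s, \tilde a), s')\right] \le \max_{a' \in \cA}\left(r_2(s, a') + \bE_{s' \sim \trans(s, a')}\left[u_2^{\pi_2}(h + (s, a'), s')\right]\right),
\]
and taking the expectation over $\pi \sim \Pi' \mid (h + (s, a))$ preserves the bound. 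Chaining the two steps with the assumed condition then closes the argument: the on-path equality rewrites the left-hand side as $v_2^{\Pi \mid (h + (s, a))}(h, s)$, the hypothesis bounds this below by the maximum over $a'$ above, and the off-path estimate places the deviation term below that same maximum. Since $(h, s, a, \pi')$ were arbitrary, player~2 is best responding under $\pa(\Pi, \{2\})$.

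The hard part will be the structural claim in the second step --- verifying that $\pa(\Pi, \{2\})$ genuinely routes every post-deviation history into the zero-sum punishment regime governed by $\pi_2$. This requires carefully tracking the ``first deviator'' bookkeeping in the definition of the amplifier and confirming that, once player~2 is the first to deviate, player~1's enforced moves coincide with $\pi_2$. A second delicate point is upgrading from ``value under a fixed profile'' to ``best-response value against enforced punishment'': because $\pi_2$ is a subgame-perfect equilibrium of the zero-sum game, player~1's play caps player~2's onward utility at $u_2^{\pi_2}$ against \emph{every} deviation $\pi'$, and it is precisely the turn-taking, perfect-information structure that guarantees the optimal deviation can be summarized state by state, so that the per-state security levels aggregate correctly through the transition $\trans(s, \tilde a)$.
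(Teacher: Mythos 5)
Your proposal is correct and follows essentially the same route as the paper's proof: reduce the best-response condition for $\pa(\Pi, \{2\})$ to bounding the deviation value, use the fact that after a deviation by player~2 the amplified strategy's player-1 moves coincide with $\pi_2$, and invoke the subgame-perfect (minimax) property of $\pi_2$ in the zero-sum game to cap the deviation value by $\max_{a'} \left(r_2(s, a') + \bE_{s' \sim \trans(s, a')}[u_2^{\pi_2}(h + (s, a'), s')]\right)$, which the hypothesis then dominates. The only difference is one of explicitness: you spell out the on-path identity $v_2^{\Pi' \mid (h + (s, a))}(h, s) = v_2^{\Pi \mid (h + (s, a))}(h, s)$ and the equality $\cH^{\Pi'} = \cH^{\Pi}$, which the paper uses implicitly (they were established in the proof of Lemma~\ref{lem:maximum_punishment}).
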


Observe that in the above lemma, the right-hand side of the inequality does not depend on $\Pi$.
Moreover, since $\pi_i$ is history-independent, it does not depend on $h$ either.
So the right hand-side is a constant that depends only on the state $s$.
From now on, we call this quantity the utility under punishment in state $s$, defined as
\[
    u^p(s) = \max_{a' \in \cA} \left(r_i(s, a') + \bE_{s' \sim \trans(s, a')}[u_i^{\pi_i}(h + (s, a'), s')]\right),
\]
where $i = \ap(s)$ (note that although we define the utility under punishment for both players, for SEFCE we only need it for the follower, i.e., player $2$).
The utility under punishment $u^p(s)$ can be efficiently computed in all states.

Lemmas~\ref{lem:maximum_punishment}~and~\ref{lem:utility_under_punishment} together imply the following claim, which states that finding an SEFCE is equivalent to constrained planning.

\begin{theorem}
\label{thm:reduction}
    Fix a stochastic game $(\cS, \cA, \ap, r_1, r_2, \trans)$.
    For any $x \ge 0$, there exists a strategy $\Pi$ under which the follower is best responding such that $u_1^\Pi(\emptyset, \si) \ge x$, if and only if there exists a strategy $\Pi'$ such that $u_1^{\Pi'}(\emptyset, \si) \ge x$, and for each admissible history $h \in \cH^{\Pi'}$, state $s \in \cS$ where $\ap(s) = 2$, and action $a \in \cA$ such that $h + (s, a) \in \cH^{\Pi'}$,
    \[
        v_2^{\Pi' \mid (h + (s, a))}(h, s) \ge u^p(s).
    \]
\end{theorem}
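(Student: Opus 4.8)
The plan is to prove Theorem~\ref{thm:reduction} by stitching together the two preceding lemmas, so the work is essentially bookkeeping about which direction of the ``if and only if'' each lemma supplies. First I would establish the backward direction (existence of $\Pi'$ satisfying the onward-utility constraints implies existence of a best-responding strategy achieving the same leader utility). Given such a $\Pi'$, I would observe that the constraint $v_2^{\Pi' \mid (h + (s, a))}(h, s) \ge u^p(s)$ is, by the very definition of $u^p(s)$, exactly the hypothesis of Lemma~\ref{lem:utility_under_punishment}. Hence the follower is best responding under $\pa(\Pi', \{2\})$. It then remains to check that the leader's utility is preserved when we pass from $\Pi'$ to its punishment-amplified version; this is where I would invoke the utility-preservation half of Lemma~\ref{lem:maximum_punishment}, or rather the argument behind it: the amplifier only modifies the strategy off the admissible paths $\cH^{\Pi'}$, so $u_1^{\pa(\Pi', \{2\})}(\emptyset, \si) = u_1^{\Pi'}(\emptyset, \si) \ge x$. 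Taking $\Pi := \pa(\Pi', \{2\})$ completes this direction.

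For the forward direction I would start from a strategy $\Pi$ under which the follower is best responding with $u_1^\Pi(\emptyset, \si) \ge x$, and apply Lemma~\ref{lem:maximum_punishment} to replace it by $\pa(\Pi, \{2\})$, under which the follower is still best responding and which achieves the same leader utility $x$. The natural candidate for the witness $\Pi'$ is $\pa(\Pi, \{2\})$ itself. The content to verify is that a best-responding strategy with the maximum-punishment structure automatically satisfies the onward-utility constraints $v_2^{\Pi' \mid (h + (s, a))}(h, s) \ge u^p(s)$. This is the translation of the best-response inequality (from the definition of $\eps$-best responding with $\eps = 0$) into the constraint form: the follower's best deviation in state $s$, followed by maximum punishment, yields precisely $u^p(s)$ on the right-hand side, so best responding forces the left-hand side to dominate it.

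The main obstacle I anticipate is the precise alignment of the two different-looking inequalities — the generic best-response condition $v_2^{\Pi \mid (h+(s,a))}(h,s) \ge \bE_{\pi \sim \Pi \mid (h+(s,a))}[u_2^{(2:\pi', 1:\pi)}(h,s)]$ versus the clean state-dependent bound $u^p(s)$. The key facts that make them coincide, which I would spell out, are (i) that in a turn-taking game the follower's optimal deviation value depends only on the current state once maximum punishment is in force, because after deviating the follower faces the subgame-perfect minimizing strategy $\pi_2$, and (ii) that this optimal deviation value is exactly $\max_{a'} (r_2(s,a') + \bE_{s' \sim \trans(s,a')}[u_2^{\pi_2}(h+(s,a'), s')]) = u^p(s)$, using the history-independence of $\pi_2$ noted after Lemma~\ref{lem:utility_under_punishment}. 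Once this identification is made, both directions reduce to direct substitution. I would keep the argument short by quoting Lemmas~\ref{lem:maximum_punishment} and~\ref{lem:utility_under_punishment} as black boxes, since the former handles punishment-preservation of best response and utility, and the latter handles the constraint-to-best-response implication; the theorem is then just the observation that ``the follower is best responding'' and ``the onward-utility constraints hold for the punishment-amplified strategy'' are two sides of the same coin.
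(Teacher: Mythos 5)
Your overall architecture matches the paper's: the theorem is meant to follow by combining Lemma~\ref{lem:maximum_punishment} and Lemma~\ref{lem:utility_under_punishment}, and your backward direction is handled correctly --- including the careful observation that the utility-preservation step must be extracted from the \emph{proof} of Lemma~\ref{lem:maximum_punishment} (whose first part never uses the best-response hypothesis), since the lemma as stated assumes the follower is best responding under the given strategy, which is not known at that point.

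The gap is in the forward direction, at the step where you assert that ``the follower's best deviation in state $s$, followed by maximum punishment, yields precisely $u^p(s)$, so best responding forces the left-hand side to dominate it.'' This is not quite true: by the definition of best responding, a deviation at $(h, s)$ must play an action \emph{different} from the recommendation $a$, so the best-response inequality only yields
\[
    v_2^{\Pi' \mid (h + (s, a))}(h, s) \ge \max_{a'' \ne a} \left(r_2(s, a'') + \bE_{s' \sim \trans(s, a'')}[u_2^{\pi_2}(h + (s, a''), s')]\right),
\]
whereas $u^p(s)$ takes the maximum over \emph{all} $a' \in \cA$, including $a' = a$. When the recommended action $a$ is itself the maximizer of $r_2(s, a') + \bE_{s' \sim \trans(s, a')}[u_2^{\pi_2}(h + (s, a'), s')]$, your deviation bound is strictly weaker than the constraint you need to establish. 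The missing piece is a backward induction over states (using acyclicity) showing that, under the punishment-amplified best-responding strategy, the follower's onward utility at every admissible history-state pair is at least the zero-sum value given by $\pi_2$ at that state: at follower states one combines the best-response inequality (covering all $a'' \ne a$) with the induction hypothesis applied to the continuation after the recommended action, which gives $v_2^{\Pi' \mid (h+(s,a))}(h, s) \ge r_2(s, a) + \bE_{s' \sim \trans(s, a)}[u_2^{\pi_2}(h + (s,a), s')]$ and thus covers the remaining term of the max; at leader states the induction hypothesis alone suffices, since the zero-sum value there is a \emph{minimum} over the leader's actions. To be fair, the paper itself only says the two lemmas ``together imply'' the theorem without spelling out this step (neither lemma supplies the forward implication), but your written justification, as it stands, proves the wrong inequality at exactly the point where the real work lies.
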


\paragraph{Feasible strategies.}
We say a strategy $\Pi$ is feasible if it satisfies the condition in the theorem which involves the utility under punishment, i.e., for each $h \in \cH^\Pi$, $s \in \cS$ where $\ap(s) = 2$, and $a \in \cA$ such that $h + (s, a) \in \cH^\Pi$,
\[
    v_2^{\Pi \mid (h + (s, a))}(h, s) \ge u^p(s).
\]
We say a strategy $\Pi$ is feasible after state $s$, if for each $h \in \cH^\Pi$, $s' > s$ where $\ap(s') = 2$, and $a \in \cA$ such that $h + (s', a) \in \cH^\Pi$,
\[
    v_2^{\Pi \mid (h + (s', a))}(h, s') \ge u^p(s').
\]
Our problem now becomes finding a feasible strategy that maximizes player $1$'s utility.

\subsection{Pareto Frontier Curves}

Before proceeding to the full description of our algorithm, we first quickly (and somewhat informally) define Pareto frontier curves and discuss some useful properties.
Intuitively, the Pareto frontier curve $\pf_{s, a}$ for a state-action pair $(s, a)$ is the curve capturing all Pareto-optimal pairs of onward utilities (assuming an empty history) for both players after playing action $a$ in state $s$, induced by strategies that are feasible after $s$.
Another way to view $\pf_{s, a}$ is it is the top right boundary of the region of pairs of onward utilities (say $\fr_{s, a}$) induced by strategies that are feasible after $s$.
We call $\fr_{s, a}$ the feasible region for $(s, a)$, which can be defined in the following way:
\[
    \fr_{s, a} = \left\{(v_1^{\Pi \mid (s, a)}(\emptyset, s), v_2^{\Pi \mid (s, a)}(\emptyset, s)) \,\middle|\, \Pi\text{ is feasible after }s,\, (s, a) \in \cH^\Pi\right\}.
\]
We first argue that both $\fr_{s, a}$ and $\pf_{s, a}$ are well behaved:

\begin{lemma}
\label{lem:convex}
    Fix a stochastic game $(\cS, \cA, \ap, r_1, r_2, \trans)$.
    For any state $s \in \cS$ and action $a \in \cA$, $\fr_{s, a}$ is always a convex region and $\pf_{s, a}$ is always a concave curve.
\end{lemma}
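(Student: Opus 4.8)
The plan is to prove convexity of the feasible region $\fr_{s,a}$ directly; concavity of $\pf_{s,a}$ then follows, since $\pf_{s,a}$ is by definition the top-right boundary of $\fr_{s,a}$, and the upper-right (Pareto) boundary of a bounded convex region is always a concave curve. Boundedness is immediate: all rewards lie in $[0,1]$ and plays have length at most $n$, so utilities lie in $[0,n]$.

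To show $\fr_{s,a}$ is convex, I would take two points $P_1,P_2\in\fr_{s,a}$, witnessed by randomized strategies $\Pi_1,\Pi_2$ that are feasible after $s$ with $(s,a)\in\cH^{\Pi_j}$, and show any convex combination lies in $\fr_{s,a}$. First I would replace each $\Pi_j$ by $\Pi_j\mid(s,a)$. This is without loss of generality: conditioning leaves the witnessed point unchanged since $(\Pi_j\mid(s,a))\mid(s,a)=\Pi_j\mid(s,a)$, and it preserves feasibility after $s$, because for any history $g=h+(s',a')$ that extends $(s,a)$ one has $(\Pi_j\mid(s,a))\mid g=\Pi_j\mid g$ (conditioning on the finer event $g$ subsumes conditioning on $(s,a)$). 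After this step both strategies play $a$ in $s$ deterministically from the empty history.

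Next, for $\lambda\in[0,1]$ I would form the mixture $\Pi=\lambda\Pi_1+(1-\lambda)\Pi_2$ as a convex combination of the two distributions over deterministic strategies. Three facts are then routine: $(s,a)\in\cH^\Pi$, and $\Pi\mid(s,a)=\Pi$ because every deterministic strategy in the support plays $a$ in $s$ so conditioning does nothing; and, by linearity of $v_i^{\,\cdot}(\emptyset,s)=\bE_{\pi\sim\cdot}[u_i^\pi(\emptyset,s)]$ in the mixing distribution, the witnessed point of $\Pi$ equals $\lambda P_1+(1-\lambda)P_2$. It remains to verify that $\Pi$ is feasible after $s$, which is the crux. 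Fixing any $g=h+(s',a')\in\cH^\Pi$ with $s'>s$ and $\ap(s')=2$, and writing $p_j$ for the probability that a strategy drawn from $\Pi_j$ is consistent with $g$, the posterior decomposes as $\Pi\mid g=\mu\,(\Pi_1\mid g)+(1-\mu)(\Pi_2\mid g)$ with $\mu=\lambda p_1/(\lambda p_1+(1-\lambda)p_2)\in[0,1]$, the denominator being positive since $g$ is admissible under $\Pi$. Linearity then gives
\[
    v_2^{\Pi\mid g}(h,s')=\mu\,v_2^{\Pi_1\mid g}(h,s')+(1-\mu)\,v_2^{\Pi_2\mid g}(h,s'),
\]
and every surviving term (those with $p_j>0$) satisfies $v_2^{\Pi_j\mid g}(h,s')\ge u^p(s')$ by feasibility of $\Pi_j$ after $s$, so the convex combination is at least $u^p(s')$; hence $\Pi$ is feasible after $s$ and $\lambda P_1+(1-\lambda)P_2\in\fr_{s,a}$.

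I expect the main obstacle to be this feasibility step. One must carefully justify that $\Pi\mid g$ is a convex combination of $\Pi_1\mid g$ and $\Pi_2\mid g$ governed by a single, history-consistent weight $\mu$, and handle the degenerate case where one component never reaches $g$ (its weight vanishes and it drops out of both the decomposition and the constraint). This single consistent $\mu$ is exactly what extensive-form correlation buys us: because the realization of the mixture is revealed only through the realized history, the posterior over components at $(h,s')$ is well defined and shared by all onward computations. Stating this cleanly, rather than the surrounding linearity and conditioning bookkeeping, is where the care is needed.
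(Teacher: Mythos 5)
Your proposal is correct and follows essentially the same route as the paper's proof: condition the witness strategies on $(s,a)$, mix them with weight $\lambda$, decompose the posterior $\Pi\mid g$ as a convex combination of $\Pi_1\mid g$ and $\Pi_2\mid g$ with a (generally different) weight, and use linearity of $v_2$ together with feasibility of each component. If anything, your write-up is slightly more careful than the paper's, since you give the posterior weight $\mu$ explicitly and handle the degenerate case where one component never reaches $g$, where the paper merely asserts the existence of some $\beta\in[0,1]$.
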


Given Lemma~\ref{lem:convex}, $\pf_{s, a}$ in fact specifies a bijection between the two players' utilities under feasible strategies, corresponding to the two coordinates of a point.
In the rest of the paper, we will abuse notation, and use $\pf_{s, a}$ in $4$ ways:
\begin{itemize}
    \item For $x \in \bR_+$, $\pf_{s, a}(x) \in \bR_+$ denotes the $y$-coordinate of the point on $\pf_{s, a}$ whose $x$-coordinate is $x$.
    When used in this way, the argument to $\pf_{s, a}$ will always be $x$, possibly with subscripts or superscripts.
    \item For $y \in \bR_+$, $\pf_{s, a}(y) \in \bR_+$ denotes the $x$-coordinate of the point on $\pf_{s, a}$ whose $y$-coordinate is $y$.
    When used in this way, the argument to $\pf_{s, a}$ will always be $y$, possibly with subscripts or superscripts.
    \item For $p \in \bR_+^2$, we say $p \in \pf_{s, a}$ if $p$ is in $\pf_{s, a}$ as a set of points (i.e., the graph of $\pf_{s, a}$ as a mapping).
    \item For $\alpha \in \bR_+^2$, $\pf_{s, a}(\alpha) \in \bR_+^2$ denotes the farthest point on $\pf_{s, a}$ along direction $\alpha$.
    That is,
    \[
        \pf_{s, a}(\alpha) = \argmax_{p \in \pf_{s, a}} \alpha \cdot p.
    \]
\end{itemize}

\subsection{Evaluating the Pareto Frontier Curves}

Observe that if we can evaluate $\pf_{\si, a}$ for each $a \in \cA$, then it is not hard to find the optimal utilities of the two players induced by a feasible strategy, given a particular objective direction $\alpha \in \bR_+^2$ (for an SEFCE in particular, we want $\alpha = (1, 0)$):
Without loss of generality, an optimal strategy picks a deterministic action in the initial state $\si$, so we only need to try every one of the actions.
For each action $a \in \cA$, the optimal utilities induced by a strategy that is feasible after $\si$ is $\pf_{s, a}(\alpha)$.
If $\ap(\si) = 1$ or $\pf_{s, a}(\alpha)_{\ap(\si)} \ge u^p(\si)$, then this strategy is a feasible strategy, and $\pf_{s, a}(\alpha)$ gives the optimal utilities if the first action is $a$.
Otherwise, since $\pf_{s, a}$ is concave, the optimal utilities induced by a feasible strategy must correspond to the point on $\pf_{s, a}$ where the constraint in $\si$ is binding.
More specifically, suppose $\ap(\si) = 2$, and let $y_\si = u^p(\si)$.
Then the optimal utilities when the first action is $a$ must be $(\pf_{s, a}(y_\si), y_\si)$.

\paragraph{Pivotal points.}
The above discussion suggests that the point $(\pf_{s, a}(y_\si), y_\si)$ plays a particularly important role in finding the optimal utilities.
More generally, we define the pivotal point $\pp(s, a)$ on $\pf_{s, a}$ for each $(s, a)$ where $\ap(s) = 2$ to be the rightmost point (if there is one) on $\pf_{s, a}$ such that $\pp(s, a)_{\ap(s)} \ge u^p(s)$.
If such a point does not exist, then we let $\pp(s, a) = (-n, -n)$ (here, $-n$ is without loss of generality --- any quantity that is small enough would be consistent with our results).
For notational convenience, if $\ap(s) = 1$, we let $\pp(s, a) = \pf_{s, a}((1, 0))$.
Below we demonstrate how to evaluate the Pareto frontier curves recursively with the help of the pivotal points.

\begin{lemma}
\label{lem:evaluation}
    Fix a stochastic game $(\cS, \cA, \ap, r_1, r_2, \trans)$.
    For any $s \in \cS$, $a \in \cA$, and $\alpha \in \bR_+^2$,
    \[
        \pf_{s, a}(\alpha) = (r_1(s, a), r_2(s, a)) + \bE_{s' \sim \trans(s, a)}\left[p_{s'}\right],
    \]
    where for each $s' > s$ and $a' \in \cA$,
    \[
        p_{s', a'} = \begin{cases}
            \pf_{s', a'}(\alpha), & \mathrm{if}\ \ap(s') = 1\ \mathrm{or}\ \pf_{s', a'}(\alpha)_2 \ge u^p(s') \\
            \pp(s', a'), & \mathrm{otherwise},
        \end{cases}
    \]
    and for each $s' > s$,
    \[
        p_{s'} = \argmax_{p \in \{p_{s', a'}\}_{a' \in \cA}} p \cdot \alpha.
    \]
\end{lemma}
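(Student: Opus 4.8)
The plan is to prove a one-step structural decomposition of the feasible region $\fr_{s,a}$ and then read off the claimed evaluation as a linear optimization over that decomposition. Since $\fr_{s,a}$ is convex (Lemma~\ref{lem:convex}) and $\alpha \in \bR_+^2$, the maximizer of $p \mapsto \alpha \cdot p$ over $\fr_{s,a}$ lies on its northeast boundary $\pf_{s,a}$, so $\pf_{s,a}(\alpha) = \argmax_{p \in \fr_{s,a}} \alpha \cdot p$, and it suffices to compute this maximizer. The structural claim I would establish first is
\[
    \fr_{s,a} = (r_1(s,a), r_2(s,a)) + \bE_{s' \sim \trans(s,a)}[G_{s'}],
\]
interpreted as a transition-weighted Minkowski sum, where $G_{s'} = \mathrm{conv}\big(\bigcup_{a' \in \cA} F_{s',a'}\big)$ is the set of onward-utility pairs at $s'$ attainable by continuations feasible after $s'$ that also respect the punishment constraint at $s'$; here $F_{s',a'} = \fr_{s',a'}$ if $\ap(s') = 1$ and $F_{s',a'} = \{q \in \fr_{s',a'} : q_2 \ge u^p(s')\}$ if $\ap(s') = 2$.

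To prove this identity I would use two facts. First, onward utility is additive across the exogenous transition: conditioned on playing $a$ in $s$, the expected onward pair equals $(r_1(s,a), r_2(s,a))$ plus the transition-weighted average of the expected onward pairs at the successors $s'$, because the next state is drawn by $\trans(s,a)$ independently of the correlation device. Second, because the game is acyclic, the subtrees rooted at distinct successors are disjoint, so continuations at different $s'$ can be chosen independently; this yields both inclusions. For ``$\subseteq$'', any strategy feasible after $s$ induces at each $s'$ a correlated continuation whose expected onward pair lies in $G_{s'}$, since ``feasible after $s$'' restricted to states $> s'$ is exactly ``feasible after $s'$'', and the one remaining requirement is the constraint at $s'$ itself, which is the per-recommendation bound $q_2 \ge u^p(s')$. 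For ``$\supseteq$'', given $q_{s'} \in G_{s'}$ I would realize it by a correlated continuation at each $s'$ (convexity of $G_{s'}$ reflects the mediator randomizing its recommendation at $s'$) and splice these disjoint continuations together with the fixed action $a$ at $s$. The subtle point, which I expect to be the main obstacle, is the role of extensive-form correlation in the constraint at $s'$: when $\ap(s') = 2$ the mediator may mix over several recommended actions, but the punishment constraint must hold conditioned on each recommendation separately, which is exactly why we restrict to the feasible part $F_{s',a'}$ of each curve \emph{before} convexifying, rather than intersecting $\mathrm{conv}(\bigcup_{a'} \fr_{s',a'})$ with the halfplane $\{q_2 \ge u^p(s')\}$.

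Given the decomposition, the evaluation follows by linearity. Maximizing $p \mapsto \alpha \cdot p$ over a transition-weighted Minkowski sum splits into a weighted sum of independent maximizations, so the maximum equals $\alpha \cdot (r_1(s,a), r_2(s,a)) + \sum_{s' > s} \trans(s,a,s') \max_{q \in G_{s'}} \alpha \cdot q$. Since maximizing a linear functional over a convex hull of a union equals the maximum over the union, $\max_{q \in G_{s'}} \alpha \cdot q = \max_{a' \in \cA} \max_{q \in F_{s',a'}} \alpha \cdot q$, and I claim the inner maximizer over $F_{s',a'}$ is exactly $p_{s',a'}$. When $\ap(s') = 1$, or when $\ap(s') = 2$ and the unconstrained optimum already satisfies $\pf_{s',a'}(\alpha)_2 \ge u^p(s')$, that optimum is feasible and hence optimal, giving $p_{s',a'} = \pf_{s',a'}(\alpha)$. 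Otherwise it is cut off, and here I would invoke concavity of $\pf_{s',a'}$: parametrizing the frontier by its $x$-coordinate, $x \mapsto \alpha_1 x + \alpha_2 \pf_{s',a'}(x)$ is concave, the frontier's $y$-coordinate is decreasing in $x$, so the constraint $q_2 \ge u^p(s')$ restricts to an interval of $x$ whose right endpoint is the pivotal point, and this endpoint has smaller $x$ than the unconstrained optimum; concavity then forces the constrained maximum to that endpoint, i.e.\ $\pp(s',a')$. Taking $p_{s'} = \argmax_{a'} \alpha \cdot p_{s',a'}$ matches the lemma, and the point $(r_1(s,a), r_2(s,a)) + \bE_{s'}[p_{s'}]$ is itself attainable in $\fr_{s,a}$, so being a maximizer it lies on $\pf_{s,a}$ and equals $\pf_{s,a}(\alpha)$.

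Two edge cases remain to be dispatched. If a successor $s'$ with $\ap(s') = 2$ has $F_{s',a'} = \emptyset$ (the whole curve lies below $u^p(s')$), then $\pp(s',a') = (-n,-n)$ ensures $a'$ is never selected by the $\argmax$ defining $p_{s'}$, consistent with its contributing no feasible continuation; this matches the ``small enough'' convention in the definition of $\pp$. And where the $\argmax$ over the frontier is not unique, I would adopt the same tie-breaking used in defining $\pf_{s,a}(\alpha)$ so that the constructed maximizer coincides with the canonical one.
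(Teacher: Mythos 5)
Your proposal is correct, and on inspection its two load-bearing inclusions are proved by the same two arguments the paper uses, just packaged at a different level of abstraction. The paper works directly with values: it shows $\alpha \cdot \pf_{s,a}(\alpha) \ge \alpha \cdot \bigl((r_1(s,a), r_2(s,a)) + \bE_{s' \sim \trans(s,a)}[p_{s'}]\bigr)$ by splicing together, for each successor $s'$, a strategy realizing $p_{s'}$ with the fixed action $a$ at $s$; and it shows the reverse inequality by taking any strategy feasible after $s$ achieving $\pf_{s,a}(\alpha)$, conditioning on $(s,a)$ and then on the recommended action $a'$ at a successor, and deriving a contradiction with the definition of $p_{s',a'}$. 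Your ``$\supseteq$'' and ``$\subseteq$'' of the Minkowski-sum identity $\fr_{s,a} = (r_1(s,a), r_2(s,a)) + \sum_{s'>s} \trans(s,a,s')\, G_{s'}$ are exactly these two arguments, so the substance coincides; what differs is that you first establish the stronger set-level decomposition and then linearize. Your route buys two things: it makes explicit why each curve must be restricted to its feasible part \emph{before} convexifying (the per-recommendation constraint, which is the heart of extensive-form correlation here), and it spells out the concavity/monotonicity argument showing that when the unconstrained optimum violates the constraint, the constrained optimum over $F_{s',a'}$ is the pivotal point $\pp(s',a')$ --- a step the paper compresses into the phrase ``this contradicts the definition of $p_{s',a'}$,'' even though the contradiction only follows once one knows $p_{s',a'}$ is the $\alpha$-optimum of the constrained curve. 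The paper's route buys brevity: it never manipulates set-valued objects (convex hulls, weighted Minkowski sums) and avoids the attendant hygiene. Both treatments share the same unaddressed corner cases (a successor where every $F_{s',a'}$ is empty, and tie-breaking when the $\argmax$ is not unique), so neither is more rigorous there.
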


In words, the lemma says that once the pivotal points for all later state-action pairs have been computed, evaluating $\pf_{s, a}$ can be reduced to at most $mn$ evaluations of curves for later state-action pairs.
This reduction plays a central role in our algorithm.
Moreover, it also provides a way for bounding the numerical resolution of the Pareto frontier curves, which is captured by the following lemma.

\begin{lemma}
\label{lem:resolution}
    Fix a stochastic game $(\cS, \cA, \ap, r_1, r_2, \trans)$.
    Suppose all parameters of the game can be encoded using $L$ bits, i.e., for each $s, s' \in \cS$ and $a \in \cA$, $r_1(s, a)$, $r_2(s, a)$ and $\trans(s, a, s')$ are all multiples of $2^{-L}$.
    Then for each $s \in \cS \setminus \{\st\} = [n - 1]$, there exists some integer $C_s \le 2^{(n - s)(n + 1)L}$, such that for each $a \in \cA$:
    \begin{itemize}
        \item For any $\alpha \in \bR_+^2$, the $y$-coordinate of $\pf_{s, a}(\alpha)$ is a multiple of $2^{-(n - s)L}$, and the $x$-coordinate is a multiple of $1 / C_s$.
        \item If $\ap(s) = 2$, then the $y$-coordinate of $\pp(s, a)$ is a multiple of $2^{-(n - s)L}$, and the $x$-coordinate is a multiple of $1 / C_s$.
    \end{itemize}
    Moreover, for each $s \in [n - 2]$, $C_s$ is a multiple of $C_{s + 1}$.
\end{lemma}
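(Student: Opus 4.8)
The plan is to prove the statement by backward induction on $s$, from $s=n-1$ down to $s=1$, driving everything with the recursive evaluation formula of Lemma~\ref{lem:evaluation}; write $g_y(s)=2^{-(n-s)L}$ for the target $y$-resolution. For the base case $s=n-1$, acyclicity forces the only successor to be $\st=n$, where both onward utilities vanish, so Lemma~\ref{lem:evaluation} collapses $\pf_{n-1,a}$ to the single point $(r_1(n-1,a),r_2(n-1,a))$; both coordinates are multiples of $2^{-L}=g_y(n-1)$, so $C_{n-1}=2^{L}\le 2^{(n+1)L}$ works, and the pivotal point is either this point or $(-n,-n)$, whose coordinates are integers. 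For the inductive step, assume the claim (including the monotonicity $C_{t+1}\mid C_t$) for all larger states, so $C_{s'}\mid C_{s+1}$ for every $s'\ge s+1$. In Lemma~\ref{lem:evaluation}, each $p_{s',a'}$ is either a curve value $\pf_{s',a'}(\alpha)$ or a pivotal point $\pp(s',a')$, so by the hypothesis its $y$-coordinate is a multiple of $g_y(s')$, hence of $g_y(s+1)=2^{-(n-s-1)L}$, and its $x$-coordinate is a multiple of $1/C_{s'}$, hence of $1/C_{s+1}$. Since $\trans(s,a,s')$ and $r_i(s,a)$ are multiples of $2^{-L}$, aggregating yields that the $y$-coordinate of $\pf_{s,a}(\alpha)$ (taking the standard convention that the $\argmax$ returns a turning point) is a multiple of $2^{-L}g_y(s+1)=g_y(s)$ and its $x$-coordinate is a multiple of $1/(2^{L}C_{s+1})$. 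The identical computation applied to the backward-induction definition of the punishment value function shows $u_i^{\pi_i}(\cdot,s)$, and hence $u^p(s)=\max_{a'}(r_i(s,a')+\bE_{s'}[u_i^{\pi_i}(\cdot,s')])$, is a multiple of $g_y(s)$; this settles the curve clause as long as $C_s$ is chosen to be a multiple of $2^{L}C_{s+1}$, and it also settles the $y$-coordinate of every pivotal point, which equals $u^p(s)$.

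The crux is the $x$-coordinate of the pivotal point when $\ap(s)=2$ and $\pp(s,a)\ne(-n,-n)$. Because $\fr_{s,a}$ is the image of the (finite) feasible-strategy polytope under the linear utility map, $\pf_{s,a}$ is concave and piecewise linear, so the point at height $u^p(s)$ lies on an edge between adjacent turning points $(x_1,y_1),(x_2,y_2)$ with $y_2\le u^p(s)\le y_1$. Writing $g_x=1/(2^{L}C_{s+1})$ and letting $X_1,X_2,P,Q,R$ be the integers $x_1/g_x,\,x_2/g_x,\,y_1/g_y(s),\,y_2/g_y(s),\,u^p(s)/g_y(s)$, linear interpolation gives
\[
    \pp(s,a)_1 = g_x\cdot\frac{X_1(P-Q)+(X_2-X_1)(P-R)}{P-Q},
\]
so $\pp(s,a)_1$ is a multiple of $g_x/(P-Q)=1/\bigl(2^{L}C_{s+1}(P-Q)\bigr)$, where $P-Q$ is a positive integer equal to the edge's $y$-extent in units of $g_y(s)$. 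Since player $2$'s onward utility from $s$ lies in $[0,n-s]$, we have $0<P-Q\le (n-s)2^{(n-s)L}$. I would then define $C_s=2^{L}C_{s+1}\cdot K_s$, where $K_s$ is a common multiple of the finitely many integers $P-Q$ arising from the pivotal points of $\pf_{s,1},\dots,\pf_{s,m}$; this makes $2^{L}C_{s+1}\mid C_s$ (giving monotonicity) and places every curve value and every pivotal point on the $1/C_s$-grid.

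It remains to bound $C_s$. Unrolling $C_s=2^{L}C_{s+1}K_s$ from $C_{n-1}=2^{L}$ gives $C_s=2^{(n-s)L}\prod_{t=s}^{n-2}K_t$, and the task is to show $\prod_{t=s}^{n-2}K_t$ is small enough that $C_s\le 2^{(n-s)(n+1)L}$. If each level contributed a single factor $K_t\le (n-t)2^{(n-t)L}$, the quadratic budget would close the bound: $\prod_{t=s}^{n-2}(n-t)=(n-s)!\le 2^{(n-s)\log_2(n-s)}$ and $\sum_{t=s}^{n-2}(n-t)\le (n-s)^2/2$, so the exponent of $2$ is of order $(n-s)^2L/2+(n-s)\log_2(n-s)$, which is at most $(n-s)(n+1)L$ because $n-s\le n$ and $\log_2(n-s)\le (n+1)L$; note that it is this direct unrolling, rather than a tight per-level invariant, that makes the constant $(n+1)$ in the exponent suffice. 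The main obstacle is precisely the \emph{uniformity over actions}: a priori $K_t$ is a common multiple of up to $m$ distinct integers $P-Q$, and a naive product bound would introduce an unwanted factor of $m$ in the exponent, breaking the claimed dependence. The observation I would exploit to remove this is that, tracing Lemma~\ref{lem:evaluation} one step deeper (equivalently, viewing the pivotal point through the linear program ``maximize the $x$-coordinate subject to $y=u^p(s)$''), each such $P-Q$ is, up to a power of $2$, the integer $y$-extent of an edge of one of the fixed later value regions $\fr_{s',a'}$ with $s'\ge s+1$, rather than an independent quantity for each action; these edges are shared across actions, and their number is controlled by the turning-point separation already available from the recursion. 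I expect formalizing this shared-edge structure so that a single $C_s$ serves all $m$ pivotal points without paying an $m$-factor to be the genuinely technical step, and I would isolate it as a preliminary claim about the denominators of the edge endpoints of the value regions before assembling the induction.
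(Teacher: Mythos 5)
Your proposal follows the paper's proof essentially step for step up to the point you yourself flag as unresolved: the same backward induction on $s$, the same base case at $s = n-1$ (a single point $(r_1(s,a), r_2(s,a))$), the same use of Lemma~\ref{lem:evaluation} to place curve values $\pf_{s,a}(\alpha)$ on the grid with $y$-resolution $2^{-(n-s)L}$ and $x$-resolution $1/(2^L C_{s+1})$, the same separate backward-induction argument showing $u^p(s)$ is a multiple of $2^{-(n-s)L}$, and the same interpolation between two adjacent turning points for the pivotal point, with the same bound (your $P-Q$, the paper's $k$) of at most $(n-s)2^{(n-s)L}$ on the interpolation denominator. Your global unrolling of the bound on $C_s$ is only a cosmetic variant of the paper's per-level check that each step costs at most a factor of $2^{(n+1)L}$.

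The place where you diverge is the step you call the crux: how a \emph{single} integer $C_s$ can absorb the interpolation denominators of all $m$ actions at state $s$ without an lcm blow-up. You should know that the paper's own proof does not address this at all: it simply sets $C_s = 2^L C_{s+1} k$, where $k$ is the denominator arising from the one action under consideration, and never forms a common multiple over $a \in \cA$ --- even though the single-$C_s$ form of the statement is load-bearing, both within the induction itself (the first bullet at level $s$ aggregates pivotal points $\pp(s',a')$ over \emph{all} actions $a'$ at later states, so it needs a grid uniform over actions at level $s+1$) and downstream in Corollary~\ref{cor:resolution} and Lemma~\ref{lem:binary_search}. So your proposal is complete exactly to the extent that the paper's proof is complete, and the ``genuinely technical step'' you defer to a preliminary claim is precisely the step the paper elides by fiat. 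I will add that your sketched repair looks doubtful as stated: in the aggregation of Lemma~\ref{lem:evaluation}, $\pf_{s,a}$ is built from a Minkowski sum of later curves scaled by the transition probabilities $\trans(s,a,s')$, so the $y$-extent of an edge of $\pf_{s,a}$ is a sum of extents of parallel edges of later curves weighted by \emph{action-dependent} factors; the integers $P-Q$ are therefore not literally shared across actions, and controlling their least common multiple within the claimed bound $2^{(n-s)(n+1)L}$ remains open in your write-up just as it is unaddressed in the paper's.
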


We will use the following direct corollary of Lemma~\ref{lem:resolution}:

\begin{corollary}
\label{cor:resolution}
    Fix a stochastic game $(\cS, \cA, \ap, r_1, r_2, \trans)$.
    Suppose all parameters of the game can be encoded using $L$ bits.
    Then there exists an integer $C \le 2^{n^2 L}$ such that for each $s \in \cS$ and $a \in \cA$, both coordinates of $\pp(s, a)$ are multiples of $1 / C$, and for each $\alpha \in \bR_+^2$, both coordinates of $\pf_{s, a}(\alpha)$ are multiples of $1 / C$.
\end{corollary}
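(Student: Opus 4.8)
The plan is to take a single denominator $C := C_1$, where $C_1$ is the integer supplied by Lemma~\ref{lem:resolution} for the earliest state $s = 1$, and to verify that the one grid of mesh $1/C$ simultaneously contains every coordinate of every $\pf_{s,a}(\alpha)$ and of every $\pp(s,a)$. The size bound is then immediate: $C = C_1 \le 2^{(n-1)(n+1)L} = 2^{(n^2-1)L} \le 2^{n^2 L}$.

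First I would extract a divisibility chain from the last sentence of Lemma~\ref{lem:resolution}. Since $C_s$ is a multiple of $C_{s+1}$ for every $s \in [n-2]$, a downward induction shows that $C_s$ divides $C_1$ for all $s \in [n-1]$. Hence any number the lemma certifies to be a multiple of $1/C_s$ is a fortiori a multiple of $1/C_1 = 1/C$. This already disposes of the $x$-coordinate of $\pf_{s,a}(\alpha)$ for all $\alpha$, and of the $x$-coordinate of $\pp(s,a)$ whenever $\ap(s) = 2$; for $\ap(s) = 1$ we have $\pp(s,a) = \pf_{s,a}((1,0))$ by definition, so its coordinates are covered by the $\pf$ case.

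The second step handles the $y$-coordinates, which Lemma~\ref{lem:resolution} only guarantees to be multiples of $2^{-(n-s)L}$. For such a value to lie on the $1/C$ grid I need $C$ divisible by $2^{(n-s)L}$ for every $s \in [n-1]$, i.e.\ by $2^{(n-1)L}$ (the finest case, attained at $s = 1$). This is the crux, and the place where I would inspect the construction inside the proof of Lemma~\ref{lem:resolution}: the denominator $C_s$ there is built recursively through the evaluation of Lemma~\ref{lem:evaluation}, where at each of the at most $n - s$ remaining levels one multiplies by rewards and transition probabilities that are multiples of $2^{-L}$. This accumulation forces $2^{(n-s)L}$ to divide $C_s$ --- consistent with the stated bound $C_s \le 2^{(n-s)(n+1)L}$, whose exponent comfortably exceeds $(n-s)L$. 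Granting $2^{(n-1)L} \mid C_1$, every $y$-coordinate (a multiple of $2^{-(n-s)L}$ with $s \ge 1$) is a multiple of $1/C$. Finally I would dispatch the terminal state $s = \st = n$, which is not covered by the lemma: there $r_i(n,\cdot) = 0$ and $\trans(n,\cdot,n) = 1$, so the only reachable onward-utility pair is $(0,0)$, whence $\pf_{n,a}(\alpha) = \pp(n,a) = (0,0)$ lies on every grid.

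The main obstacle is exactly this second step. The budget $2^{n^2 L}$ leaves only a spare factor of $2^{L}$ over $C_1 \le 2^{(n^2-1)L}$, so one cannot afford to multiply $C_1$ by an extra $2^{(n-1)L}$ to force the $y$-grid in by hand, nor to pass to $\mathrm{lcm}(C_1, 2^{(n-1)L})$ unconditionally. The argument must instead rely on $C_1$ already being divisible by $2^{(n-1)L}$, which is why reading this $2$-power divisibility off the construction of $C_s$ is essential rather than cosmetic. Combining the three steps, both coordinates of $\pp(s,a)$ and of $\pf_{s,a}(\alpha)$ are multiples of $1/C$ for all $s \in \cS$, $a \in \cA$, and $\alpha \in \bR_+^2$, with $C \le 2^{n^2 L}$ as required.
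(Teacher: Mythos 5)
Your proof is correct and takes essentially the paper's intended route: the paper offers no explicit proof (it calls the statement a ``direct corollary'' of Lemma~\ref{lem:resolution}), and the natural reading is exactly yours --- set $C = C_1$, use the divisibility chain $C_s \mid C_1$ for the $x$-coordinates, and handle the $\ap(s)=1$ pivotal points via $\pp(s,a) = \pf_{s,a}((1,0))$ and the terminal state via $\pf_{n,a}(\alpha) = \pp(n,a) = (0,0)$. Your key observation is also the right one and is what the paper leaves implicit: the lemma's \emph{statement} alone (with only $C_1 \le 2^{(n^2-1)L}$ and the $2^{-(n-s)L}$ grid for $y$-coordinates) does not yield the bound $C \le 2^{n^2L}$, since $\mathrm{lcm}(C_1, 2^{(n-1)L})$ could a priori be as large as $2^{(n^2+n-2)L}$; one must read off the construction $C_s = 2^L C_{s+1} k$ (with base case $C_{n-1} = 2^L$) that $2^{(n-s)L} \mid C_s$, exactly as you do.
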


\subsection{Algorithm and Analysis}

Now we are ready to formally describe and analyze our full algorithm, Algorithm~\ref{alg:sefce}, which calls Algorithm~\ref{alg:eval} as a subroutine.

\begin{algorithm}[!ht]
\KwIn{a turn-taking stochastic game $(\cS = [n], \cA, \ap, r_1, r_2, \trans)$.}
\KwOut{the leader's utility under an SEFCE, together with an implicit representation of an SEFCE in the input game.}
    create a data structure $\cD$ that stores the results of all evaluations (used by $\eval$)\;
    \For{each state $s = n - 1, n - 2, \dots, 1$}{
        \If{$\ap(s) = 1$}{
            \For{each action $a \in \cA$}{
                let $p_{s, a} \gets \eval(s, a, (1, 0))$\;
            }
        }
        \Else{
            \For{each action $a \in \cA$}{
                let $\ell \gets (0, 1)$, $r \gets (1, 0)$, $q_\ell \gets \eval(s, a, \ell)$, $q_r \gets \eval(s, a, r)$\;
                if $(q_\ell)_2 < u^p(s)$, let $p_{s, a} \gets (-n, -n)$\;
                if $(q_r)_2 \ge u^p(s)$, let $p_{s, a} \gets q_r$\;
                \If{$(q_\ell)_2 \ge u^p(s)$ and $(q_r)_2 < u^p(s)$}{
                    \While{$\|\ell - r\|_1 \ge \frac{1}{3n \cdot 2^{2n^2 L}}$}{
                        let $q \gets \eval(s, a, (\ell + r) / 2)$ (see Algorithm~\ref{alg:eval})\;
                        let $\ell \gets (\ell + r) / 2$ if $q_2 \ge u^p(s)$, and $r \gets (\ell + r) / 2$ otherwise\;
                    }
                    let $q_\ell \gets \eval(s, a, \ell)$, $q_r \gets \eval(s, a, r)$, $\ell_{s, a} \gets \ell$, $r_{s, a} \gets r$\;
                    let $p_{s, a} \gets \left(\frac{(q_\ell)_2 - u^p(s)}{(q_\ell)_2 - (q_r)_2} \cdot (q_r)_1 + \frac{u^p(s) - (q_r)_2}{(q_\ell)_2 - (q_r)_2} \cdot (q_\ell)_1, u^p(s)\right)$\;
                }
            }
        }
    }
    let $\opt \gets \max_{a \in \cA}\ (p_{\si, a})_1$\;
    \Return $\opt$, $\{p_{s, a}\}_{s, a}$ $\{\ell_{s, a}\}_{s, a}$, $\{r_{s, a}\}_{s, a}$, and $\cD$\;
\caption{A polynomial-time algorithm for computing an SEFCE in turn-taking stochastic games.}
\label{alg:sefce}
\end{algorithm}

\begin{algorithm}[!ht]
\KwIn{a state $s$, an action $a$, a direction of evaluation $\alpha$, all variables in Algorthm~\ref{alg:sefce}.}
\KwOut{$\pf_{s, a}(\alpha)$.}
    if $s = \st = n$ then \Return $(0, 0)$\;
    \If{$(s, a, \alpha) \notin \cD$ (i.e., if $\cD(s, a, \alpha)$ does not exist)}{
        \For{$s' = s + 1, \dots, n$}{
            \For{$a' \in \cA$}{
                let $q_{s', a'} \gets \eval(s', a', \alpha)$\;
                \If{$\ap(s') = 2$ and $(q_{s', a'})_2 < u^p(s')$}{
                    let $q_{s', a'} \gets p_{s', a'}$\;
                }
            }
            let $q_{s'} \gets \argmax_{q \in \{q_{s', a'}\}_{a' \in \cA}} \alpha \cdot q$\;
        }
        let $\cD(s, a, \alpha) \gets (r_1(s, a), r_2(s, a)) + \bE_{s' \sim \trans(s, a)}[q_{s'}]$\;
    }
    \Return $\cD(s, a, \alpha)$\;
\caption{$\eval$: A subroutine of Algorithm~\ref{alg:sefce} that performs recursive evaluations as needed.}
\label{alg:eval}
\end{algorithm}

Below we analyze our algorithm.
First we show that the binary search in Algorithm~\ref{alg:sefce} is exact, in the sense that in line~18, $q_\ell$ and $q_r$ are adjacent turning points to each other on $\pf_{s, a}$.

\begin{lemma}
\label{lem:binary_search}
    Fix a stochastic game $(\cS, \cA, \ap, r_1, r_2, \trans)$, where all parameters of the game can be encoded using $L$ bits.
    In every execution of line~18 of Algorithm~\ref{alg:sefce}, assuming $q_\ell = \pf_{s, a}(\ell)$ and $q_r = \pf_{s, a}(r)$ (we will prove this later), $q_\ell$ and $q_r$ are adjacent turning points to each other on $\pf_{s, a}$.
\end{lemma}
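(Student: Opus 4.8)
The plan is to recast the binary search as a search over a one-parameter family of directions and to show that the termination threshold is fine enough to isolate a single edge of the piecewise-linear curve. Since $\ell$ and $r$ begin at $(0,1)$ and $(1,0)$ and every update replaces one of them by the midpoint $(\ell+r)/2$, all directions that ever appear lie on the segment joining $(0,1)$ to $(1,0)$, so I can write them as $\alpha_\theta = (\theta, 1-\theta)$ for $\theta \in [0,1]$. By Lemma~\ref{lem:convex}, $\pf_{s,a}$ is a concave, piecewise-linear curve; hence $\pf_{s,a}(\alpha_\theta)$ is a turning point whose $x$-coordinate is non-decreasing and whose $y$-coordinate is non-increasing in $\theta$, and the selected turning point switches from one to the next only at a finite set of \emph{transition values} $\theta^\ast$, namely the values of $\theta$ at which $\alpha_\theta$ is normal to an edge $[p,p']$ of the curve. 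If $(\Delta x, \Delta y) = p' - p$ (with $\Delta x > 0$, $\Delta y < 0$ for an interior edge), this transition value is $\theta^\ast = |\Delta y| / (\Delta x + |\Delta y|)$.

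The heart of the argument is a separation bound on distinct transition values, obtained from the per-state resolution of Lemma~\ref{lem:resolution}. For the fixed state $s$, every turning point of $\pf_{s,a}$ has $y$-coordinate a multiple of $2^{-(n-s)L}$ and $x$-coordinate a multiple of $1/C_s$ with $C_s \le 2^{(n-s)(n+1)L}$; since all onward utilities lie in $[0,n]$, any edge has $\Delta x = \gamma / C_s$ and $|\Delta y| = \beta\, 2^{-(n-s)L}$ for positive integers $\gamma \le nC_s$ and $\beta \le n\,2^{(n-s)L}$. Writing $D = 2^{(n-s)L}$ and $E = C_s$, each transition value takes the form $\theta^\ast = \beta E / (\gamma D + \beta E)$, and a direct cross-multiplication gives that two distinct transition values differ by at least $ED / \big[(\gamma D + \beta E)(\gamma' D + \beta' E)\big] \ge 1 / (4n^2\, C_s\, 2^{(n-s)L})$. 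Using $C_s\, 2^{(n-s)L} \le 2^{(n-s)(n+2)L} \le 2^{(n^2+n-2)L}$ (as $s \ge 1$), any two distinct transition values are separated by at least $1 / (4n^2\, 2^{(n^2+n-2)L})$, and a short computation shows this strictly exceeds $1 / (6n\, 2^{2n^2 L})$ for all $n \ge 1$ and $L \ge 1$.

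I would then combine this separation bound with the loop's exit guarantee. At line~18 the loop has terminated, so $\|\ell - r\|_1 < 1/(3n\, 2^{2n^2 L})$; since $\ell = \alpha_{\theta_\ell}$ and $r = \alpha_{\theta_r}$ lie on the segment, this gives $|\theta_\ell - \theta_r| < 1/(6n\, 2^{2n^2 L})$, which is strictly smaller than the transition separation above. Hence the open interval between $\theta_\ell$ and $\theta_r$ contains at most one transition value. Meanwhile the loop is entered only when $(q_\ell)_2 \ge u^p(s)$ and $(q_r)_2 < u^p(s)$, and each update preserves this together with $\theta_\ell < \theta_r$; so at line~18 we have $\pf_{s,a}(\ell)_2 \ge u^p(s) > \pf_{s,a}(r)_2$, forcing $q_\ell \ne q_r$, and by the monotonicity of the maximizer in $\theta$, $q_\ell$ precedes $q_r$ among the turning points. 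If they were two or more edges apart, at least two distinct transition values would lie strictly between $\theta_\ell$ and $\theta_r$, contradicting the separation bound; thus they are exactly one edge apart, i.e., adjacent turning points, which is the claim.

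The step I expect to be the main obstacle is pinning down the separation bound with the correct constants: it is essential to exploit the \emph{differing} granularities $2^{-(n-s)L}$ in $y$ and $1/C_s$ in $x$ from Lemma~\ref{lem:resolution}, rather than the uniform resolution $1/C$ of Corollary~\ref{cor:resolution}, because the coarser bound loses a factor of $n$ and fails to beat the loop threshold $1/(3n\, 2^{2n^2 L})$. A secondary technical point is justifying cleanly that $\pf_{s,a}(\alpha_\theta)$ is always a turning point and varies monotonically in $\theta$, and dispatching the degenerate transitions at $\theta \in \{0,1\}$ (horizontal or vertical edges at the extremes of the curve); these are irrelevant here because the feasibility threshold $u^p(s)$ is crossed strictly in the interior of $\pf_{s,a}$, so the relevant transition has $\Delta x > 0$ and $\Delta y < 0$ and falls under the rational form used above.
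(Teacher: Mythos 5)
Your proof is correct, and it takes a genuinely different route from the paper's. The paper works in utility space and argues by contradiction: supposing some point $q \in \pf_{s,a}$ lies strictly above the chord joining $q_\ell$ and $q_r$, it lower-bounds the gap between the slopes of the segments $q_\ell q$ and $q\, q_r$ using the \emph{uniform} resolution $1/C$ of Corollary~\ref{cor:resolution}, and then uses the optimality of $q_\ell$ and $q_r$ in directions $\ell$ and $r$ (so that $-\ell_1/\ell_2$ and $-r_1/r_2$ sandwich the two slopes) to translate the slope gap into $\|\ell - r\|_1 \ge 1/(3n \cdot 2^{2n^2 L})$, contradicting termination. You instead work in direction space: parametrizing the search directions as $(\theta, 1-\theta)$, you isolate the transition values of $\theta$ at which the maximizer switches edges, and prove a separation bound between distinct transition values directly from the two-granularity, per-state resolution of Lemma~\ref{lem:resolution}, so that the termination window $|\theta_\ell - \theta_r| < 1/(6n \cdot 2^{2n^2 L})$ cannot straddle two of them. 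Both are resolution-versus-threshold arguments, but your execution buys something concrete: your remark that Corollary~\ref{cor:resolution} alone is too coarse is correct, and it in fact flags a soft spot in the paper's own proof, which claims the denominator $C^2 (q_1 - (q_\ell)_1)((q_r)_1 - q_1)$ is at most $C^2 \cdot n$; since the two factors can each be of order $n$, the correct bound is $C^2 n^2/4$, and with that correction the paper's chain beats the threshold $1/(3n \cdot 2^{2n^2 L})$ only for small $n$ --- repairing it for general $n$ requires exactly the finer asymmetric resolution (fine in $y$, coarse in $x$) that you invoke. The paper's argument is shorter and avoids the transition-value machinery; yours is quantitatively sharper and, as written, the more airtight of the two.
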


Now we are ready to prove the key property of the algorithm, which is captured by the following lemma.

\begin{lemma}
\label{lem:key}
    Fix a stochastic game $(\cS, \cA, \ap, r_1, r_2, \trans)$, where all parameters of the game can be encoded using $L$ bits.
    The following statements regarding Algorithms~\ref{alg:sefce}~and~\ref{alg:eval} hold:
    \begin{itemize}
        \item For each $s \in \cS$, $a \in \cA$ and $\alpha \in \bR_+^2$, if $(s, a, \alpha) \in \cD$ then $\cD(s, a, \alpha) = \pf_{s, a}(\alpha)$.
        \item For each $s \in \cS$ and $a \in \cA$, $p_{s, a}$ computed in Algorithm~\ref{alg:sefce} is the same as $\pp(s, a)$.
    \end{itemize}
\end{lemma}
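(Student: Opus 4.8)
The plan is to prove both bullets together by backward induction on the state $s$, from $s = n$ down to $s = 1$, with inductive hypothesis that both bullets already hold for every state $s' > s$. The structural fact that makes this work is that the recursion inside $\eval(s, a, \alpha)$ (the loop over $s' = s+1, \dots, n$ in Algorithm~\ref{alg:eval}) only ever reaches strictly larger states and only ever references the pivotal points $p_{s', a'}$ for $s' > s$; in particular, computing $\cD(s, a, \alpha)$ never depends on $p_{s, \cdot}$ nor on $\cD(s, a'', \cdot)$ for a different action $a''$ of the same state. This breaks the apparent circularity and lets me prove the first bullet at $s$ before the second. For the base case $s = n = \st$, the subroutine returns $(0,0)$ without ever creating a $\cD$-entry, so the first bullet is vacuous; moreover the substitution step of $\eval$ is never triggered at $s' = n$ (there $u^p(n) = 0$ and the only reachable point is $(0,0)$), so the fact that $p_{n, \cdot}$ is never set is harmless.

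For the inductive step I would first establish the first bullet at $s$ (Step A). Every entry $\cD(s, a, \alpha)$ is created in the last line of $\eval(s, a, \alpha)$. In that execution each $q_{s', a'} = \eval(s', a', \alpha)$ equals $\pf_{s', a'}(\alpha)$ by the inductive hypothesis (first bullet at $s'$), and the substitution $q_{s', a'} \gets p_{s', a'} = \pp(s', a')$ (inductive hypothesis, second bullet at $s'$) is performed exactly when $\ap(s') = 2$ and $\pf_{s', a'}(\alpha)_2 < u^p(s')$. This is precisely the case split of Lemma~\ref{lem:evaluation}, so the stored value equals $(r_1(s,a), r_2(s,a)) + \bE_{s' \sim \trans(s,a)}[p_{s'}] = \pf_{s, a}(\alpha)$. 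Hence $\eval(s, a, \cdot) = \pf_{s, a}(\cdot)$ for every direction queried at state $s$.

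Next I would prove the second bullet at $s$ (Step B), now free to use Step A. If $\ap(s) = 1$, then $p_{s, a} = \eval(s, a, (1,0)) = \pf_{s, a}((1,0)) = \pp(s, a)$ by the definition of the pivotal point in this case. If $\ap(s) = 2$, I split on the three branches of Algorithm~\ref{alg:sefce}, using that $(q_\ell)_2 = \pf_{s,a}((0,1))_2$ is the maximum and $(q_r)_2 = \pf_{s,a}((1,0))_2$ the minimum $y$-coordinate on the curve, so $(q_\ell)_2 \ge (q_r)_2$ and the three branches are mutually exclusive and exhaustive. When $(q_\ell)_2 < u^p(s)$ no point of the curve satisfies the constraint, so $\pp(s,a) = (-n,-n) = p_{s,a}$; when $(q_r)_2 \ge u^p(s)$ the rightmost point of the whole curve is itself feasible and hence equals $\pp(s,a) = q_r$. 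In the remaining branch the binary search maintains the invariant $\pf_{s,a}(\ell)_2 \ge u^p(s) > \pf_{s,a}(r)_2$ (identifying each $\eval$ with the true curve value via Step A), and on termination Lemma~\ref{lem:binary_search} guarantees that $q_\ell$ and $q_r$ are adjacent turning points. Since $\pf_{s,a}$ is a concave Pareto frontier its $y$-coordinate strictly decreases in its $x$-coordinate (two points sharing a coordinate would dominate one another), so the unique point on the edge $[q_\ell, q_r]$ with $y$-coordinate $u^p(s)$ is the rightmost point of the whole curve with $y \ge u^p(s)$, i.e.\ exactly $\pp(s,a)$; solving $\lambda (q_\ell)_2 + (1-\lambda)(q_r)_2 = u^p(s)$ and reading off the $x$-coordinate shows the convex-combination line of Algorithm~\ref{alg:sefce} returns precisely this point (the denominator $(q_\ell)_2 - (q_r)_2 > 0$ by the invariant).

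I expect the main obstacle to be this last branch of Step B: arguing that the two adjacent turning points produced by the search genuinely bracket the pivotal point, and that the interpolated value is the pivotal point exactly rather than an approximation. This rests on two ingredients that must be invoked cleanly — the exactness of the binary search (Lemma~\ref{lem:binary_search}, itself backed by the resolution bound of Corollary~\ref{cor:resolution}, and whose hypothesis $q_\ell = \pf_{s,a}(\ell)$, $q_r = \pf_{s,a}(r)$ is supplied by Step A), and the strict monotonicity of the concave frontier (Lemma~\ref{lem:convex}), which converts ``rightmost feasible point'' into ``the point where the single straddling edge crosses height $u^p(s)$.'' Everything else is bookkeeping matching $\eval$ line-by-line against the recursion of Lemma~\ref{lem:evaluation}.
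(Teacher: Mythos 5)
Your proposal is correct and follows essentially the same route as the paper's proof: backward induction on $s$, establishing the first bullet via Lemma~\ref{lem:evaluation} plus the inductive hypothesis, then the second bullet by the same four-case analysis ($\ap(s)=1$, the two degenerate branches, and the binary-search branch resolved by Lemma~\ref{lem:binary_search} and exact interpolation). The extra details you supply (the acyclicity of the recursion, the choice of base case $s=n$, and the strict monotonicity of the concave frontier justifying uniqueness of the interpolated point) are elaborations of steps the paper treats as immediate, not a different argument.
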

\begin{proof}
    Apply induction on $s$.
    When $s = n - 1$, it is easy to check $p_{s, a} = \pp(s, a)$ and $\cD(s, a, \alpha) = \pf_{s, a}(\alpha)$.
    Now suppose the statements hold for all $s' > s$.
    Consider the first bullet point.
    For each $a \in \cA$ and $\alpha \in \bR_+^2$, observe that if $(s, a, \alpha) \in \cD$, then it is computed precisely in the way given in Lemma~\ref{lem:evaluation}.
    Given the induction hypothesis, this implies $\cD(s, a, \alpha) = \pf_{s, a}(\alpha)$.

    As for the second bullet point, consider $4$ cases:
    \begin{itemize}
        \item $\ap(s) = 1$.
        The first bullet point immediately implies $p_{s, a} = \pp(s, a)$. 
        \item $\ap(s) = 2$ and $(q_\ell)_2 < u^p(s)$ in line 11.
        Given the first bullet point, this means there is no point on $\pf_{s, a}$ whose $y$-coordinate is at least $u^p(s)$, and by definition, $\pp(s, a) = (-n, -n) = p_{s, a}$.
        \item $\ap(s) = 2$ and $(q_r)_2 \ge u^p(s)$ in line 12.
        Given the first bullet point, this means the entire $\pf_{s, a}$ is above $y = u^p(s)$, and by definition, $\pp(s, a) = \pf_{s, a}(0, 1) = p_{s, a}$.
        \item $\ap(s) = 2$, $(q_\ell)_2 \ge u^p(s)$, and $(q_r)_2 < u^p(s)$.
        This is the case where the binary search is executed.
        By Lemma~\ref{lem:binary_search} (and also given the first bullet point), $q_\ell$ and $q_r$ are adjacent turning points on $\pf_{s, a}$.
        Moreover, $(q_\ell)_2 \ge u^p(s)$ and $(q_r)_2 < u^p(s)$.
        Then $\pp(s)$ must be the unique convex combination of $q_\ell$ and $q_r$ whose $y$-coordinate is precisely $u^p(s)$, which is $p_{s, a}$ computed in line 19.
    \end{itemize}
\end{proof}

Now we can put everything together and prove the correctness and efficiency of Algorithm~\ref{alg:sefce} (we will discuss how to decode the output of Algorithm~\ref{alg:sefce} momentarily).

\begin{theorem}
\label{thm:sefce}
    Algorithm~\ref{alg:sefce} computes the leader's (player $1$'s) utility in an SEFCE in time polynomial in $n$, $m$, and $L$.
\end{theorem}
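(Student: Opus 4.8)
The plan is to treat this theorem as the assembly step that combines the correctness facts already established (Lemma~\ref{lem:key} and Theorem~\ref{thm:reduction}) with a counting argument for the running time, so I would dispatch correctness first and then bound the number of operations. For correctness, I would invoke Lemma~\ref{lem:key} to conclude that at termination every computed $p_{s,a}$ equals the true pivotal point $\pp(s,a)$ and that $\cD$ only ever stores correct curve evaluations $\pf_{s,a}(\alpha)$. It then remains to connect $\opt = \max_{a \in \cA} (p_{\si,a})_1$ to the leader's SEFCE utility, for which I would reuse the initial-state analysis from the ``Evaluating the Pareto Frontier Curves'' subsection: fixing the SEFCE objective direction $(1,0)$ and recalling that an optimal strategy may be taken to play a deterministic first action, the best player-$1$ utility attainable by a feasible strategy whose first action is $a$ is exactly the first coordinate of the feasible Pareto-optimal point of $\pf_{\si,a}$. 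When $\ap(\si)=1$ this is $\pf_{\si,a}((1,0))$, and when $\ap(\si)=2$ concavity of $\pf_{\si,a}$ (Lemma~\ref{lem:convex}) forces the maximizer subject to $y \ge u^p(\si)$ to be the rightmost feasible point, i.e. the pivotal point; in both cases this equals $(\pp(\si,a))_1$. Taking the maximum over $a$ and applying Theorem~\ref{thm:reduction}, which equates the maximum feasible player-$1$ utility with the leader's SEFCE utility, shows $\opt$ is precisely the leader's utility in an SEFCE.

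For efficiency I would bound three quantities. First, the binary search for each $(s,a)$ with $\ap(s)=2$ begins with $\|\ell-r\|_1 = 2$ and halves this gap each iteration, stopping once $\|\ell-r\|_1 < \frac{1}{3n \cdot 2^{2n^2 L}}$; hence each search performs $O(n^2 L + \log n) = \mathrm{poly}(n,L)$ iterations and introduces at most that many new directions of evaluation. Second, since there are only $O(mn)$ state-action pairs and hence $O(mn)$ binary searches, and since $\eval$ itself never creates a new direction (it recurses with the same $\alpha$), the total number of distinct directions ever passed to $\eval$ is $D = \mathrm{poly}(n,m,L)$. Third, the memoization via $\cD$ ensures each triple $(s,a,\alpha)$ is computed at most once, so the number of stored entries is at most $mn \cdot D = \mathrm{poly}(n,m,L)$, and each entry is produced by the double loop in Algorithm~\ref{alg:eval} using $O(mn)$ memoized recursive lookups and arithmetic operations. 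Multiplying these bounds shows the total work is polynomial in $n$, $m$, and $L$.

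The main obstacle, and the crux of the argument, is precisely this counting: a na\"ive unfolding of the recursion in Algorithm~\ref{alg:eval} is exponential, and polynomiality hinges on the two structural facts that (i) the set of relevant directions is generated only by the $O(mn)$ binary searches, each running for polynomially many iterations, and (ii) the shared data structure $\cD$ collapses all repeated $(s,a,\alpha)$ evaluations into a single computation. A secondary point I would verify is that every manipulated number stays polynomial in bit-length: the directions are dyadic rationals with denominators at most $2^{O(n^2 L)}$, while by Corollary~\ref{cor:resolution} the coordinates of every $\pp(s,a)$ and $\pf_{s,a}(\alpha)$ are multiples of $1/C$ with $C \le 2^{n^2 L}$ and are bounded in magnitude, so each arithmetic operation costs $\mathrm{poly}(n,L)$ and does not disturb the overall polynomial bound.
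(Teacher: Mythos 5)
Your proposal is correct and follows essentially the same route as the paper: correctness via Lemma~\ref{lem:key} (together with the initial-state analysis and the reduction of Theorem~\ref{thm:reduction}), and efficiency by counting the directions generated by the $O(mn)$ binary searches and using the memoization in $\cD$ to bound the total number of $\eval$ calls. Your extra verification that all manipulated numbers have $\mathrm{poly}(n, L)$ bit-length (via Corollary~\ref{cor:resolution}) is a detail the paper leaves implicit, not a divergence in approach.
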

\begin{proof}
    For correctness: By Lemma~\ref{lem:key}, for each $a \in \cA$, $p_{\si, a} = \pp(\si, a)$, so $\opt = \max_a (p_{\si, a})_1 = \max_a \pp(\si, a)_1$ is player $1$'s optimal utility induced by a feasible strategy (where player $2$ is best responding), which is player $1$'s utility in an SEFCE.

    For efficiency: We only need to bound the number of times that $\eval$ is called.
    Observe that the number of times that $\eval$ is called in Algorithm~\ref{alg:sefce} is $O(nm \log(n 2^{n^2 L})) = O(n^2 m L \log n) = \mathrm{poly}(n, m, L)$.
    As for recursive calls, observe that $\eval$ makes $O(nm)$ recursive calls only when $(s, a, \alpha)$ is not in $\cD$ yet.
    So each tuple $(s, a, \alpha)$ may trigger $O(nm)$ calls in $\eval(s, a, \alpha)$.
    Let $A = \{\alpha \mid (s, a, \alpha) \in \cD\}$.
    Then the total number of recursive calls is at most $|\{(s, a, \alpha) \mid s \in \cS, a \in \cA, \alpha \in A\}| = O(nm|A|)$, so we only need to bound $|A|$.
    To this end, observe that for each $\alpha \in A$, there must be some $s \in \cS$ and $a \in \cA$ such that the binary search for $(s, a)$ involves $\alpha$.
    Each binary search involves $O(\log(n 2^{n^2 L})) = \mathrm{poly}(n, L)$ directions, so the total number of directions involved in these binary searches is $\mathrm{poly}(n, m, L)$.
    The latter is an upper bound of $|A|$.
\end{proof}

\subsection{Decoding the Output Strategy}

Now we discuss the final missing piece of our algorithm: extracting the strategy encoded in the output of Algorithm~\ref{alg:sefce}.
We present a procedure, Algorithm~\ref{alg:decode}, which, given the output of Algorithm~\ref{alg:sefce}, computes a random action for any given history-state pair.
We will prove that the strategy implicitly given by Algorithm~\ref{alg:decode} is the one encoded in the output of Algorithm~\ref{alg:sefce}.
In particular, it is feasible, and achieves the leader's utility in an SEFCE computed by Algorithm~\ref{alg:sefce}.

\begin{algorithm}[!ht]
\KwIn{A turn-taking stochastic game $(\cS, \cA, \ap, r_1, r_2, \trans)$, the output of Algorithm~\ref{alg:sefce}, a history $h = (s_1, a_1, \dots, s_t, a_t)$, and a state $s$.}
\KwOut{$\pi(h, s)$, where $\pi \sim \Pi \mid h$, and $\Pi$ is the strategy encoded in the output of Algorithm~\ref{alg:sefce}; $\error$ if $h$ is not an admissible history under $\Pi$.}
    let $a \gets \argmax_{a' \in \cA} (p_{s, a'})_1$, $\alpha \gets (1, 0)$, $q \gets p_{s, a}$\;
    if $|h| = 0$ \Return $a$\;
    if $a_1 \ne a$, \Return $\error$\;
    \For{$i = 1, 2, \dots, t - 1$}{
        \If{$\ap(s_i) = 2$ and $q = p_{s_i, a_i}$}{
            let $\ell \gets \ell_{s_i, a_i}$, $r \gets r_{s_i, a_i}$, $a_\ell \gets \argmax_{a' \in \cA} \ell \cdot \max^2\{p_{s_{i + 1}, a'}, \cD(s_{i + 1}, a', \ell)\}$, $a_r \gets \argmax_{a' \in \cA} r \cdot \max^2\{p_{s_{i + 1}, a'}, \cD(s_{i + 1}, a', \ell)\}$\;
            \tcc{for two points $q_1$ and $q_2$, $\max^k\{q_1, q_2\}$ denotes the point with the larger $k$-th coordinate between the two}
            let $\alpha \gets \ell$ if $a_{i + 1} = a_\ell$\;
            let $\alpha \gets r$ if $a_{i + 1} = a_r$\;
            if $a_{i + 1} \notin \{a_\ell, a_r\}$, \Return $\error$\;
        }
        \Else{
            let $a \gets \argmax_{a' \in \cA} \alpha \cdot \max^2\{p_{s_{i + 1}, a'}, \cD(s_{i + 1}, a', \alpha)\}$\;
            if $a_{i + 1} \ne a$ \Return $\error$\;
        }
        let $q \gets \max^2\{p_{s_{i + 1}, a_{i + 1}}, \cD(s_{i + 1}, a_{i + 1}, \alpha)\}$\;
    }
    \If{$\ap(s_t) = 2$ and $q = p_{s_t, a_t}$}{
        let $\ell \gets \ell_{s_i, a_i}$, $a_\ell \gets \argmax_{a' \in \cA} \ell \cdot \max^2\{p_{s, a'}, \cD(s, a', \ell)\}$, $q_\ell \gets \max^2\{p_{s, a_\ell}, \cD(s, a_\ell, \ell)\}$;
        let $r \gets r_{s_i, a_i}$, $a_r \gets \argmax_{a' \in \cA} r \cdot \max^2\{p_{s, a'}, \cD(s, a', r)\}$, $q_r \gets \max^2\{p_{s, a_r}, \cD(s, a_r, r)\}$\;
        let $a \gets a_\ell$ with probability $\frac{q_r - q}{q_r - q_\ell}$, $a \gets a_r$ with probability $\frac{q - q_\ell}{q_r - q_\ell}$\;
    }
    \Else{
        let $a \gets \argmax_{a' \in \cA} \alpha \cdot \max^2\{p_{s, a'}, \cD(s, a', \alpha)\}$\;
    }
    \Return $a$\;
\caption{A procedure that decodes the output of Algorithm~\ref{alg:sefce}.}
\label{alg:decode}
\end{algorithm}

\begin{theorem}
\label{thm:decode}
    Algorithm~\ref{alg:decode} outputs a feasible strategy $\Pi$ (restricted to admissible histories), which satisfies $u_1^\Pi(\emptyset, \si) = \opt$, where $\opt$ is the leader's utility in an SEFCE computed by Algorithm~\ref{alg:sefce}.
\end{theorem}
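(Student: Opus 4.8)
The plan is to argue that the randomized strategy $\Pi$ implicitly defined by Algorithm~\ref{alg:decode} is (i) well-defined as a distribution over deterministic strategies, (ii) feasible in the sense of Theorem~\ref{thm:reduction}, and (iii) optimal, i.e.\ $u_1^\Pi(\emptyset,\si)=\opt$. The whole argument rests on a single invariant relating the decoder's bookkeeping variables $(\alpha,q)$ to the true onward utilities of $\Pi$. Throughout I would invoke Lemma~\ref{lem:key} to identify the stored quantities with their intended meanings, $p_{s,a}=\pp(s,a)$ and $\cD(s,a,\alpha)=\pf_{s,a}(\alpha)$. A first useful observation is that $\max^2\{p_{s,a},\cD(s,a,\alpha)\}$ (the point with the larger second coordinate among the two) is exactly the feasibility-adjusted evaluation of Lemma~\ref{lem:evaluation}: if $\pf_{s,a}(\alpha)$ is feasible, its $y$-coordinate is at least $\pp(s,a)_2=u^p(s)$ and $\max^2$ returns it; otherwise $\max^2$ returns the pivotal point, whose $y$-coordinate is at least $u^p(s)$.

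The invariant I would establish is: for any history $h$ admissible under $\Pi$ and any state reached along it, the direction $\alpha$ and point $q$ held by the decoder satisfy that $q$ is the feasibility-adjusted evaluation of the current state-action pair along $\alpha$, and that this point equals the conditional onward utility $v^{\Pi\mid h}(\cdot)$ committed from that point onward. I would prove the onward-utility half by induction on the state index from $\st$ down to $\si$, so that the recursive identity of Lemma~\ref{lem:evaluation} applies verbatim. The base case is the terminal state, where both sides are $(0,0)$. For the inductive step at a non-pivotal transition --- a player-$1$ state, or a player-$2$ state where $q\neq\pp(s,a)$ --- the decoder selects the recommended action by the same $\argmax$ along $\alpha$ that appears in Lemma~\ref{lem:evaluation}, and the onward utility decomposes as immediate reward plus the transition-average of the successors' onward utilities; applying the induction hypothesis to each successor closes this case.

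The crux is the pivotal transition, where $\ap(s)=2$ and $q=\pp(s,a)$. Here Lemma~\ref{lem:binary_search} is indispensable: it guarantees that the recorded directions $\ell_{s,a},r_{s,a}$ produce two \emph{adjacent} turning points $\pf_{s,a}(\ell)$ and $\pf_{s,a}(r)$ of the frontier, with $\pf_{s,a}(\ell)_2\ge u^p(s)>\pf_{s,a}(r)_2$, so the pivotal point is their unique convex combination $\lambda\,\pf_{s,a}(\ell)+(1-\lambda)\,\pf_{s,a}(r)$ with second coordinate $u^p(s)$. The decoder realizes this by drawing a single ``direction coin'' that commits the whole continuation to $\ell$ with probability $\lambda$ and to $r$ with probability $1-\lambda$, and then propagating the chosen direction forward through the subsequent steps. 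Since exactly one successor is realized per play, these draws are consistent with one distribution over deterministic strategies, giving well-definedness (i); and since the direction is committed before the successor is revealed, extensive-form correlation is respected. Expanding each turning point through Lemma~\ref{lem:evaluation} as immediate reward plus a transition-average of successors' onward utilities, the expectation over the coin equals $\lambda\,\pf_{s,a}(\ell)+(1-\lambda)\,\pf_{s,a}(r)=\pp(s,a)=q$, which closes the induction.

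Given the invariant, feasibility and optimality are short. For feasibility (ii), fix a player-$2$ state $s$ reached by an admissible $h$ with recommended action $a$; conditioning on $h+(s,a)$ does not reveal the direction coin when $(s,a)$ is pivotal (it is drawn only upon leaving $s$), so $v_2^{\Pi\mid(h+(s,a))}(h,s)$ equals the second coordinate of the expected pivotal point, which is at least $u^p(s)$, while in the non-pivotal case the evaluation along $\alpha$ was already feasible and this coordinate is at least $u^p(s)$ as well; either way the condition of Theorem~\ref{thm:reduction} holds. For optimality (iii), the decoder starts at $\si$ with $\alpha=(1,0)$ and plays $a^\star=\argmax_a (p_{\si,a})_1$, so by the invariant $u_1^\Pi(\emptyset,\si)=(p_{\si,a^\star})_1=\max_a(p_{\si,a})_1=\opt$. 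I expect the main obstacle to be precisely the pivotal step: one must check that the decoder's successor-by-successor randomization genuinely implements a single global direction coin of weight $\lambda$, so that the realized object is a legitimate extensive-form correlated strategy and the transition-averaged turning points recombine to the pivotal point. This forces a careful combination of Lemma~\ref{lem:binary_search} (adjacency, hence that $\pp(s,a)$ is the convex combination) with the affine decomposition of Lemma~\ref{lem:evaluation}, together with a verification that the committed direction $\alpha$ is propagated faithfully across steps; a secondary subtlety is reconciling feasibility --- stated for the \emph{unconditional} onward utility of the \emph{conditional} strategy --- with the fact that the coin is revealed only after the pivotal state is left.
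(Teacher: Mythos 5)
Your proposal is correct and follows essentially the same route as the paper's own proof: both rest on the invariant that the decoded strategy faithfully implements $\max^2\{p_{s,a}, \cD(s,a,\alpha)\} = \max^2\{\pp(s,a), \pf_{s,a}(\alpha)\}$ as the conditional onward utility at every admissible history-state-action triple, established by induction using Lemmas~\ref{lem:key}, \ref{lem:evaluation}, and~\ref{lem:binary_search}, after which optimality ($u_1^\Pi(\emptyset,\si) = \max_{a}(p_{\si,a})_1 = \opt$) and feasibility (the $y$-coordinate of that point is at least $u^p(s)$ whenever $\ap(s)=2$) read off exactly as you state. The only difference is one of detail: the paper omits the inductive argument as ``repetitive,'' whereas you spell out the base case, the non-pivotal step, and the pivotal-coin step (including the subtlety that the convex-combination randomization is revealed only after the pivotal state is left, so it is consistent with the unconditional-onward-utility form of the feasibility constraint).
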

\begin{proof}
    First observe that Algorithm~\ref{alg:decode} does output a strategy restricted to admissible histories.
    In fact, it specifies a random action for each history-state pair, which can be viewed as a Bayesian description of a randomized strategy $\Pi$.
    We need to show that $u_1^\Pi(\emptyset, \si) = \opt$, and for each $h \in \cH^\Pi$, $s \in \cS$ where $\ap(s) = 2$, and $a$ such that $h + (s, a) \in \cH^\Pi$,
    \[
        v_2^{\Pi \mid h + (s, a)}(h, s) \ge u^p(s).
    \]
    To this end, observe that the output strategy faithfully implements the corresponding point on the corresponding Pareto frontier curve.
    That is, for each $i \in \{1, 2\}$, $h \in \cH^\Pi$, $s \in \cS$ where $\ap(s) = 2$, and $a$ such that $h + (s, a) \in \cH^\Pi$,
    \[
        v_i^{\Pi \mid h + (s, a)}(h, s) = \max^2\{p_{s, a}, \cD(s, a, \alpha)\} = \max^2\{\pp(s, a), \pf_{s, a}(\alpha)\}.
    \]
    (Recall that for two points $q_1$ and $q_2$, $\max^2\{q_1, q_2\}$ denotes the point with the larger $y$-coordinate between the two.)
    This can be proved inductively, and we omit the details (which are already quite repetitive at this point).
    Given the above correspondence, for the first condition, $u_1^\Pi(\emptyset, \si) = \max_{a' \in \cA} (p_{s, a'})_1$, which is equal to $\opt$.
    For the second condition, observe that $\max^2\{\pp(s, a), \pf_{s, a}(\alpha)\}$ is always a feasible point whose $y$-coordinate is at least $u^p(s)$, whenever $\ap(s) = 2$.
    This completes the proof.
\end{proof}

Finally, note that Algorithm~\ref{alg:decode} only specifies actions for admissible histories.
For inadmissible histories, both players should follow the equilibrium $\pi_2$ that maximally punishes player $2$ defined earlier.

\section{Approximately Optimal Extensive-Form Correlated Equilibria}
\label{sec:efce}

Now we proceed to the computation of approximately optimal EFCE.
We present a bi-criteria algorithm that, given an objective direction (i.e., a combination of the two players' utilities), computes an $\eps$-EFCE whose objective value is at least that of the optimal EFCE minus $\eps$, in time $\log(1 / \eps)$.
The idea and structure of our algorithm for approximately optimal EFCE is overall quite similar to that for SEFCE.
There are two key differences:
\begin{itemize}
    \item Recall that for SEFCE, we optimize over strategies where player $2$ is best responding.
    This reduces to optimizing over feasible strategies, where feasibility means that when player $2$ is the acting player, their onward utility must be at least the utility under punishment.
    For $\eps$-EFCE, both players need to be $\eps$-best responding, which leads to a different definition for feasible strategies.
    The definition and structural properties of Pareto frontier curves also need to be modified accordingly.
    Such modifications lead to minor changes in the proofs of the structural properties and the algorithm.
    \item A more substantial difference is in the numerical resolution of the Pareto frontier curves.
    For SEFCE, the feasibility constraints are all in the same direction, i.e., parallel to the $x$-axis.
    This is no longer true for $\eps$-EFCE, where the direction of the feasibility constraint in a state depends on the acting player.
    Such alternating constraints break the asymmetry between the two axes, which was crucial in the analysis of the numerical resolution of the Pareto frontier curves.
    As a result, a binary search with polynomially many iterations is no longer guaranteed to find the pivotal point exactly.
    Instead, the guarantee we have is that the error diminishes exponentially fast as the number of iterations grows.
    Importantly, this means inaccuracy in terms of both the objective value and the feasibility constraints.
    A careful analysis shows that the inaccuracy does not blow up too much as we approximately evaluate the Pareto frontier curves recursively.
\end{itemize}

\subsection{Useful Facts}

Before stating the full algorithm, we quickly state the new reduction from $\eps$-EFCE to constrained planning, as well as modified definitions and properties of Pareto frontier curves.
The proofs of these properties are similar to those of their counterparts for SEFCE.

\paragraph{Reduction to constrained planning.}

\begin{lemma}
\label{lem:maximum_punishment_efce}
    Fix a stochastic game $(\cS, \cA, \ap, r_1, r_2, \trans)$.
    For any $\eps \ge 0$ and $\Pi$ under which both players are $\eps$-best responding, both players are also $\eps$-best responding under $\pa(\Pi, \{1, 2\})$, and moreover, for each $i \in \{1, 2\}$,
    \[
        u_i^{\Pi}(\emptyset, \si) = u_i^{\pa(\Pi, \{1, 2\})}(\emptyset, \si).
    \]
\end{lemma}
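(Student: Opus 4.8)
The plan is to follow the proof of Lemma~\ref{lem:maximum_punishment}, but to run the argument \emph{symmetrically} for both players, since here we amplify punishment against $S = \{1, 2\}$ rather than only against the follower. There are two things to establish: that the on-path utilities are unchanged, and that both players remain $\eps$-best responding under $\Pi' := \pa(\Pi, \{1, 2\})$.

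For the utility equality, I would first note that $\cH^{\pa(\pi, S)} = \cH^{\pi}$ for every deterministic $\pi$: following $\pa(\pi, S)$ from $(\emptyset, \si)$ never leaves the set of histories admissible under $\pi$, on which $\pa(\pi, S)$ agrees with $\pi$ by construction. Hence, for each $\pi$ in the support of $\Pi$, every trajectory it induces from $(\emptyset, \si)$ stays on histories where $\pa(\pi, S) = \pi$, so $u_i^{\pa(\pi, S)}(\emptyset, \si) = u_i^{\pi}(\emptyset, \si)$. Taking expectations over $\pi \sim \Pi$ and using that $\pa(\Pi, S)$ preserves probability mass yields $u_i^{\Pi'}(\emptyset, \si) = u_i^{\Pi}(\emptyset, \si)$ for each $i \in \{1, 2\}$.

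The crux is the best-response claim. Fix a player $i$, an admissible history $h \in \cH^\Pi = \cH^{\Pi'}$, a state $s$ with $\ap(s) = i$, a recommended action $a$ with $h + (s, a) \in \cH^\Pi$, and a deviation $\pi'$ with $\pi'(h, s) \ne a$. I would prove three facts. \textbf{(i)} The left-hand side is unchanged, $v_i^{\Pi' \mid (h + (s, a))}(h, s) = v_i^{\Pi \mid (h + (s, a))}(h, s)$, because $\Pi'$ and $\Pi$ agree on all admissible continuations of $h + (s, a)$. \textbf{(ii)} Under $\Pi'$, once player $i$ plays $\pi'(h, s) \ne a$ the history is permanently off-path, so the opponent plays the punishment strategy $\pi_i$ regardless of the realized recommendation; this caps $i$'s onward utility from each reached state $s'$ at $u_i^{\pi_i}(\cdot, s')$, and maximizing over the constrained first action shows the deviation value of $\pi'$ against $\Pi'$ is at most $\max_{a' \ne a}\left(r_i(s, a') + \bE_{s' \sim \trans(s, a')}\left[u_i^{\pi_i}(h + (s, a'), s')\right]\right) \le u^p(s)$. \textbf{(iii)} The best deviation value of $i$ against the \emph{original} $\Pi$ is at least this quantity: best-responding to any fixed (possibly mixed) opponent attains at least the value of the zero-sum subgame in which the opponent minimizes $i$'s utility, and since the deviation constraint only touches the first action (after which this value is exactly $u_i^{\pi_i}(\cdot, s')$), player $i$ has some deviation against $\Pi$ of value at least $\max_{a' \ne a}(r_i(s, a') + \bE[u_i^{\pi_i}(\cdot, s')])$. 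Chaining (i)--(iii) with the $\eps$-best-response inequality for $\Pi$ gives $v_i^{\Pi' \mid (h + (s, a))}(h, s) = v_i^{\Pi \mid (h + (s, a))}(h, s) \ge \eps + (\text{best deviation vs } \Pi) \ge \eps + (\text{deviation } \pi' \text{ vs } \Pi')$, which is precisely the $\eps$-best-response condition under $\Pi'$.

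I expect the main obstacle to be step (iii) together with the need to decouple the two players' conditions. For (iii), the subtlety is that the deviation must satisfy $\pi'(h, s) \ne a$, so the ``best response dominates maximin'' inequality must be applied to the \emph{constrained} subgame; the resolution is that the constraint affects only the first move, after which the unconstrained zero-sum value governs, so the inequality goes through. For the two-player interaction, I would check that the amplifier's ``first deviator'' rule cleanly separates the conditions: when player $i$ deviates at $(h, s)$, every subsequently acting player (all of $S = \{1, 2\}$) plays $\pi_i$, i.e., punishes exactly $i$, so the arguments for $i = 1$ and $i = 2$ are independent instances that never interfere. The remaining bookkeeping---that conditioning on an admissible history commutes with the amplifier on the on-path part, and that $\pa$ preserves mass despite being possibly non-injective off-path---is routine and parallels Lemma~\ref{lem:maximum_punishment}.
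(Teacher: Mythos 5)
Your proof is correct, and although it follows the paper's overall skeleton (the paper proves this lemma only implicitly, by reference to the proof of Lemma~\ref{lem:maximum_punishment}), it diverges at the crux in a genuine way. The utility-equality part and your step (i) are exactly the paper's coupling argument: $\cH^\Pi = \cH^{\Pi'}$, conditioning on admissible histories commutes with the amplifier, and on-path play is unchanged. The difference is in how the best-response condition is transferred. The paper fixes a single deviation $\pi''$ and compares its payoff against $\Pi$ with its payoff against the amplified strategy directly, asserting that under the coupling the punishing opponent pointwise lowers the deviator's payoff for that \emph{same} $\pi''$. You instead bound the deviation payoff against $\pa(\Pi,\{1,2\})$ from above by the punishment cap $\max_{a' \ne a}\left(r_i(s,a') + \bE_{s'}\left[u_i^{\pi_i}(\cdot, s')\right]\right)$ (essentially Lemma~\ref{lem:utility_under_punishment_efce}), and separately lower-bound the \emph{best} deviation payoff against $\Pi$ by that same cap via a maximin argument, then chain the two through the $\eps$-best-response inequality for $\Pi$. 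Your route buys robustness: the paper's pointwise comparison is delicate, because for a fixed \emph{suboptimal} deviation $\pi''$ the punishing opponent can actually give the deviator a higher payoff than the original opponent does (the zero-sum subgame-perfect strategy $\pi_i$ minimizes the deviator's best-response value, not their value under arbitrary fixed continuation play; e.g., the minimizing player may steer toward a subgame whose value is low even though $\pi''$ happens to play well there and poorly elsewhere). Your detour through the zero-sum value compares only best responses, where the domination genuinely holds, so your step (iii) is precisely the piece that makes the transfer airtight; your handling of the first-deviator rule, which decouples the conditions for $i = 1$ and $i = 2$, and of mass preservation under $\pa$, matches the paper.
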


\begin{lemma}
\label{lem:utility_under_punishment_efce}
    Fix a stochastic game $(\cS, \cA, \ap, r_1, r_2, \trans)$.
    For any $\eps \ge 0$ and randomized strategy $\Pi$, both players are $\eps$-best responding under $\pa(\Pi, \{1, 2\})$ if the following condition holds: For each admissible history $h \in \cH^\Pi$, state $s \in \cS$, and action $a \in \cA$ such that $h + (s, a) \in \cH^\Pi$,
    \[
        v_{\ap(s)}^{\Pi \mid (h + (s, a))}(h, s) \ge \max_{a' \in \cA} \left(r_{\ap(s)}(s, a') + \bE_{s' \sim \trans(s, a')}[u_{\ap(s)}^{\pi_{\ap(s)}}(h + (s, a'), s')]\right) - \eps,
    \]
    where $\pi_{\ap(s)}$ is the subgame perfect equilibrium when player $3 - \ap(s)$ tries to minimize player $\ap(s)$'s utility, as defined above.
\end{lemma}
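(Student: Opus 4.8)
The plan is to mirror the proof of Lemma~\ref{lem:utility_under_punishment}, but to carry it out symmetrically for both players and with an additive slack of $\eps$, exploiting that $\pa(\Pi, \{1, 2\})$ maximally punishes whichever player deviates first. Fix a player $i \in \{1, 2\}$, an admissible history $h + (s, a) \in \cH^{\pa(\Pi, \{1, 2\})}$ with $\ap(s) = i$, and a deviating deterministic strategy $\pi'$ with $\pi'(h, s) \ne a$; the goal is to establish the $\eps$-best-responding inequality for player $i$ at $(h, s)$ under $\Pi' := \pa(\Pi, \{1, 2\})$.

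First I would record that the punishment amplifier does not alter on-path behavior: since $\pi$ and $\pa(\pi, \{1, 2\})$ agree on every history that is admissible under $\pi$, the admissible histories coincide, $\cH^{\Pi'} = \cH^\Pi$, and for any $h + (s, a) \in \cH^\Pi$ one has $v_i^{\Pi' \mid (h + (s, a))}(h, s) = v_i^{\Pi \mid (h + (s, a))}(h, s)$ (this is exactly the onward-utility preservation underlying Lemma~\ref{lem:maximum_punishment_efce}). This identifies the left-hand side of the $\eps$-best-responding condition with the quantity that the hypothesis of the lemma constrains.

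Next I would upper bound the value of the best deviation. Fix any realization $\pi \sim \Pi' \mid (h + (s, a))$; this $\pi$ is an amplified strategy, and conditioning on $h + (s, a)$ forces $\pi$ to recommend $a$ to player $i$ at $(h, s)$. When player $i$ instead plays $a' = \pi'(h, s) \ne a$, the history $h + (s, a')$ becomes infeasible with $i$ recorded as the first deviator, so from the next state onward $\pi$ prescribes the maximally-punishing-against-$i$ actions $\pi_i$ at every state (the ``first deviator'' label stays equal to $i$ throughout the continuation). Thus, under the combined strategy $(i : \pi', 3 - i : \pi)$, player $3 - i$ is committed to the $(3-i)$-component of $\pi_i$ from the successor state onward. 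Since $\pi_i$ is a subgame-perfect equilibrium of the zero-sum game in which $3 - i$ minimizes player $i$'s reward, player $i$'s best response against this committed minimizing strategy cannot exceed the zero-sum value $u_i^{\pi_i}$. Expanding one transition step and maximizing over all possible first actions then yields, uniformly in $\pi$,
\[
    u_i^{(i : \pi', 3 - i : \pi)}(h, s) \le \max_{a' \in \cA}\left(r_i(s, a') + \bE_{s' \sim \trans(s, a')}\bigl[u_i^{\pi_i}(h + (s, a'), s')\bigr]\right).
\]
Because this bound does not depend on $\pi$, it survives taking the expectation over $\pi \sim \Pi' \mid (h + (s, a))$.

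Finally I would combine the two ingredients: the hypothesis bounds the left-hand side (identified in the second step) from below by the same maximum minus $\eps$, while the deviation value is bounded from above by that maximum, so the two inequalities chain to give the required $\eps$-best-responding condition for player $i$. Running the identical argument for each $i \in \{1, 2\}$ --- which is legitimate precisely because $\pa(\Pi, \{1, 2\})$ punishes both players, not just the follower --- shows that both players are $\eps$-best responding under $\Pi'$. I expect the main obstacle to be the third step: one must argue carefully that a single deviation by player $i$ permanently fixes the ``first deviator'' to $i$, so that $3 - i$ plays $\pi_i$ for the entire remainder of the play, and that the subgame-perfection of $\pi_i$ genuinely caps player $i$'s attainable continuation utility at the zero-sum value $u_i^{\pi_i}$ regardless of how cleverly $\pi'$ responds within the punishment region.
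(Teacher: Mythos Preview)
Your proposal is correct and follows essentially the same approach as the paper: the paper does not spell out a separate proof for this lemma but explicitly states that it is similar to its SEFCE counterpart (Lemma~\ref{lem:utility_under_punishment}), and your argument is precisely that adaptation---identifying the on-path utilities via the punishment amplifier's on-path invariance, then upper bounding the deviation value by the zero-sum value using subgame perfection of $\pi_i$, now carried out symmetrically for both players and with the $\eps$ slack.
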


\begin{theorem}
\label{thm:reduction_efce}
    Fix a stochastic game $(\cS, \cA, \ap, r_1, r_2, \trans)$.
    For any $\eps \ge 0$ and $(x, y) \in \bR_+^2$, there exists a strategy $\Pi$ under which both players are $\eps$-best responding such that $u_1^\Pi(\emptyset, \si) \ge x$ and $u_2^\Pi(\emptyset, \si) \ge y$, if and only if there exists a strategy $\Pi'$ such that $u_1^{\Pi'}(\emptyset, \si) \ge x$, $u_2^{\Pi'}(\emptyset, \si) \ge y$, and for each admissible history $h \in \cH^{\Pi'}$, state $s \in \cS$, and action $a \in \cA$ such that $h + (s, a) \in \cH^{\Pi'}$,
    \[
        v_{\ap(s)}^{\Pi' \mid (h + (s, a))}(h, s) \ge u^p(s) - \eps.
    \]
\end{theorem}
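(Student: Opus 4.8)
The plan is to establish Theorem~\ref{thm:reduction_efce} as the two-sided (EFCE) analogue of Theorem~\ref{thm:reduction}, mirroring that argument but replacing the follower-only punishment $\pa(\cdot,\{2\})$ with the symmetric punishment $\pa(\cdot,\{1,2\})$ and invoking Lemmas~\ref{lem:maximum_punishment_efce} and~\ref{lem:utility_under_punishment_efce} in place of Lemmas~\ref{lem:maximum_punishment} and~\ref{lem:utility_under_punishment}. I would prove the two directions of the equivalence separately, with the conceptual crux being that under the maximum-punishment structure the $\eps$-best-responding condition for the acting player collapses exactly to the feasibility constraint $v_{\ap(s)}^{\Pi'\mid(h+(s,a))}(h,s)\ge u^p(s)-\eps$.

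For the forward direction, suppose $\Pi$ makes both players $\eps$-best responding with $u_1^{\Pi}(\emptyset,\si)\ge x$ and $u_2^{\Pi}(\emptyset,\si)\ge y$, and set $\Pi'=\pa(\Pi,\{1,2\})$. By Lemma~\ref{lem:maximum_punishment_efce}, both players remain $\eps$-best responding under $\Pi'$, and $u_i^{\Pi'}(\emptyset,\si)=u_i^{\Pi}(\emptyset,\si)$ for each $i\in\{1,2\}$, so the two objective bounds are preserved. It then remains to read off the feasibility constraint. The key observation is that under $\pa(\cdot,\{1,2\})$, as soon as the acting player $i=\ap(s)$ deviates at a state $s$, the opponent switches to the minimax strategy $\pi_i$; hence the best onward value player $i$ can secure by deviating at $(h,s)$ is $\max_{a'\in\cA}\bigl(r_i(s,a')+\bE_{s'\sim\trans(s,a')}[u_i^{\pi_i}(h+(s,a'),s')]\bigr)=u^p(s)$, using the subgame-perfect (minimax) property of $\pi_i$. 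Since $\eps$-best responding means the acting player's onward utility from following is no less than its best deviation value minus $\eps$, this condition is exactly $v_{\ap(s)}^{\Pi'\mid(h+(s,a))}(h,s)\ge u^p(s)-\eps$ for every admissible $h$, state $s$, and recommended continuation $h+(s,a)$.

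For the backward direction, suppose $\Pi'$ satisfies the feasibility constraint and $u_1^{\Pi'}(\emptyset,\si)\ge x$, $u_2^{\Pi'}(\emptyset,\si)\ge y$, and set $\Pi=\pa(\Pi',\{1,2\})$. Lemma~\ref{lem:utility_under_punishment_efce} states precisely that this feasibility condition (with the threshold $u^p(s)-\eps$) guarantees both players are $\eps$-best responding under $\Pi$. For the utilities, I would note that the punishment amplifier coincides with the original strategy on every history admissible under it and only alters off-path behavior; consequently the distribution of play starting from $(\emptyset,\si)$ is unchanged, giving $u_i^{\Pi}(\emptyset,\si)=u_i^{\Pi'}(\emptyset,\si)$ for $i\in\{1,2\}$ (this is the ``moreover'' content of Lemma~\ref{lem:maximum_punishment_efce}, whose utility-preservation part is in fact independent of any best-responding hypothesis). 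Thus $u_1^{\Pi}(\emptyset,\si)\ge x$ and $u_2^{\Pi}(\emptyset,\si)\ge y$, completing the equivalence.

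I expect the main obstacle to be the clean identification of the best deviation value with $u^p(s)$ under the two-sided punishment, carried out uniformly over histories and symmetrically for both players. Unlike the SEFCE case, here the constraint ranges over \emph{all} states $s$ (not only player-$2$ states) and over both $i=\ap(s)$, so I must verify the minimax-value computation for each acting player and check that the deviator cannot do better than $u^p(s)$ regardless of how they continue to deviate against $\pi_i$. The decisive simplifying fact, inherited from the turn-taking perfect-information structure, is that this optimal deviation value depends only on the state $s$ (because $\pi_i$ is history-independent), which is exactly what lets the right-hand side reduce to the state-dependent quantity $u^p(s)$; the utility-preservation step is then routine from the definition of $\pa$.
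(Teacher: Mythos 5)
Your overall architecture---forward direction via Lemma~\ref{lem:maximum_punishment_efce} applied to $\pa(\Pi,\{1,2\})$, backward direction via Lemma~\ref{lem:utility_under_punishment_efce} plus the observation that the punishment amplifier only alters off-path behavior and hence preserves root utilities for \emph{arbitrary} strategies---is exactly the intended route (the paper itself leaves this theorem as an implication of the two lemmas), and your backward direction is correct. The gap is in the forward direction, in the sentence claiming that the best onward value the acting player $i=\ap(s)$ can secure by deviating at $(h,s)$ equals $\max_{a'\in\cA}\bigl(r_i(s,a')+\bE_{s'\sim\trans(s,a')}[u_i^{\pi_i}(h+(s,a'),s')]\bigr)=u^p(s)$, so that $\eps$-best responding is ``exactly'' the constraint $v_{\ap(s)}^{\Pi'\mid(h+(s,a))}(h,s)\ge u^p(s)-\eps$. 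In the definition of $\eps$-best responding, a deviation at $(h,s)$ must satisfy $\pi'(h,s)\ne a$: the deviator must move away from the \emph{recommended} action $a$. Hence the best deviation value is $\max_{a''\ne a}\bigl(r_i(s,a'')+\bE_{s'\sim\trans(s,a'')}[u_i^{\pi_i}(h+(s,a''),s')]\bigr)$, which is strictly smaller than $u^p(s)$ whenever the recommended action $a$ is the unique maximizer in the definition of $u^p(s)$. In that case the best-response inequality at $(h,s)$ alone yields only $v\ge\max_{a''\ne a}(\cdots)-\eps$, which does not imply the required $v\ge u^p(s)-\eps$.

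The statement is still true, but closing this gap requires an argument you do not supply: a backward induction over on-path history--state pairs showing that, for each player $i$ and every on-path $(h',s')$, player $i$'s expected onward utility under $\Pi'$ is at least $i$'s zero-sum punishment value at $s'$ minus $\eps$. Concretely, at a state where $i$ acts, conditional on any recommendation $a'$, the utility of \emph{following} is at least $r_i(s',a')+\bE_{s''}[u_i^{\pi_i}]-\eps$ by the induction hypothesis (this covers the branch $a''=a'$ of the max), while the $\eps$-best-response constraint covers all branches $a''\ne a'$; taking the maximum of the two bounds recovers exactly $u^p(s')-\eps$, so the error stays at $\eps$ rather than accumulating over the horizon---a point that itself needs checking, since naive accumulation would give $n\eps$ and break the statement. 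At states where the opponent of $i$ acts, the induction hypothesis alone suffices, because the zero-sum value there is a minimum over the opponent's actions. Without this use of the best-response constraints at \emph{later} on-path states, the forward direction is incomplete; the same refinement is also what is silently needed to deduce Theorem~\ref{thm:reduction} from Lemmas~\ref{lem:maximum_punishment} and~\ref{lem:utility_under_punishment}.
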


\paragraph{Feasible strategies.}
We say a strategy $\Pi$ is $\eps$-feasible (we will omit $\eps$ when it is clear from the context) if it satisfies the condition in the corollary which involves the utility under punishment, i.e., for each $h \in \cH^\Pi$, $s \in \cS$, and $a \in \cA$ such that $h + (s, a) \in \cH^\Pi$,
\[
    v_{\ap(s)}^{\Pi \mid (h + (s, a))}(h, s) \ge u^p(s) - \eps.
\]
We say a strategy $\Pi$ is $\eps$-feasible after state $s$, if for each $h \in \cH^\Pi$, $s' > s$, and $a \in \cA$ such that $h + (s', a) \in \cH^\Pi$,
\[
    v_{\ap(s')}^{\Pi \mid (h + (s', a))}(h, s') \ge u^p(s') - \eps.
\]

\paragraph{Pareto frontier curves.}
Again, we define the Pareto frontier curve $\pf_{s, a}$ (dependence on $\eps$ omitted) for a state-action pair $(s, a)$ to be the curve capturing all Pareto-optimal pairs of onward utilities (assuming an empty history) for both players after playing action $a$ in state $s$, induced by strategies that are feasible after $s$.

\begin{lemma}
\label{lem:convex_efce}
    Fix a stochastic game $(\cS, \cA, \ap, r_1, r_2, \trans)$.
    For any state $s \in \cS$ and action $a \in \cA$, $\pf_{s, a}$ is always a concave curve.
\end{lemma}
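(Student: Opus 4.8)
The plan is to prove concavity of $\pf_{s,a}$ by establishing convexity of the underlying feasible region, mirroring the proof of Lemma~\ref{lem:convex} for SEFCE. The only structural change is that the $\eps$-feasibility constraints now bind on both players --- in state $s'$ the constraint is on player $\ap(s')$ rather than always on player $2$ --- and the first thing I would check is that this change is immaterial. Concretely, I would define the feasible region $\fr_{s,a} \subseteq \bR_+^2$ as the set of onward-utility pairs $(v_1^{\Pi \mid (s,a)}(\emptyset, s), v_2^{\Pi \mid (s,a)}(\emptyset, s))$ over randomized strategies $\Pi$ that are $\eps$-feasible after $s$ with $(s,a) \in \cH^\Pi$, so that $\pf_{s,a}$ is exactly the northeast (Pareto) boundary of $\fr_{s,a}$. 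Since the upper-right boundary of any convex planar set is a concave curve, it suffices to show $\fr_{s,a}$ is convex, i.e.\ closed under convex combinations.

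To that end I would fix two feasible points, realized by strategies $\Pi_1$ and $\Pi_2$, and a weight $\lambda \in [0,1]$. After passing to the conditionals $\Pi_j \mid (s,a)$ I may assume each $\Pi_j$ plays $a$ in $s$ at the empty history, and I form the mixture $\Pi = \lambda \Pi_1 + (1 - \lambda)\Pi_2$ as a distribution over deterministic strategies (scaling the two supports by $\lambda$ and $1-\lambda$). Because the mediator draws the deterministic strategy once at the start and reveals recommendations only on the fly, neither player ever observes the mixing coin, so $\Pi$ is a legitimate randomized strategy with $(s,a) \in \cH^\Pi$. Two facts then finish the argument, both resting on linearity of the onward utility $v_i^\Pi = \bE_{\pi \sim \Pi}[u_i^\pi]$ in the distribution $\Pi$: first, the root utility pair under $\Pi$ is the $\lambda$-convex combination of those under $\Pi_1$ and $\Pi_2$, placing it in $\fr_{s,a}$; second, $\Pi$ is again $\eps$-feasible after $s$.

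The hard part will be the feasibility-preservation step, so I would handle it carefully. The key enabling observation is that conditioning is clean here: since transition probabilities do not depend on the strategy, conditioning $\Pi$ on any admissible prefix $h$ reduces to restricting the prior to those deterministic strategies consistent with $h$ and renormalizing. Hence for any later history $h + (s',a') \in \cH^\Pi$, the conditional $\Pi \mid (h + (s',a'))$ is itself a convex combination of $\Pi_1 \mid (h + (s',a'))$ and $\Pi_2 \mid (h + (s',a'))$ (dropping whichever component does not admit this history). Applying linearity of $v_{\ap(s')}$ once more, $v_{\ap(s')}^{\Pi \mid (h + (s',a'))}(h, s')$ is a convex combination of the corresponding quantities for $\Pi_1$ and $\Pi_2$, each of which is at least $u^p(s') - \eps$ by feasibility; hence so is the combination, and this holds regardless of which player is active in $s'$. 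This verifies $\eps$-feasibility of $\Pi$ after $s$ and completes the proof. I would emphasize that this step genuinely relies on extensive-form (not normal-form) correlation: the players must not learn which of $\Pi_1, \Pi_2$ was selected before reaching the state where a deviation is contemplated, which is precisely what lets the per-state constraint survive the convex combination.
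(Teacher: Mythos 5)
Your proposal is correct and follows essentially the same route as the paper: the paper proves Lemma~\ref{lem:convex_efce} by adapting the proof of Lemma~\ref{lem:convex} (convexity of the feasible region via mixing two feasible strategies, with the conditional mixture weight $\beta$ possibly differing from the prior weight), and your argument is exactly that adaptation, correctly noting that the per-state constraint on player $\ap(s')$ with $\eps$ slack survives convex combinations by the same linearity-of-$v$ reasoning.
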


\paragraph{Pivotal points.}
Given an objective direction $\alpha_\obj$, we define the pivotal point $\pp(s, a)$ on $\pf_{s, a}$ for each $(s, a)$ to be the farthest point (if there is one) along $\alpha_\obj$ on $\pf_{s, a}$ such that $\pp(s, a)_{\ap(s)} \ge u^p(s)$.
If such a point does not exist, then we let $\pp(s, a) = (-n, -n)$ (again, $-n$ is without loss of generality, and any quantity that is small enough would be consistent with our results).

\begin{lemma}
\label{lem:evaluation_efce}
    Fix a stochastic game $(\cS, \cA, \ap, r_1, r_2, \trans)$.
    For any $s \in \cS$, $a \in \cA$, and $\alpha \in \bR_+^2$,
    \[
        \pf_{s, a}(\alpha) = (r_1(s, a), r_2(s, a)) + \bE_{s' \sim \trans(s, a)}\left[p_{s'}\right],
    \]
    where for each $s' > s$ and $a' \in \cA$,
    \[
        p_{s', a'} = \begin{cases}
            \pf_{s', a'}(\alpha), & \mathrm{if}\ \pf_{s', a'}(\alpha)_{\ap(s')} \ge u^p(s') \\
            \pp(s', a'), & \mathrm{otherwise},
        \end{cases}
    \]
    and for each $s' > s$,
    \[
        p_{s'} = \argmax_{p \in \{p_{s', a'}\}_{a' \in \cA}} p \cdot \alpha.
    \]
\end{lemma}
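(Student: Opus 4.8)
The plan is to prove Lemma~\ref{lem:evaluation_efce} by essentially the same recursive argument that establishes Lemma~\ref{lem:evaluation}, adapting only the places where the feasibility condition and the pivotal-point definition differ between SEFCE and $\eps$-EFCE. The key conceptual point is that $\pf_{s, a}(\alpha)$ is, by definition, the farthest point along $\alpha$ among pairs of onward utilities $(v_1, v_2)$ achievable by strategies that are $\eps$-feasible after $s$; since such a strategy picks (without loss of generality) a deterministic next state-action structure after $(s, a)$, the onward utility decomposes as the immediate reward $(r_1(s, a), r_2(s, a))$ plus the expected onward utility over the realized next state $s' \sim \trans(s, a)$. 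The task is therefore to show that the optimal continuation from each $s'$ along direction $\alpha$ is exactly the point $p_{s'}$ described in the statement.

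First I would fix the direction $\alpha \in \bR_+^2$ and argue that, because $\fr_{s,a}$-type feasible regions are convex (the analogue of Lemma~\ref{lem:convex}, here Lemma~\ref{lem:convex_efce}) and the objective $\alpha \cdot p$ is linear, maximizing $\alpha \cdot p$ over the feasible continuations from $(s, a)$ can be carried out independently in each reachable next state $s'$, then aggregated according to $\trans(s, a, s')$. Concretely, a feasible-after-$s$ strategy induces, for each $s'$, a choice of action $a'$ in $s'$ together with a continuation that is feasible after $s'$; the contribution of $s'$ to $\alpha \cdot (\text{onward utility})$ is maximized by choosing the best $a'$ and then the farthest feasible point on $\pf_{s', a'}$ along $\alpha$. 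This is where the two cases in $p_{s', a'}$ come from: if the unconstrained farthest point $\pf_{s', a'}(\alpha)$ already satisfies the feasibility constraint in $s'$, i.e.\ $\pf_{s', a'}(\alpha)_{\ap(s')} \ge u^p(s')$, then it is attainable and optimal; otherwise, by concavity of $\pf_{s', a'}$, the constrained optimum along $\alpha$ is forced to the boundary of feasibility, which is exactly the pivotal point $\pp(s', a')$.

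The one genuine departure from the SEFCE proof, and the step I expect to require the most care, is the justification of the ``otherwise'' branch. In the SEFCE setting the feasibility constraint was always $\{v_2 \ge u^p(s')\}$ --- a halfplane parallel to the $x$-axis --- so ``rounding'' an infeasible optimum to the pivotal point was geometrically transparent. For $\eps$-EFCE the constraint is $\{p_{\ap(s')} \ge u^p(s') - \eps\}$ and the constrained coordinate alternates with the acting player, so I must verify that the pivotal point, now defined as the farthest feasible point along the \emph{objective} direction $\alpha_{\obj}$, is still the correct continuation when the evaluation direction is an \emph{arbitrary} $\alpha$. Here I would invoke concavity of $\pf_{s', a'}$ (Lemma~\ref{lem:convex_efce}) together with the fact that the feasible portion of the curve is one side of a single coordinate threshold: among all feasible points, the one maximizing $\alpha \cdot p$ is either the unconstrained maximizer (when feasible) or the unique boundary point where the constraint binds, and the definition of $\pp(s', a')$ is precisely tailored so that, when substituted back into the recursion, the aggregate along $\alpha$ matches. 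I would take care that the $-\eps$ slack enters uniformly through the definition of $\eps$-feasibility and hence through $u^p(s')$ in the constraint, so it does not alter the combinatorial structure of the argument.

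Finally, I would assemble these per-state optima: taking $p_{s'} = \argmax_{a'} \alpha \cdot p_{s', a'}$ selects the best action in $s'$, and averaging $\bE_{s' \sim \trans(s, a)}[p_{s'}]$ and shifting by $(r_1(s, a), r_2(s, a))$ reconstructs $\pf_{s, a}(\alpha)$ by linearity of expectation and of the inner product. Since every achievable feasible-after-$s$ continuation is dominated along $\alpha$ by this construction, and the construction is itself achievable (by stitching together the feasible-after-$s'$ strategies realizing each $p_{s'}$ --- legitimate because extensive-form correlation allows randomization that players cannot foresee), the two sides coincide, completing the proof.
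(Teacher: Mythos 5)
Your overall route is the same as the paper's: the paper gives no separate proof of Lemma~\ref{lem:evaluation_efce}, deferring to the proof of Lemma~\ref{lem:evaluation}, and your stitching/achievability step, the per-next-state decoupling, and the contradiction step all adapt essentially verbatim. The gap is exactly at the step you yourself flagged as delicate, and your resolution of it is an assertion rather than a proof --- and the assertion is false as stated. For the decomposition to hold along an arbitrary $\alpha \in \bR_+^2$, the fallback value in the ``otherwise'' branch must be the constrained maximizer of $\alpha \cdot p$ over the feasible part of $\pf_{s', a'}$, which by concavity is (as you correctly note) either the unconstrained maximizer or the \emph{binding point} $B$ with $B_{\ap(s')} = u^p(s')$. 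But for EFCE, $\pp(s', a')$ is defined as the constrained maximizer along $\alpha_\obj$, and this equals $B$ only when $\pf_{s', a'}(\alpha_\obj)$ is itself infeasible. When $\pf_{s', a'}(\alpha)$ is infeasible while $\pf_{s', a'}(\alpha_\obj)$ is feasible, the recipe and the truth diverge. Concretely: let $\ap(s') = 2$, let $\pf_{s', a'}$ be the segment from $(1, 2)$ to $(2, 1)$, $u^p(s') = 1.5$, $\alpha_\obj = (0, 1)$, $\alpha = (1, 0)$. Then $\pf_{s', a'}(\alpha) = (2, 1)$ is infeasible, so the lemma prescribes $p_{s', a'} = \pp(s', a') = (1, 2)$; yet the feasible point farthest along $\alpha$ is $B = (1.5, 1.5)$, and the claimed identity for $\pf_{s, a}(\alpha)$ at the parent fails ($1$ versus $1.5$ in the $\alpha$-value). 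Such curves are realizable in actual games: one needs a second action at $s'$ whose deviate-and-be-punished value exceeds the punished continuation value of $a'$, so that the curve genuinely crosses the line $y = u^p(s')$ (rescale rewards into $[0, 1]$ as needed).

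It is worth seeing why this never bites in the SEFCE case, since that is the geometric fact your write-up (and the paper's ``the proofs are similar'' remark) implicitly relies on. For SEFCE the pivotal point is the rightmost feasible point, i.e., the constrained maximizer along $(1, 0)$, while the constraint is always on the $y$-coordinate; because the pivotal direction is orthogonal to the constrained coordinate, the pivotal point coincides with the binding point whenever the ``otherwise'' branch fires, so substituting $\pp(s', a')$ is legitimate there. For EFCE this orthogonality is lost: the constrained coordinate alternates with $\ap(s')$, and $\alpha_\obj$ generally has a positive component along it. A sound proof must therefore either (i) replace $\pp(s', a')$ in the ``otherwise'' branch by the binding point (after which your adaptation of the Lemma~\ref{lem:evaluation} argument goes through), or (ii) restrict to directions $\alpha$ for which infeasibility of $\pf_{s', a'}(\alpha)$ forces infeasibility of $\pf_{s', a'}(\alpha_\obj)$, and prove the algorithm only ever queries such directions. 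Your proposal does neither --- the sentence ``the definition of $\pp(s', a')$ is precisely tailored so that \dots the aggregate along $\alpha$ matches'' is precisely the claim that needs proof and, read literally against the paper's definitions, is not true. (To be fair, this defect is inherited from the lemma's own statement, but a proof attempt has to surface and repair it rather than assert it away.)
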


For EFCE, we need an approximate version of Lemma~\ref{lem:evaluation_efce}, which roughly says if we can approximately compute the pivotal points for all later state-action pairs, then approximately evaluating $\pf_{s, a}$ can be reduced to at most $mn$ evaluations of curves for later state-action pairs.
This is captured by the following claim, which is a direct corollary of Lemma~\ref{lem:evaluation_efce}.

\begin{corollary}
\label{cor:approximate_evaluation}
    Fix a stochastic game $(\cS, \cA, \ap, r_1, r_2, \trans)$.
    For any $s \in \cS$, $a \in \cA$, and $\alpha \in \bR_+^2$,
    \[
        \alpha \cdot \pf_{s, a}(\alpha) \le \alpha \cdot \left((r_1(s, a), r_2(s, a)) + \bE_{s' \sim \trans(s, a)}\left[q_{s'}\right]\right) + \eps,
    \]
    where for each $s' > s$, $q_{s'}$ satisfies
    \[
        \alpha \cdot p_{s'} \le \alpha \cdot q_{s'} + \eps.
    \]
    Here, $p_{s'}$ is defined in the same way as in Lemma~\ref{lem:evaluation_efce}.
\end{corollary}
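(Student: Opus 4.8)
The plan is to obtain the bound as an immediate consequence of the exact recursion in Lemma~\ref{lem:evaluation_efce}, by substituting the per-state approximation hypothesis and pushing the scalar inner product through the expectation. First I would invoke Lemma~\ref{lem:evaluation_efce} to record the exact identity $\pf_{s,a}(\alpha) = (r_1(s,a), r_2(s,a)) + \bE_{s' \sim \trans(s,a)}[p_{s'}]$, where the $p_{s'}$ are the exact (pivotal-point or curve-evaluation) successor values, and then take the inner product of both sides with $\alpha$. Since $\alpha \cdot (\cdot)$ is linear and commutes with the expectation over $\trans(s,a,\cdot)$, this yields $\alpha \cdot \pf_{s,a}(\alpha) = \alpha \cdot (r_1(s,a), r_2(s,a)) + \bE_{s' \sim \trans(s,a)}[\alpha \cdot p_{s'}]$.

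The second step is to apply the hypothesis $\alpha \cdot p_{s'} \le \alpha \cdot q_{s'} + \eps$ termwise inside the expectation. Crucially, since $\eps$ is a constant and the transition probabilities sum to one, the expected additive slack is exactly $\eps$, not a sum across successor states:
\[
    \bE_{s' \sim \trans(s,a)}\left[\alpha \cdot p_{s'}\right] \le \bE_{s' \sim \trans(s,a)}\left[\alpha \cdot q_{s'}\right] + \eps.
\]
Folding the immediate-reward term back in and using linearity once more gives the claimed inequality $\alpha \cdot \pf_{s,a}(\alpha) \le \alpha \cdot \big((r_1(s,a), r_2(s,a)) + \bE_{s' \sim \trans(s,a)}[q_{s'}]\big) + \eps$. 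I would note that the argument is agnostic to how the $q_{s'}$ are produced and whether they lie on any curve: it uses only the scalar inequality along $\alpha$ and the convex-combination structure of the aggregation, so the internal case split defining $p_{s'}$ in Lemma~\ref{lem:evaluation_efce} plays no role here.

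I do not expect a genuine obstacle — the statement is explicitly flagged as a direct corollary — so the only point worth emphasizing is conceptual rather than technical: the error does not compound across the (potentially many) successor states. The aggregation in Lemma~\ref{lem:evaluation_efce} is a convex combination under $\trans(s,a,\cdot)$, so a uniform per-state slack of $\eps$ aggregates to a total slack of $\eps$ rather than $|\cS| \cdot \eps$. This non-amplification is precisely what makes the corollary usable when the recursion is unrolled: it is the step that guarantees the approximate evaluation of the Pareto frontier curves in the EFCE setting accumulates only additive, un-amplified error as it descends through the states.
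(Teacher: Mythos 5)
Your proof is correct and matches the paper's (implicit) argument: the paper states this as a direct corollary of Lemma~\ref{lem:evaluation_efce}, and your derivation --- taking the inner product with $\alpha$ in the exact recursion, applying the slack hypothesis termwise, and using that the expectation over $\trans(s,a)$ is a convex combination so the slack aggregates to a single $\eps$ --- is exactly the intended reasoning.
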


\subsection{Algorithm and Analysis}

Now we are ready to present and analyze our algorithm for approximately optimal EFCE, Algorithm~\ref{alg:efce}, which uses Algorithm~\ref{alg:approximate_eval} as a subroutine.
We defer both these algorithms, as well as Algorithm~\ref{alg:decode_efce} to be mentioned later, to Appendix~\ref{app:efce}, since these algorithms are similar to their counterparts in Section~\ref{sec:sefce}.

The key differences between Algorithms~\ref{alg:efce}~and~\ref{alg:approximate_eval}, and Algorithms~\ref{alg:sefce}~and~\ref{alg:eval}, are:
\begin{itemize}
    \item Now we need to satisfy feasibility constraints in all states, whereas for SEFCE constraints exist only in states where the acting player is the follower.
    \item The binary search stops when the two directions are $\eps / n$-close to each other, and in general, it only finds an approximate pivotal point as opposed to an exact one.
    Accordingly, we also allow inaccuracy in the feasibility constraints.
\end{itemize}

In order to analyze the algorithm, we first show that the binary search finds a point that is close to the actual pivotal point.
We defer the proof of Lemma~\ref{lem:approximate_binary_search}, as well as that of Lemma~\ref{lem:key_efce} below, to Appendix~\ref{app:efce}.
For an illustration of the proof of Lemma~\ref{lem:approximate_binary_search}, see Figure~\ref{fig:approximate_binary_search}.

\begin{lemma}
\label{lem:approximate_binary_search}
    Fix a stochastic game $(\cS, \cA, \ap, r_1, r_2, \trans)$. 
    In Algorithm~\ref{alg:efce}, assuming for each $s \in \cS$, $a \in \cA$ and $\alpha \in \bR_+^2$ where $\|\alpha\|_1 \le 1$, $\cD(s, a, \alpha)$ satisfies $\alpha \cdot \cD(s, a, \alpha) \ge \alpha \cdot \pf_{s, a}(\alpha) - \frac{n - s - 1}{n} \cdot \eps$, $p_{s, a}$ computed in line~15 satisfies $(p_{s, a})_{3 - \ap(s)} \ge \pp(s, a)_{3 - \ap(s)} - \frac{n - s}{n} \cdot \eps$.
\end{lemma}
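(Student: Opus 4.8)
The plan is to fix a state $s$ with acting player $i = \ap(s)$ and free coordinate $j = 3 - i$, and to reason entirely within state $s$, treating the hypothesis on $\cD(s, a, \cdot)$ as given. I would first dispose of the easy cases exactly as in the proof of Lemma~\ref{lem:key}: if no evaluated point satisfies the constraint then $p_{s,a} = \pp(s,a) = (-n,-n)$ and the bound is vacuous; if the unconstrained optimum along $\alpha_\obj$ is already feasible, then $p_{s,a}$ is read off from the corresponding stored evaluation and the hypothesis alone gives error at most $\frac{n-s-1}{n}\eps \le \frac{n-s}{n}\eps$. This leaves the genuine case where the binary search runs, producing directions $\ell$ (feasible side) and $r$ (infeasible side) with $\|\ell - r\|_1 \le \eps/n$, together with $q_\ell = \cD(s,a,\ell)$ satisfying $(q_\ell)_i \ge u^p(s)$ and $q_r = \cD(s,a,r)$ satisfying $(q_r)_i < u^p(s)$, where $p_{s,a}$ is the convex combination of $q_\ell$ and $q_r$ lying on the binding line $\{p_i = u^p(s)\}$.

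The first substantive step is to turn the hypothesis into a localization of the true pivotal point. Since $\pp(s,a) \in \pf_{s,a}$, optimality of $\pf_{s,a}(\ell)$ and $\pf_{s,a}(r)$ gives $\alpha \cdot \pp(s,a) \le \alpha \cdot \pf_{s,a}(\alpha)$ for $\alpha \in \{\ell, r\}$; combining with the hypothesis $\alpha \cdot \cD(s,a,\alpha) \ge \alpha \cdot \pf_{s,a}(\alpha) - \frac{n-s-1}{n}\eps$ yields the two shifted supporting inequalities $\ell \cdot \pp(s,a) \le \ell \cdot q_\ell + \frac{n-s-1}{n}\eps$ and $r \cdot \pp(s,a) \le r \cdot q_r + \frac{n-s-1}{n}\eps$. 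Thus $\pp(s,a)$ lies in the intersection of two half-planes whose boundaries pass within the inductive error of $q_\ell$ and $q_r$ (measured along $\ell$, resp. $r$). Because the binary search maintains that the pivotal direction is bracketed between $\ell$ and $r$, concavity of $\pf_{s,a}$ confines $\pp(s,a)$ to sit just above the chord $[q_\ell, q_r]$, and both $\pp(s,a)$ and $p_{s,a}$ lie on the same axis-parallel line $\{p_i = u^p(s)\}$.

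The geometric core is then to compare the free coordinates of these two intersection points. Starting from the shifted inequality for the infeasible direction $r$, together with $(q_r)_i \le u^p(s)$ and $\pp(s,a)_i = u^p(s)$, I would bound $\pp(s,a)_j$ from above in terms of $(q_r)_j$ up to the inductive error, and then control the remaining discrepancy between $(q_r)_j$ and $(p_{s,a})_j$ — which stems from the convex-combination weight $\lambda = \frac{u^p(s) - (q_r)_i}{(q_\ell)_i - (q_r)_i}$ and the gap between the directions $\ell$ and $r$ — using $\|\ell - r\|_1 \le \eps/n$ and the fact that all utilities lie in $[0, n]$. Adding the two contributions gives the claimed $\frac{n-s-1}{n}\eps + \frac{1}{n}\eps = \frac{n-s}{n}\eps$, and the case $i = 1$ is symmetric with the two axes swapped.

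I expect the main obstacle to be precisely this last conversion from the \emph{directional} (dot-product) guarantees of the hypothesis and the $\eps/n$ direction gap into a \emph{coordinate-wise} bound on the single free coordinate $j$, while ensuring that the convex-combination weight $\lambda$ and any implicit division by a component of $r$ do not amplify the error, so that the additional loss beyond the inductive term stays at most $\eps/n$. The delicate point is that nearly parallel supporting lines can in principle have a far-away intersection; the bracketing invariant together with concavity of $\pf_{s,a}$ is exactly what must be invoked to rule this out and pin $\pp(s,a)$ genuinely close to $p_{s,a}$, rather than merely close along the two evaluation directions.
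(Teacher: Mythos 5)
Your proposal has a genuine gap, and it sits precisely at the point you flag as ``delicate.'' First, it misdescribes the algorithm: the binary-search tests and the convex combination producing $p_{s,a}$ are not taken against $u^p(s)$ but against the \emph{relaxed} threshold $\hat{u^p}(s) = u^p(s) - \frac{n-s-1}{n}\cdot\eps$ set in line~2 of Algorithm~\ref{alg:efce}, so $p_{s,a}$ lies on the line $\{p_{\ap(s)} = \hat{u^p}(s)\}$, not on $\{p_{\ap(s)} = u^p(s)\}$. This relaxation is not cosmetic --- it is the idea that makes the lemma true. The hypothesis is only a one-sided \emph{directional} bound, so an adversarial $\cD$ may return every evaluation shifted by $(-\delta,-\delta)$ with $\delta = \frac{n-s-1}{n}\cdot\eps$, which loses exactly $\delta$ in every direction $\alpha$ with $\|\alpha\|_1 = 1$ and is therefore consistent with the hypothesis. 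Against the unrelaxed threshold, such evaluations misclassify genuinely feasible points as infeasible: if the true curve drops steeply just to the right of the constraint line, the procedure you analyze either outputs $(-n,-n)$ even though $\pp(s,a)$ exists, or slides far down the steep part of the curve, and the claimed bound on the free coordinate fails by $\Theta(n)$ rather than $O(\eps)$. In particular, your first ``easy case'' (``no evaluated point satisfies the constraint, hence $\pp(s,a) = (-n,-n)$'') is simply false with the unrelaxed test, while it is correct with $\hat{u^p}(s)$. The paper's accounting is organized around the shifted line: it first compares $\pp(s,a)$ with the binding point of the worst-case evaluated (``minimum'') curve against the \emph{shifted} constraint line --- the shift exactly absorbs the inductive error, contributing $\frac{n-s-1}{n}\cdot\eps$ --- and only then compares that binding point with $p_{s,a}$.

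Second, your ``geometric core'' does not actually resolve the nearly-parallel-directions problem you yourself identify; bracketing plus concavity alone is not enough. Turning the two directional inequalities at $\ell$ and $r$ into a bound on the coordinate $3-\ap(s)$ requires dividing by the free component of the direction, and the binary search may terminate with $\ell, r$ arbitrarily close to the axis direction of the constrained coordinate, where that component is arbitrarily small and the residual term blows up. The paper's second step is a different argument that avoids this division entirely: by concavity, both $p_{s,a}$ (which lies on the chord $[q_\ell, q_r]$) and the binding point of the evaluated curve lie inside the thin triangle bounded by that chord and the two supporting lines with normals $\ell$ and $r$; the angle of that triangle at $q_\ell$ is at most the angle $\theta$ between $\ell$ and $r$, so the two points are within $\sqrt{2}\, n \sin\theta$ of each other along the constraint line, uniformly in the direction. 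This also shows why the termination threshold matters: the algorithm stops at $\|\ell - r\|_1 < \frac{\eps}{10 n^2}$ (not $\eps/n$ as you state), which is exactly what makes this contribution at most $\eps/n$ and the total $\frac{n-s-1}{n}\cdot\eps + \frac{1}{n}\cdot\eps = \frac{n-s}{n}\cdot\eps$; with a gap of $\eps/n$ between the directions, the binary-search term alone would already be of order $\eps$.
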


\begin{figure}[t]
    \centering
    \includegraphics[width=0.48\linewidth]{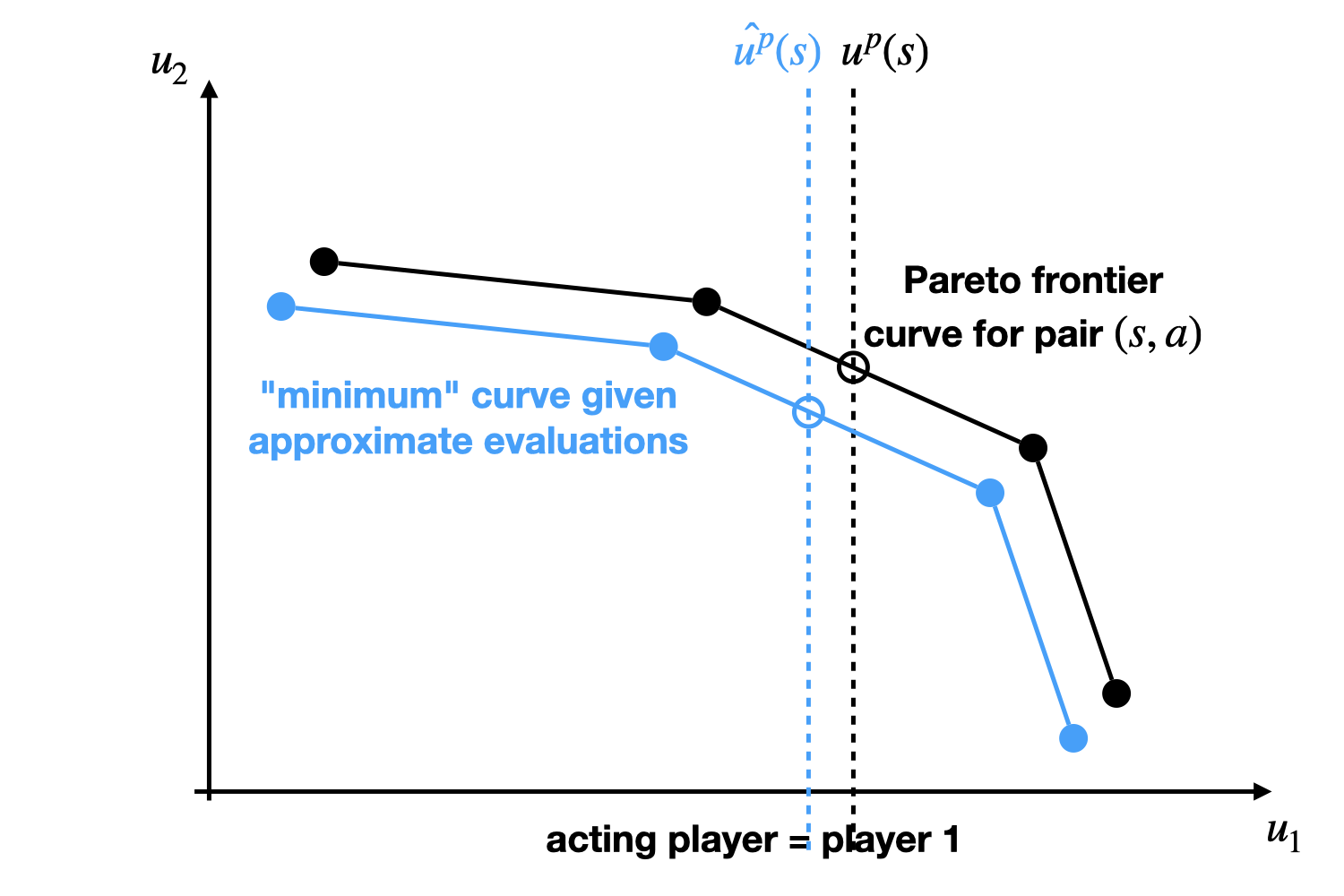}
    \includegraphics[width=0.48\linewidth]{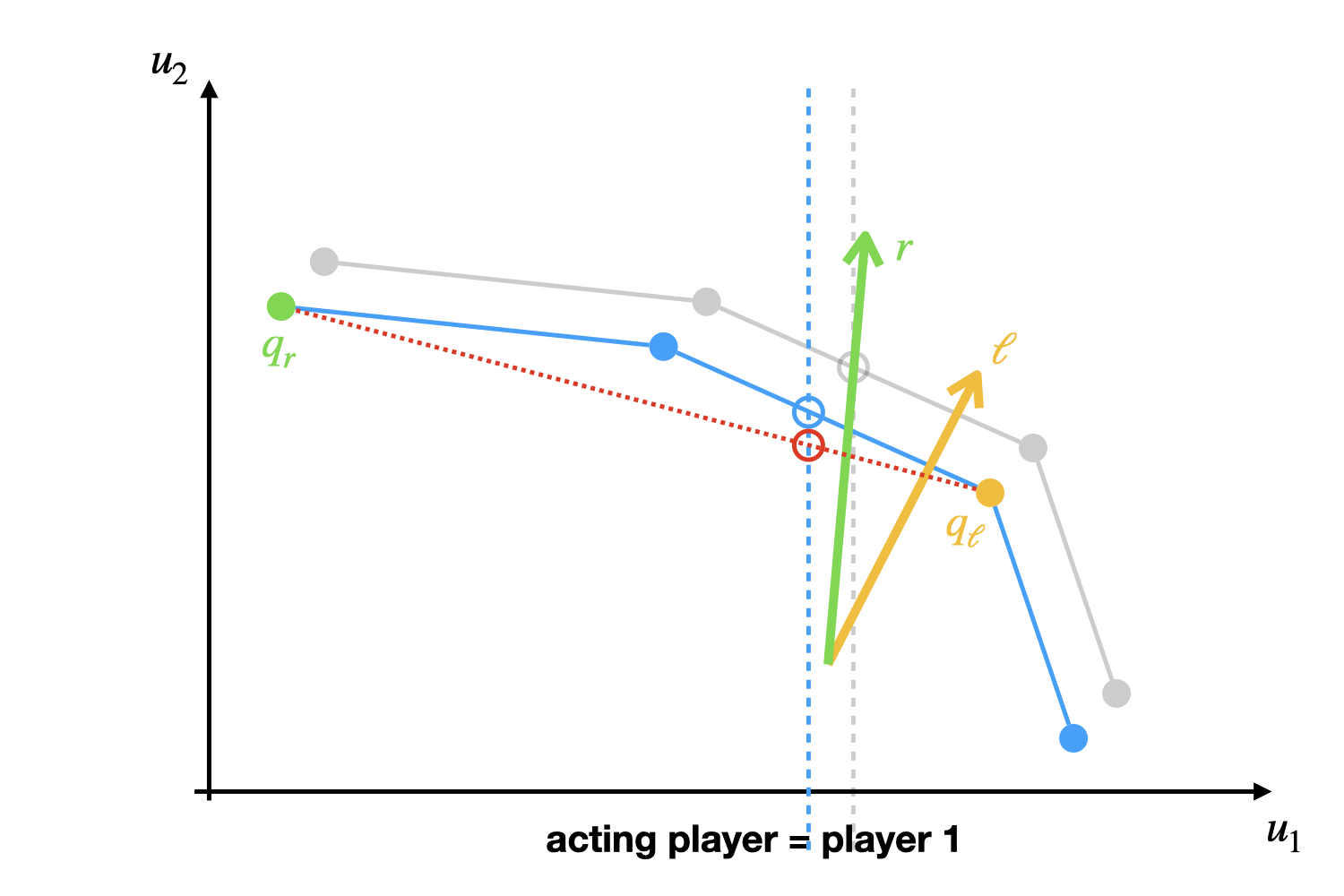}
    \caption{Illustration of the proof of Lemma~\ref{lem:approximate_binary_search}.}
    \label{fig:approximate_binary_search}
\end{figure}

We then establish the key properties of Algorithm~\ref{alg:efce} --- essentially an approximate version of Lemma~\ref{lem:key}.

\begin{lemma}
\label{lem:key_efce}
    Fix a stochastic game $(\cS, \cA, \ap, r_1, r_2, \trans)$.
    The following statements regarding Algorithms~\ref{alg:efce}~and~\ref{alg:approximate_eval} hold: For each $s \in \cS$, $a \in \cA$ and $\alpha \in \bR_+^2$ where $\|\alpha\|_1 = 1$,
    \begin{itemize}
        \item if $(s, a, \alpha) \in \cD$, then $\alpha \cdot \cD(s, a, \alpha) \ge \alpha \cdot \pf_{s, a}(\alpha) - \frac{n - s - 1}{n} \cdot \eps$;
        \item $\alpha \cdot (p_{s, a}) \ge \alpha \cdot \pp(s, a) - \frac{n - s}{n} \cdot \eps$.
    \end{itemize}
\end{lemma}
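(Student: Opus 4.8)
The plan is to prove Lemma~\ref{lem:key_efce} by induction on $s$, running from $s = n-1$ down to $s = 1$, exactly mirroring the structure of the proof of Lemma~\ref{lem:key}. The two bullet points will be proved together in the inductive step, since the second (the pivotal-point guarantee) depends on having accurate evaluations $\cD(s,a,\alpha)$, which is what the first bullet supplies. The crucial difference from the SEFCE case is that here I cannot hope for exact equality $\cD(s,a,\alpha) = \pf_{s,a}(\alpha)$; instead I must track how the additive error $\frac{n-s-1}{n}\cdot\eps$ accumulates as the recursion descends, and verify it never exceeds the claimed bound.

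\textbf{First bullet (evaluation error).} Assuming the induction hypothesis holds for all $s' > s$, I would show that whenever $(s,a,\alpha) \in \cD$, the value $\cD(s,a,\alpha)$ computed by Algorithm~\ref{alg:approximate_eval} satisfies $\alpha\cdot\cD(s,a,\alpha) \ge \alpha\cdot\pf_{s,a}(\alpha) - \frac{n-s-1}{n}\cdot\eps$. The key tool is Corollary~\ref{cor:approximate_evaluation}: if each later quantity $q_{s'}$ (the aggregated partial result at state $s'$) satisfies $\alpha\cdot p_{s'} \le \alpha\cdot q_{s'} + \eps'$ for an appropriate $\eps'$, then the reconstructed value at $(s,a)$ is accurate up to $\eps'$ as well. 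The induction hypothesis on the $q_{s',a'}$ values (evaluation error $\le \frac{n-s'-1}{n}\cdot\eps \le \frac{n-s-2}{n}\cdot\eps$) combined with the pivotal-point error from the second bullet at level $s'$ (error $\le \frac{n-s'}{n}\cdot\eps \le \frac{n-s-1}{n}\cdot\eps$) must be shown to aggregate to at most $\frac{n-s-1}{n}\cdot\eps$. The selection step $q_{s'} = \argmax_{a'} \alpha\cdot q_{s',a'}$ preserves the error bound because taking the max along direction $\alpha$ of points each within additive error of the true optimum stays within that error; taking the expectation $\bE_{s'\sim\trans(s,a)}[\cdot]$ also only averages the errors, and shifting by the immediate rewards is exact.

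\textbf{Second bullet (pivotal point via binary search).} With the first bullet established at level $s$, I then invoke Lemma~\ref{lem:approximate_binary_search}: its hypothesis is precisely $\alpha\cdot\cD(s,a,\alpha) \ge \alpha\cdot\pf_{s,a}(\alpha) - \frac{n-s-1}{n}\cdot\eps$ for later evaluations, which the first bullet provides, and its conclusion is $(p_{s,a})_{3-\ap(s)} \ge \pp(s,a)_{3-\ap(s)} - \frac{n-s}{n}\cdot\eps$. I would then convert this one-coordinate guarantee into the directional guarantee $\alpha\cdot p_{s,a} \ge \alpha\cdot\pp(s,a) - \frac{n-s}{n}\cdot\eps$ claimed in the lemma, using that the pivotal point's constrained coordinate $\ap(s)$ is pinned at $u^p(s)$ (so the points $p_{s,a}$ and $\pp(s,a)$ agree or the slack is controlled in the $\ap(s)$-direction) while the slack lives in the $3-\ap(s)$-direction. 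The cases where the binary search does not execute (the whole curve is feasible, or no feasible point exists) are handled exactly as the corresponding bullets in the proof of Lemma~\ref{lem:key}.

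\textbf{Main obstacle.} The hard part will be the bookkeeping in the first bullet: verifying that the two distinct sources of error at level $s'$ — the evaluation error inherited recursively ($\frac{n-s'-1}{n}\cdot\eps$) and the pivotal-rounding error ($\frac{n-s'}{n}\cdot\eps$) — combine under the $\max$-then-expectation-then-shift operations to yield exactly the telescoping bound $\frac{n-s-1}{n}\cdot\eps$ at level $s$, with the increment of $\eps/n$ per level accounting for the single additional constraint introduced at state $s$. I must be careful that the error does not compound multiplicatively across the expectation over successor states (it does not, since expectation is an average, not a sum), and that selecting via $p_{s',a'}$ versus $\pp(s',a')$ in the branch of Lemma~\ref{lem:evaluation_efce} introduces at most the pivotal error and not more; establishing this clean additive (rather than accumulating) behavior is the technical crux, and it is what makes the final error linear in the horizon rather than exponential.
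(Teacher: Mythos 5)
Your proposal is correct and follows essentially the same route as the paper's own proof: induction on $s$ from $n-1$ down to $1$, establishing the first bullet by propagating the inherited errors through the $\max$-over-actions, expectation-over-transitions, and reward-shift steps (using that $\frac{n-s'}{n}\cdot\eps \le \frac{n-s-1}{n}\cdot\eps$ for $s' > s$, so the error telescopes rather than accumulates), and then obtaining the second bullet from Lemma~\ref{lem:approximate_binary_search}. You even flag a detail the paper's proof leaves implicit --- converting the single-coordinate guarantee of Lemma~\ref{lem:approximate_binary_search} into the directional guarantee $\alpha \cdot p_{s,a} \ge \alpha \cdot \pp(s,a) - \frac{n-s}{n}\cdot\eps$ using $\|\alpha\|_1 = 1$ and the fact that the constrained coordinate is pinned near $u^p(s)$ --- which is a point in your favor.
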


Given Lemma~\ref{lem:key_efce}, it is not hard to prove the correctness and efficiency of Algorithm~\ref{alg:efce}.

\begin{theorem}
\label{thm:efce}
    Algorithm~\ref{alg:efce} runs in time polynomial in $n$, $m$, and $\log(1 / \eps)$ (assuming all arithmetic operations take constant time), and the output $\opt$ satisfies:
    \begin{itemize}
        \item $\opt$ is smaller than the optimal objective value of any EFCE by at most $\eps$, i.e.,
        \[
            \opt \ge \max_{\Pi\ \mathrm{is\ an\ EFCE}} \alpha_\obj \cdot (u_1^\Pi(\emptyset, \si), u_2^\Pi(\emptyset, \si)) - \eps.
        \]
        \item There exists an $\eps$-EFCE whose objective value is $\opt$.
    \end{itemize}
\end{theorem}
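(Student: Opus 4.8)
The plan is to follow the proof of Theorem~\ref{thm:sefce} almost verbatim, replacing every exact identity by its $\eps$-approximate analogue and invoking Lemma~\ref{lem:key_efce} wherever the SEFCE argument used Lemma~\ref{lem:key}. I would dispatch the efficiency claim first, as it is a pure counting argument. The only structural change from the SEFCE analysis is that the binary search in Algorithm~\ref{alg:efce} now halts once the two bracketing directions are $\eps/n$-close rather than exponentially close, so each binary search runs for $O(\log(n/\eps))$ iterations and introduces $O(\log(n/\eps))$ new directions. Since there are $O(nm)$ state-action pairs, and hence $O(nm)$ binary searches, the set $A = \{\alpha \mid (s,a,\alpha) \in \cD\}$ of stored directions has $|A| = \mathrm{poly}(n,m,\log(1/\eps))$. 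Memoization guarantees each triple $(s,a,\alpha)$ triggers at most $O(nm)$ recursive evaluations, so the total number of calls to the evaluation subroutine is $O(nm\cdot|A|) = \mathrm{poly}(n,m,\log(1/\eps))$, giving the stated running time.

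For the first correctness bullet, write $W = \max_{\Pi\ \text{is an EFCE}} \alpha_\obj \cdot (u_1^\Pi(\emptyset,\si), u_2^\Pi(\emptyset,\si))$ for the optimal EFCE objective, and normalize so that $\|\alpha_\obj\|_1 = 1$. Applying Theorem~\ref{thm:reduction_efce} with $\eps = 0$, $W$ equals the optimum of $\alpha_\obj \cdot (u_1^\Pi, u_2^\Pi)$ over $0$-feasible strategies. Any such strategy is in particular $\eps$-feasible after $\si$ and meets the exact constraint at $\si$, so (taking a deterministic first action without loss of generality) its conditional onward utility pair lies on some $\pf_{\si,a}$ with $\ap(\si)$-coordinate at least $u^p(\si)$; hence it is dominated along $\alpha_\obj$ by a pivotal point, giving $\max_a \alpha_\obj \cdot \pp(\si,a) \ge W$. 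The algorithm outputs $\opt = \max_a \alpha_\obj \cdot p_{\si,a}$, and the second bullet of Lemma~\ref{lem:key_efce} at $s = \si = 1$ gives $\alpha_\obj \cdot p_{\si,a} \ge \alpha_\obj \cdot \pp(\si,a) - \tfrac{n-1}{n}\eps$ for every $a$. Taking the maximum over $a$ yields $\opt \ge \max_a \alpha_\obj \cdot \pp(\si,a) - \tfrac{n-1}{n}\eps \ge W - \eps$. Here the careful bookkeeping of Lemma~\ref{lem:key_efce}, which makes the error at state $s$ exactly $\tfrac{n-s}{n}\eps$ rather than some constant multiple of $\eps$, is precisely what forces the final slack to be $\eps$ and not $\Theta(n)\cdot\eps$.

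For the second bullet I would exhibit the witness by decoding the output exactly as in Theorem~\ref{thm:decode}, but with the approximate quantities $p_{s,a}$ and $\cD(s,a,\alpha)$ in place of their exact versions. Algorithm~\ref{alg:decode_efce} implements the point $\max^{\ap(s)}\{p_{s,a}, \cD(s,a,\alpha)\}$ at each reached state-action pair, and an induction analogous to the one in the proof of Theorem~\ref{thm:decode} shows that the resulting randomized strategy $\Pi'$ faithfully realizes these utility pairs and attains objective exactly $\opt$. The remaining point is to verify that $\Pi'$ is $\eps$-feasible, i.e.\ that each implemented point has $\ap(s)$-coordinate at least $u^p(s) - \eps$; this follows from Lemma~\ref{lem:key_efce} together with Lemma~\ref{lem:approximate_binary_search}. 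Finally, applying Lemma~\ref{lem:maximum_punishment_efce} and Lemma~\ref{lem:utility_under_punishment_efce} to $\Pi'$ shows that $\pa(\Pi', \{1,2\})$ is a strategy under which both players are $\eps$-best responding, i.e.\ an $\eps$-EFCE, with the same onward utilities and hence the same objective value $\opt$.

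I expect the main obstacle to be the $\eps$-feasibility check for the decoded strategy in the second bullet. Lemma~\ref{lem:approximate_binary_search} bounds the approximation error only in the $3-\ap(s)$ coordinate of the approximate pivotal point, whereas feasibility is a constraint on the $\ap(s)$ coordinate; reconciling the two requires either forcing the implemented point to meet the constraint coordinate exactly by construction, as line~19 of Algorithm~\ref{alg:sefce} does in the exact SEFCE case, or arguing directly that the residual error in that coordinate stays within the allotted $\eps$ slack. Everything else — the counting for efficiency and the reduction-based optimality bound — is routine once Lemma~\ref{lem:key_efce} is available.
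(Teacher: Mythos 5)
Your proposal is correct and takes essentially the same route as the paper: efficiency and the first bullet point are proved by mirroring the proof of Theorem~\ref{thm:sefce} with Lemma~\ref{lem:key_efce} supplying the $\frac{n-s}{n}\eps$ error bookkeeping in place of Lemma~\ref{lem:key}, and the second bullet point is proved by decoding the output via Algorithm~\ref{alg:decode_efce} (Theorem~\ref{thm:decode_efce}), extended to inadmissible histories by the punishment strategies so that Lemma~\ref{lem:utility_under_punishment_efce} yields an $\eps$-EFCE with the same objective value. (Your reading $\opt = \max_a \alpha_\obj \cdot p_{\si,a}$, rather than the pseudocode's $\max_a (p_{\si,a})_1$ carried over from Algorithm~\ref{alg:sefce}, is indeed the intended one.)
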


The proof of the above theorem is similar to that of Theorem~\ref{thm:sefce}, with one exception: We prove the second bullet point by giving an algorithm that constructs an $\eps$-EFCE whose objective value is $\opt$.
The algorithm (Algorithm~\ref{alg:decode_efce} in Appendix~\ref{app:efce}) is overall quite similar to Algorithm~\ref{alg:decode}.
As such, Algorithm~\ref{alg:decode_efce} only specifies actions for admissible histories.
For inadmissible histories, both players should follow the equilibrium $\pi_i$ that maximally punishes player $i$ defined earlier, where $i$ is the player who first deviates.
The proof of the following claim is similar to that of Theorem~\ref{thm:decode}.

\begin{theorem}
\label{thm:decode_efce}
    Algorithm~\ref{alg:decode_efce} outputs a feasible strategy $\Pi$ (restricted to admissible histories), which satisfies $\alpha_\obj \cdot (u_1^\Pi(\emptyset, \si), u_2^\Pi(\emptyset, \si)) = \opt$, where $\opt$ is the approximately optimal objective value computed by Algorithm~\ref{alg:efce}.
\end{theorem}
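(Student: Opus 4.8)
The plan is to follow the proof of Theorem~\ref{thm:decode} almost verbatim, replacing the exact identities there with approximate ones whose slack is controlled by the per-level budget $\eps / n$ established in Lemmas~\ref{lem:approximate_binary_search} and~\ref{lem:key_efce}. First I would note that, like Algorithm~\ref{alg:decode}, Algorithm~\ref{alg:decode_efce} specifies a random action for each history-state pair and therefore constitutes a Bayesian description of a well-defined randomized strategy $\Pi$ on admissible histories. It then remains to verify two things: the objective identity $\alpha_\obj \cdot (u_1^\Pi(\emptyset, \si), u_2^\Pi(\emptyset, \si)) = \opt$, and $\eps$-feasibility, i.e., $v_{\ap(s)}^{\Pi \mid h + (s, a)}(h, s) \ge u^p(s) - \eps$ for every $h \in \cH^\Pi$, state $s$, and action $a$ with $h + (s, a) \in \cH^\Pi$.

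\textbf{Core correspondence.} The heart of the argument is the analogue of the displayed identity in the proof of Theorem~\ref{thm:decode}: the decoded strategy realizes, at each reached pair $(s, a)$ and along the current direction $\alpha$ maintained by the decoder, exactly the stored point
\[
    (v_1^{\Pi \mid h + (s, a)}(h, s), v_2^{\Pi \mid h + (s, a)}(h, s)) = \max^{\ap(s)}\{p_{s, a}, \cD(s, a, \alpha)\}.
\]
I would prove this by downward induction on $s$, paralleling the SEFCE decoder: the convex-combination step of Algorithm~\ref{alg:decode_efce} (the analogue of the final convex-combination step in Algorithm~\ref{alg:decode}) reproduces the convex combination of adjacent evaluation points $q_\ell, q_r$ that defines the approximate pivotal point $p_{s, a}$, the $\argmax$-over-actions steps reproduce the action aggregation of Lemma~\ref{lem:evaluation_efce}, and the immediate rewards together with the transition weighting are attached exactly as in the recursion. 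The only change from the SEFCE case is that the max is now taken in coordinate $\ap(s)$ rather than always in coordinate $2$, reflecting that feasibility constraints now exist in every state and bind on the acting player's own utility.

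\textbf{Deriving the conclusions.} Given the correspondence, both claims follow. At the initial state the decoder sets $\alpha = \alpha_\obj$ and picks the action maximizing $\alpha_\obj \cdot p_{\si, a}$, so the decoded strategy attains precisely $\max_{a} \alpha_\obj \cdot p_{\si, a} = \opt$; this is an exact equality because $\opt$ is by definition the value the algorithm stores, and the decoder realizes that stored value rather than the true curve. For feasibility, whenever $\ap(s)$ acts the decoder takes $\max^{\ap(s)}$, so the realized $\ap(s)$-coordinate is at least $(p_{s, a})_{\ap(s)}$; since Algorithm~\ref{alg:efce} constructs $p_{s, a}$ (line~15) so that its $\ap(s)$-coordinate violates $u^p(s)$ by at most $\eps$, with the binary search always retaining the feasible endpoint and the slack accumulated through Lemma~\ref{lem:key_efce} bounded by $\eps$, we obtain $v_{\ap(s)} \ge u^p(s) - \eps$.

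\textbf{Main obstacle.} The delicate point, absent from the exact SEFCE argument, is that the correspondence must be to the \emph{stored approximate} points $p_{s, a}$ and $\cD(s, a, \alpha)$, not to the true Pareto frontier curve; I must check that the decoding itself introduces no additional error, so that the feasibility slack is exactly the $\eps$ permitted by Lemma~\ref{lem:key_efce} and does not compound across the $n$ levels of recursion. Concretely, this requires confirming that the convex-combination weights used by the decoder reproduce the stored $p_{s, a}$ exactly, and that the decoder's choice of max-coordinate $\ap(s)$ is the same coordinate in which the feasibility constraint binds. Once these are pinned down, the remainder is a routine, if repetitive, replay of the SEFCE decoding proof.
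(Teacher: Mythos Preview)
Your proposal is correct and takes essentially the same approach as the paper, which itself provides no detailed proof and only states that the argument is similar to that of Theorem~\ref{thm:decode}. Your identification of the core correspondence to the \emph{stored} approximate points $p_{s,a}$ and $\cD(s,a,\alpha)$ (rather than to the true curve), together with the observation that the decoder reproduces those stored values exactly so that no error compounds beyond the $\eps$ already budgeted in Algorithm~\ref{alg:efce}, is precisely the content that fills in that similarity.
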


\section*{Acknowledgments}
    Zhang and Conitzer thank the Cooperative AI Foundation, Polaris Ventures (formerly the Center for Emerging Risk Research) and Jaan Tallinn's donor-advised fund at Founders Pledge for financial support.
    Cheng is supported in part by NSF Award CCF-2307106.

\bibliographystyle{plainnat}
\bibliography{ref}

\appendix
\section{Omitted Proofs in Section~\ref{sec:sefce}}
\label{app:sefce}

\begin{proof}[Proof of Lemma~\ref{lem:maximum_punishment}]
    For brevity let $\Pi' = \pa(\Pi, \{2\})$.
    We first prove $\Pi$ and $\Pi'$ induce the same utility for the leader.
    Observe that $\Pi$ and $\Pi'$ share the same set of admissible histories, i.e., $\cH^\Pi = \cH^{\Pi'}$.
    Moreover, for any $h \in \cH^\Pi$, $\pa(\Pi \mid h, \{2\}) = \Pi' \mid h$.
    Given the above, we have a stronger claim: For each $i \in \{1, 2\}$, $h$ and $h' \in \cH^\Pi$, and $s \in \cS$,
    \[
        v_i^{\Pi \mid h}(h', s) = v_i^{\Pi' \mid h}(h', s).
    \]
    This can be verified by expanding both sides using the definition of $v_i$, coupling $\pi \sim \Pi \mid h$ with $\pi' = \pa(\pi, \{2\}) \sim \Pi' \mid h$, and checking $\pi$ and $\pi'$ always induce the same play given $h'$ and $s$ since $h'$ is admissible.
    Setting $h = h' = \emptyset$ and $s = \si$, this immediately implies that both players have the same utilities under $\Pi$ and $\Pi'$.

    We now prove that the follower is best responding under $\Pi'$.
    Consider any admissible history $h \in \cH^\Pi$, state $s$ where $\ap(s) = 2$, action $a$ where $h + (s, a) \in \cH^\Pi$, and deterministic strategy $\pi''$ where $\pi''(h, s) \ne a$ (we reserve $\pi'$ for later use).
    Since player $2$ is best responding under $\Pi$, by definition we have
    \[
        v_2^{\Pi \mid (h + (s, a))}(h, s) \ge \bE_{\pi \sim \Pi \mid (h + (s, a))}\left[u_2^{(2: \pi'', 1: \pi)}(h, s)\right]. 
    \]
    We already know that
    \[
        v_2^{\Pi \mid (h + (s, a))}(h, s) = v_2^{\Pi' \mid (h + (s, a))}(h, s).
    \]
    So we only need to prove that
    \[
        \bE_{\pi \sim \Pi \mid (h + (s, a))}\left[u_2^{(2: \pi'', 1: \pi)}(h, s)\right] \ge \bE_{\pi' \sim \Pi' \mid (h + (s, a))}\left[u_2^{(2: \pi'', 1: \pi')}(h, s)\right].
    \]
    Again we couple $\pi \sim \Pi \mid (h + (s, a))$ with $\pi' = \pa(\pi) \sim \Pi' \mid (h + (s, a))$, so we only need to compare $u_2^{(2: \pi'', 1: \pi)}(h, s)$ and $u_2^{(2: \pi'', : \pi')}(h, s)$.
    Both quantities involve summing over the rewards in up to $n$ steps and taking expectations over random transitions.
    To this end, observe that the first steps in both quantities are always the same (player $2$ playing $\pi''(h, s)$), so we further couple them.
    Now we only need to prove in the subgame induced by $h + (s, \pi''(h, s))$ and $s' \sim \trans(s, \pi''(h, s))$, player $2$'s utility when the other player follows $\pi$ is at least player $2$'s utility when the other player follows $\pi'$.
    This follows almost directly from the definition of $\pi'$: Restricted to this subgame and player $1$, $\pi'$ behaves identically as $\pi_2$, which is a subgame-perfect equilibrium when the other player tries to minimize player $2$'s utility.
    In other words, fixing player $2$'s strategy (which is $\pi''$), player $2$'s utility in this subgame against $\pi$ is no smaller than player $2$'s utility against $\pi'$.
    Now taking the expectations over $\pi$, $\pi'$, and $s'$ gives the desired inequality.
\end{proof}

\begin{proof}[Proof of Lemma~\ref{lem:utility_under_punishment}]
    Let $\Pi' = \ap(\Pi, \{2\})$.
    We only need to verify that if $\Pi$ satisfies the condition in the lemma, then for any admissible history $h \in \cH^{\Pi'}$, state $s$ where $\ap(s) = 2$, action $a$ where $h + (s, a) \in \cH^{\Pi'}$, and deterministic strategy $\pi''$ where $\pi''(h, s) \ne a$,
    \[
        v_2^{\Pi \mid (h + (s, a))}(h, s) \ge \bE_{\pi' \sim \Pi' \mid (h + (s, a))}\left[u_2^{(2: \pi'', 1: \pi')}(h, s)\right].
    \]
    In particular, we only need to show that the right hand side of the above inequality is upper bounded by
    \[
        \max_{a' \in \cA} \left(r_2(s, a') + \bE_{s' \sim \trans(s, a')}[u_2^{\pi_2}(h + (s, a'), s')]\right).
    \]
    Again, this follows almost directly from the definition of $\pi'$: Restricted to this subgame and player $1$, $\pi'$ behaves identically as $\pi_2$, which is a subgame-perfect equilibrium when the other player tries to minimize player $2$'s utility.
    In other words, player $2$'s utility in this subgame against $\pi'$ is at most
    \[
        r_2(s, a'') + \bE_{s'' \sim \trans(s, a'')}[u_2^{\pi_2}(h + (s, a''), s'')],
    \]
    where $a'' = \pi''(h, s)$.
    This is clearly upper bounded by 
    \[
        \max_{a' \in \cA} \left(r_2(s, a') + \bE_{s' \sim \trans(s, a')}[v_2^{\pi_2}(h + (s, a'), s')]\right),
    \]
    since the latter is obtained by taking the maximum over $a'$.
    This finishes the proof.
\end{proof}

\begin{proof}[Proof of Lemma~\ref{lem:convex}]
    We only need to prove $\fr_{s, a}$ is convex.
    Consider any two points $(x_1, y_1)$ and $(x_2, y_2)$ in the feasible region $\fr_{s, a}$, and feasible-after-$s$ strategies $\Pi_1$ and $\Pi_2$ that induced these points.
    Without loss of generality, suppose $\Pi_1 = \Pi_1 \mid (s, a)$ (otherwise let $\Pi_1 \gets \Pi_1 \mid (s, a)$) and $\Pi_2 = \Pi_2 \mid (s, a)$.
    For any $\alpha \in (0, 1)$, the strategy $\Pi = \alpha \cdot \Pi_1 + (1 - \alpha) \cdot \Pi_2$ obtained by running $\Pi_1$ with probability $\alpha$ and $\Pi_2$ with probability $1 - \alpha$ gives utilities $\alpha \cdot (x_1, y_1) + (1 - \alpha) \cdot (x_2, y_2)$ in state $s$ after playing action $a$.
    We only need to argue that $\Pi = \alpha \cdot \Pi_1 + (1 - \alpha) \cdot \Pi_2$ is feasible after $s$.

    Consider any $h \in \cH^\Pi = \cH^{\Pi_1} \cup \cH^{\Pi_2}$, $s' > s$ where $\ap(s') = 2$, and $a' \in \cA$ such that $h + (s', a') \in \cH^\Pi = \cH^{\Pi_1} \cup \cH^{\Pi_2}$.
    We only need to show
    \[
        v_2^{\Pi \mid (h + (s', a'))}(h, s') \ge u^p(s').
    \]
    Observe that there is some $\beta \in [0, 1]$ such that
    \[
        \Pi \mid (h + (s', a')) = \beta \cdot (\Pi_1 \mid (h + (s', a'))) + (1 - \beta) \cdot (\Pi_2 \mid (h + (s', a'))),
    \]
    where $\beta$ is not necessarily equal to $\alpha$ due to conditioning.
    This means
    \[
        v_2^{\Pi \mid (h + (s', a'))}(h, s') = \beta \cdot v_2^{\Pi_1 \mid (h + (s', a'))}(h, s') + (1 - \beta) \cdot v_2^{\Pi_2 \mid (h + (s', a'))}(h, s').
    \]
    Since $\Pi_1$ and $\Pi_2$ are both feasible after $s$, we have
    \[
        \beta \cdot v_2^{\Pi_1 \mid (h + (s', a'))}(h, s') + (1 - \beta) \cdot v_2^{\Pi_1 \mid (h + (s', a'))}(h, s') \ge \beta \cdot u^p(s') + (1 - \beta) \cdot u^p(s') = u^p(s').
    \]
    This finishes the proof.
\end{proof}

\begin{proof}[Proof of Lemma~\ref{lem:evaluation}]
    We first prove $\bE_{s' \sim \trans(s, a)}\left[p_{s'}\right] \in \pf_{s, a}$, and
    \[
        \alpha \cdot \pf_{s, a}(\alpha) \ge \alpha \cdot \left((r_1(s, a), r_2(s, a)) + \bE_{s' \sim \trans(s, a)}\left[p_{s'}\right]\right).
    \]
    Let $\Pi_{s'}$ and $a_{s'}$ be the strategy and action corresponding to $p_{s'}$ for each $s' > s$, respectively.
    Clearly $\Pi_{s'}$ is feasible after $s'$, and by construction the constraint in $s'$ (if there is one) is also satisfied by $\Pi_{s'}$.
    Moreover, for each $i \in \{1, 2\}$,
    \[
        u_i^{\Pi_{s'} \mid (s', a_{s'})}(\emptyset, s') = (p_{s'})_i.
    \]
    Now we only need to construct a strategy $\Pi$ satisfying: (1) $\Pi$ is feasible after $s$, and (2) for each $i \in \{1, 2\}$,
    \[
        u_i^{\Pi \mid (s, a)}(\emptyset, s) = \bE_{s' \sim \trans(s, a)}\left[(p_{s'})_i\right].
    \]
    This is achieved by the following construction: Draw $\pi_{s'} \sim \Pi_{s'} \mid (s', a_{s'})$ for each $s' > s$.
    Let $\pi(\emptyset, s) = a$.
    For each $(h, s'')$ where $h = (s_1, a_1, \dots, s_t, a_t)$, let
    \[
        \pi(h, s) = \pi_{s_1}((s_2, a_2, \dots, s_t, a_t), s'').
    \]
    Let $\Pi$ be the distribution of $\pi$.
    One can check $\Pi$ satisfies the two desired conditions.

    Now we prove
    \[
        \alpha \cdot \pf_{s, a}(\alpha) \le \alpha \cdot \left((r_1(s, a), r_2(s, a)) + \bE_{s' \sim \trans(s, a)}\left[p_{s'}\right]\right).
    \]
    Suppose otherwise.
    Let $\Pi$ be a strategy that is feasible after $s$ that implements $\pf_{s, a}(\alpha)$.
    There must be some $s' > s$ such that
    \[
        \alpha \cdot (v_1^{\Pi \mid (s, a)}(\emptyset, s'), v_2^{\Pi \mid (s, a)}(\emptyset, s')) > \alpha \cdot p_{s'}.
    \]
    This means there exists a strategy $\Pi_{s'} = \Pi \mid (s, a)$ such that
    \[
        \alpha \cdot \bE_{\pi_{s'} \sim \Pi_{s'}}\left[(v_1^{\Pi_{s'} \mid (s', \pi_{s'}(\emptyset, s'))}(\emptyset, s'), v_2^{\Pi_{s'} \mid (s', \pi_{s'}(\emptyset, s'))}(\emptyset, s'))\right] > \alpha \cdot p_{s'},
    \]
    which means there exists some $a'$ such that
    \[
        \alpha \cdot (v_1^{\Pi_{s'} \mid (s', a')}(\emptyset, s'), v_2^{\Pi_{s'} \mid (s', a')}(\emptyset, s')) > \alpha \cdot p_{s', a'}.
    \]
    This contradicts the definition of $p_{s', a'}$, since $\Pi_{s'}$ is feasible after $s$.
\end{proof}

\begin{proof}[Proof of Lemma~\ref{lem:resolution}]
    We prove the claim inductively.
    Consider $s = n - 1$ first.
    For each $a \in \cA$, $\pf_{s, a}$ consists of a single point $(r_1(s, a), r_2(s, a)$, so the claim holds trivially.

    Now fix some $s \in [n - 1]$ and suppose the claim holds for all $s' > s$.
    Fix an action $a \in \cA$ and consider the first bullet point.
    By Lemma~\ref{lem:evaluation}, for any $\alpha \in \bR_+^2$,
    \[
        \pf_{s, a}(\alpha) = (r_1(s, a) + r_2(s, a)) + \bE_{s' \sim \trans(s, a)}[p_{s'}] = (r_1(s, a) + r_2(s, a)) + \sum_{s' > s} \trans(s, a, s') \cdot p_{s'}.
    \]
    Here, for each $s' > s$, $(p_{s'})_2$ is a multiple of $2^{-(n - s - 1)L}$, and $(p_{s'})_1$ is a multiple of $C_{s + 1}$ because of the induction hypothesis.
    Since $r_1(s, a)$, $r_2(s, a)$ and $\trans(s, a, s')$ are multiples of $2^{-L}$, $(\pf_{s, a}(\alpha))_2$ must be a multiple of $2^{-(n - s)L}$, and $(\pf_{s, a}(\alpha))_1$ must be a multiple of $1 / (2^L C_{s + 1})$.

    Now if $\ap(s) = 2$, we need to further consider the second bullet point.
    When there is no point on $\pf_{s, a}$ whose $y$-axis is precisely $u^p(s)$, we know $\pp(s, a)$ is either $(-n, -n)$ or some turning point on $\pf_{s, a}$.
    In both cases, $(\pp(s, a))_2$ is a multiple of $2^{-(n - s)L}$, and $(\pp(s, a))_1$ is a multiple of $1 / (2^L C_{s + 1})$.
    Alternatively, when there is a point on $\pf_{s, a}$ whose $y$-axis is precisely $u^p(s)$, this point must be $\pp(s, a)$.
    Moreover, there exist two turning points $p_1$ and $p_2$ on $\pf_{s, a}$ such that $\pp(s, a)$ is the unique convex combination of $p_1$ and $p_2$ whose $y$-axis is $u^p(s)$.
    That is,
    \[
        \pp(s, a) = \left(\frac{(p_1)_2 - u^p(s)}{(p_1)_2 - (p_2)_2} \cdot (p_2)_1 + \frac{u^p(s) - (p_2)_2}{(p_1)_2 - (p_2)_2} \cdot (p_1)_1, u^p(s)\right).
    \]
    Here, $u^p(s)$ is a multiple of $2^{-(n - s)L}$ (we will prove this later), so $(\pp(s, a))_2)$ is a multiple of $2^{-(n - s)L}$.
    $(p_1)_1$ and $(p_2)_1$ are multiples of $1 / (2^L C_{s + 1})$.
    As for the coefficients of $(p_1)_1$ and $(p_2)_1$, observe that $(p_1)_2$ and $(p_2)_2$ are multiples of $2^{-(n - s)L}$, and they are between $0$ and $n - s$.
    So there must be some integer $k \le 2^{(n - s)L} \cdot (n - s)$ such that both coefficients are multiples of $1 / k$.
    As a result, $(\pp(s, a))_1$ is a multiple of $1 / (2^L C_{s + 1} k)$, and we can let $C_s = 2^L C_{s + 1} k$, which satisfies
    \[
        C_s \le 2^L \cdot 2^{(n - s - 1)(n + 1)L} \cdot 2^{(n - s)L} \cdot (n - s) \le 2^{(n - s - 1)(n + 1)L} \cdot 2^{(n + 1)L} \le 2^{(n - s)(n + 1)L}.
    \]

    Finally we quickly argue that when $\ap(s) = 2$, $u^p(s)$ is a multiple of $2^{-(n - s)L}$.
    Recall that $u^p(s)$ is player $2$'s maximum onward utility in the subgame induced by $s$ when the other player tries to minimize player $2$'s utility.
    Without loss of generality, the equilibrium strategy is deterministic and history-independent (such a strategy can be computed by backward induction, for example).
    Again we can bound player $2$'s onward utility inductively.
    In state $n - 1$, player $2$'s onward utility must be $r_2(n - 1, a)$ for some action $a$, which means it is a multiple of $2^{-L}$.
    In state $n - 2$, player $2$'s onward utility can be written as the sum of $r_2(n - 2, a)$ for some action $a$, and the product of $\trans(n - 2, a, n - 1)$ and player $2$'s utility in state $n - 1$.
    So this utility must be a multiple of $2^{-2L}$.
    Repeating this argument for each $s$, we can show that player $2$'s utility in each state $s$ is a multiple of $2^{(n - s)L}$.
    This concludes the proof.
\end{proof}

\begin{proof}[Proof of Lemma~\ref{lem:binary_search}]
    Without loss of generality suppose $\ell_1 \le \frac12$ (otherwise we can flip the two axes and apply the same argument).
    Since $\|\ell - r\|_1 < 1 / (3n \cdot 2^{2n^2 L})$, we also have $r_1 \le \frac23$.
    Suppose otherwise, i.e., there is another point $q \in \pf_{s, a}$ between $q_\ell$ and $q_r$, such that $q$ is strictly above the line defined by $q_\ell$ and $q_r$, i.e.,
    \[
        \frac{q_2 - (q_\ell)_2}{q_1 - (q_\ell)_1} > \frac{(q_r)_2 - q_2}{(q_r)_1 - q_1}.
    \]
    Recall that $\max\{|q_\ell|, |q_r|, |q|\} \le n$.
    Moreover, by Corollary~\ref{cor:resolution}, all quantities in the above inequality are multiples of $1/C$, where $C \le 2^{n^2 L}$.
    Observe that
    \[
        \frac{q_2 - (q_\ell)_2}{q_1 - (q_\ell)_1} - \frac{(q_r)_2 - q_2}{(q_r)_1 - q_1} = \frac{C^2 ((q_2 - (q_\ell)_2)((q_r)_1 - q_1) - ((q_r)_2 - q_2)(q_1 - (q_\ell)_1))}{C^2 (q_1 - (q_\ell)_1)((q_r)_1 - q_1)}.
    \]
    Here, both the numerator and the denominator are integers, and the denominator is no larger than $C^2 \cdot n \le n \cdot 2^{2n^2 L}$.
    Since the fraction is strictly positive, we must have
    \[
        \frac{q_2 - (q_\ell)_2}{q_1 - (q_\ell)_1} - \frac{(q_r)_2 - q_2}{(q_r)_1 - q_1} \ge \frac{1}{n \cdot 2^{2n^2 L}}.
    \]

    On the other hand, since $q_\ell = \pf_{s, a}(\ell)$ and $q_r = \pf_{s, a}(r)$, we have
    \[
        -\frac{\ell_1}{\ell_2} \ge \frac{q_2 - (q_\ell)_2}{q_1 - (q_\ell)_1} \quad \text{and} \quad -\frac{r_1}{r_2} \le \frac{(q_r)_2 - q_2}{(q_r)_1 - q_1}.
    \]
    This implies
    \[
        \frac{r_1}{r_2} - \frac{\ell_1}{\ell_2} \ge \frac{1}{n \cdot 2^{2n^2 L}}.
    \]
    So we have
    \begin{align*}
        \|\ell - r\|_1 & = 2(r_1 - \ell_1) \ge 2 r_2 \cdot \frac{r_1 - \ell_1}{r_2} \ge 2 r_2 \cdot \frac{r_1}{r_2} - 2 \ell_2 \cdot \frac{\ell_1}{\ell_2} \\
        & = 2 \ell_2 \cdot \frac{r_1}{r_2} + 2 (r_2 - \ell_2) \cdot \frac{r_1}{r_2} - 2 \ell_2 \cdot \frac{\ell_1}{\ell_2} \\
        & \ge \frac{2 \ell_2}{n \cdot 2^{2n^2 L}} - \|\ell - r\|_1 \cdot \frac{r_1}{r_2}.
    \end{align*}
    Recall that without loss of generality, $\ell_1 \le \frac12$ (so $\ell_2 \ge \frac12$) and $r_1 \le \frac23$ (so $r_2 \ge \frac13$).
    Plugging these in and rearranging terms, the above inequality implies 
    \[
        3 \|\ell - r\|_1 \ge \|\ell - r\|_1 + \frac{r_1}{r_2} \cdot \|\ell - r\|_1 \ge \frac{2 \ell_2}{n \cdot 2^{2n^2 L}} \ge \frac{1}{n \cdot 2^{2n^2 L}} \implies \|\ell - r\|_1 \ge \frac{1}{3n \cdot 2^{2n^2 L}},
    \]
    a contradiction.
\end{proof}

\section{Omitted Algorithms and Proofs in Section~\ref{sec:efce}}
\label{app:efce}

\begin{proof}[Proof of Lemma~\ref{lem:approximate_binary_search}]
    Without loss of generality suppose $\ap(s) = 1$.
    We show this in two steps.
    First imagine the ``minimum'' curve possible given approximate evaluations satisfying the condition stated in the lemma.
    This is the curve that the binary search actually operates on in the worst case.
    As illustrated in the left subfigure of Figure~\ref{fig:approximate_binary_search}, this minimum curve is the blue one, which is lower than the actual (black) curve at most by $\frac{n - s - 1}{n} \cdot \eps$ in every direction.
    Also recall that the blue dashed line is obtained by shifting the black dashed line to the left by $\frac{n - s - 1}{n} \cdot \eps$.
    These facts imply that the $x$-coordinate of the blue hollow point is smaller than that of the pivotal point (the black hollow point) by at most $\frac{n - s - 1}{n} \cdot \eps$.

    Now consider how well the binary search approximates the blue hollow point.
    Suppose $\ell$, $r$, $q_\ell$ and $q_r$ are as illustrated in the right subfigure of Figure~\ref{fig:approximate_binary_search}.
    We need to bound the distance between the blue hollow point and the red one.
    Recall that $\|\ell - r\|_1 < \frac{\eps}{10 n^2}$, which means the angle $\theta$ between $\ell$ and $r$ is smaller than $\frac{\eps}{2 n^2}$.
    Moreover, observe that $\theta$ upper bounds the sum of the two acute angles in the triangle containing the red segment, so the angle at $q_\ell$ is at most $\theta \le \frac{\eps}{2 n^2}$.
    This implies that the distance between the two points we care about is at most the length of the segment to the left of $q_\ell$, times $\sin \theta$.
    The length of the segment is at most $\sqrt{2} \cdot n$, and $\sin \theta \le \theta \le \frac{\eps}{2 n^2}$, so the distance is at most $\eps / n$.
    Putting the two parts together, we conclude that $(p_{s, a})_{3 - \ap(s)} \ge \pp(s, a)_{3 - \ap(s)} - \frac{n - s}{n}(s)$.
\end{proof}

\begin{proof}[Proof of Lemma~\ref{lem:key_efce}]
    Apply induction on $s = n - 1, \dots, 1$.
    When $s = n - 1$, the first bullet point holds because $\eval(s, a, \alpha)$ is alwaus exact.
    Lemma~\ref{lem:approximate_binary_search} then implies the second bullet point.
    Now fix some $s$ and suppose the two bullet points hold for all $s' > s$.
    Consider lines 3-12 in Algorithm~\ref{alg:approximate_eval}, where $\cD(s, a, \alpha)$ is recursively computed.
    Let $q^*_{s', a'} = \argmax_{q \in \pf_{s', a'}: q_{\ap(s')} \ge u^p(s')} \alpha \cdot q$, and $q^*_{s'} = \argmax_{q \in \{q^*_{s', a'}\}_{a'}} \alpha \cdot q$.
    By the induction hypothesis, we have
    \[
        \alpha \cdot q_{s', a'} \ge \alpha \cdot q^*_{s', a'} - \frac{n - s'}{n} \cdot \eps \ge \alpha \cdot q^*_{s', a'} - \frac{n - s - 1}{n} \cdot \eps.
    \]
    As a result, we have $\alpha \cdot q_{s'} \ge \alpha \cdot q^*_{s'} - \frac{n - s - 1}{n} \cdot \eps$, and therefore $\alpha \cdot \cD(s, a, \alpha) \ge \alpha \cdot \pf_{s, a}(\alpha) - \frac{n - s - 1}{n} \cdot \eps$.
    Lemma~\ref{lem:approximate_binary_search} then implies the second bullet point.
\end{proof}

\begin{algorithm}[!ht]
\KwIn{a turn-taking stochastic game $(\cS = [n], \cA, \ap, r_1, r_2, \trans)$, an objective direction $\alpha_\obj$ where $\|\alpha_\obj\|_1 \le 1$, a desired accuracy $\eps$.}
\KwOut{an approximately optimal objective value under EFCE, together with an implicit representation of an $\eps$-EFCE achieving the approximate objective value in the input game.}
    create a data structure $\cD$ that stores the results of all evaluations (used by $\eval$)\;
    for each state $s \in \cS$, let $\hat{u^p}(s) \gets u^p(s) - \frac{n - s - 1}{n} \cdot \eps$\;
    \For{each state $s = n - 1, n - 2, \dots, 1$}{
        \For{each action $a \in \cA$}{
            let $(\ell, r) \gets ((1, 0), \alpha_\obj)$ if $\ap(s) = 1$, and $(\ell, r) \gets ((0, 1), \alpha_\obj)$ if $\ap(s) = 2$\;
            let $q_\ell \gets \eval(s, a, \ell)$, $q_r \gets \eval(s, a, r)$\;
            if $(q_\ell)_{\ap(s)} < \hat{u^p}(s)$, let $p_{s, a} \gets (-n, -n)$\;
            if $(q_r)_{\ap(s)} \ge \hat{u^p}(s)$, let $p_{s, a} \gets q_r$\;
            \If{$(q_\ell)_{\ap(s)} \ge \hat{u^p}(s)$ and $(q_r)_{\ap(s)} < \hat{u^p}(s)$}{
                \While{$\|\ell - r\|_1 \ge \frac{\eps}{10n^2}$}{
                    let $q \gets \eval(s, a, (\ell + r) / 2)$ (see Algorithm~\ref{alg:approximate_eval})\;
                    let $\ell \gets (\ell + r) / 2$ if $q_{\ap(s)} \ge \hat{u^p}(s)$, and $r \gets (\ell + r) / 2$ otherwise\;
                }
                let $q_\ell \gets \eval(s, a, \ell)$, $q_r \gets \eval(s, a, r)$, $\ell_{s, a} \gets \ell$, $r_{s, a} \gets r$\;
                let $p_{s, a} \gets \frac{(q_\ell)_{\ap(s)} - \hat{u^p}(s)}{(q_\ell)_{\ap(s)} - (q_r)_{\ap(s)}} \cdot q_r + \frac{\hat{u^p}(s) - (q_r)_{\ap(s)}}{(q_\ell)_{\ap(s)} - (q_r)_{\ap(s)}} \cdot q_\ell$\;
            }
        }
    }
    let $\opt \gets \max_{a \in \cA}\ (p_{\si, a})_1$\;
    \Return $\opt$, $\{p_{s, a}\}_{s, a}$ $\{\ell_{s, a}\}_{s, a}$, $\{r_{s, a}\}_{s, a}$, and $\cD$\;
\caption{An algorithm for computing an approximately optimal $\eps$-EFCE in turn-taking stochastic games, in time $\mathrm{poly}(n, m, \log(1 / \eps))$.}
\label{alg:efce}
\end{algorithm}

\begin{algorithm}[!ht]
\KwIn{a state $s$, an action $a$, a direction of evaluation $\alpha$, all variables in Algorthm~\ref{alg:efce}.}
\KwOut{an approximation of $\pf_{s, a}(\alpha)$.}
    if $s = \st = n$ then \Return $(0, 0)$\;
    \If{$(s, a, \alpha) \notin \cD$ (i.e., if $\cD(s, a, \alpha)$ does not exist)}{
        \For{$s' = s + 1, \dots, n$}{
            \For{$a' \in \cA$}{
                let $q_{s', a'} \gets \eval(s', a', \alpha)$\;
                \If{$(q_{s', a'})_{\ap(s')} < (p_{s', a'})_{\ap(s')}$}{
                    let $q_{s', a'} \gets p_{s', a'}$\;
                }
            }
            let $q_{s'} \gets \argmax_{q \in \{q_{s', a'}\}_{a' \in \cA}} \alpha \cdot q$\;
        }
        let $\cD(s, a, \alpha) \gets (r_1(s, a), r_2(s, a)) + \bE_{s' \sim \trans(s, a)}[q_{s'}]$\;
    }
    \Return $\cD(s, a, \alpha)$\;
\caption{$\eval$: A subroutine of Algorithm~\ref{alg:efce} that performs approximate recursive evaluations as needed.}
\label{alg:approximate_eval}
\end{algorithm}

\begin{algorithm}[!ht]
\KwIn{A turn-taking stochastic game $(\cS, \cA, \ap, r_1, r_2, \trans)$, an objective direction $\alpha_\obj$, the output of Algorithm~\ref{alg:efce}, a history $h = (s_1, a_1, \dots, s_t, a_t)$, and a state $s$.}
\KwOut{$\pi(h, s)$, where $\pi \sim \Pi \mid h$, and $\Pi$ is the strategy encoded in the output of Algorithm~\ref{alg:efce}; $\error$ if $h$ is not an admissible history under $\Pi$.}
    let $\alpha \gets \alpha_\obj$, $a \gets \argmax_{a' \in \cA} \alpha \cdot p_{s, a'}$, $q \gets p_{s, a}$\;
    if $|h| = 0$ \Return $a$\;
    if $a_1 \ne a$, \Return $\error$\;
    \For{$i = 1, 2, \dots, t - 1$}{
        \If{$q = p_{s_i, a_i}$}{
            let $\ell \gets \ell_{s_i, a_i}$, $r \gets r_{s_i, a_i}$, $a_\ell \gets \argmax_{a' \in \cA} \ell \cdot \max^{\ap(s_{i + 1})}\{p_{s_{i + 1}, a'}, \cD(s_{i + 1}, a', \ell)\}$, $a_r \gets \argmax_{a' \in \cA} r \cdot \max^{\ap(s_{i + 1})}\{p_{s_{i + 1}, a'}, \cD(s_{i + 1}, a', \ell)\}$\;
            \tcc{for two points $q_1$ and $q_2$, $\max^k\{q_1, q_2\}$ denotes the point with the larger $k$-th coordinate between the two}
            let $\alpha \gets \ell$ if $a_{i + 1} = a_\ell$\;
            let $\alpha \gets r$ if $a_{i + 1} = a_r$\;
            if $a_{i + 1} \notin \{a_\ell, a_r\}$, \Return $\error$\;
        }
        \Else{
            let $a \gets \argmax_{a' \in \cA} \alpha \cdot \max^{\ap(s_{i + 1})}\{p_{s_{i + 1}, a'}, \cD(s_{i + 1}, a', \alpha)\}$\;
            if $a_{i + 1} \ne a$ \Return $\error$\;
        }
        let $q \gets \max^{\ap(s_{i + 1})}\{p_{s_{i + 1}, a_{i + 1}}, \cD(s_{i + 1}, a_{i + 1}, \alpha)\}$\;
    }
    \If{$q = p_{s_t, a_t}$}{
        let $\ell \gets \ell_{s_i, a_i}$, $a_\ell \gets \argmax_{a' \in \cA} \ell \cdot \max^{\ap(s)}\{p_{s, a'}, \cD(s, a', \ell)\}$, $q_\ell \gets \max^{\ap(s)}\{p_{s, a_\ell}, \cD(s, a_\ell, \ell)\}$;
        let $r \gets r_{s_i, a_i}$, $a_r \gets \argmax_{a' \in \cA} r \cdot \max^{\ap(s)}\{p_{s, a'}, \cD(s, a', r)\}$, $q_r \gets \max^{\ap(s)}\{p_{s, a_r}, \cD(s, a_r, r)\}$\;
        let $a \gets a_\ell$ with probability $\frac{q_r - q}{q_r - q_\ell}$, $a \gets a_r$ with probability $\frac{q - q_\ell}{q_r - q_\ell}$\;
    }
    \Else{
        let $a \gets \argmax_{a' \in \cA} \alpha \cdot \max^{\ap(s)}\{p_{s, a'}, \cD(s, a', \alpha)\}$\;
    }
    \Return $a$\;
\caption{A procedure that decodes the output of Algorithm~\ref{alg:efce}.}
\label{alg:decode_efce}
\end{algorithm}

\end{document}